\newcommand{\LaTeXtwoe}{\LaTeX\kern.15em 2${}_{\textstyle\varepsilon}$}
\newcommand{\AmS}{${\protect\the\textfont2 A}\kern-.1667em\lower
         .5ex\hbox{\protect\the\textfont2 M}\kern
         -.125em{\protect\the\textfont2 S}$}
\tikzstyle{twoAnd}=[draw,and gate US,logic gate inputs=nn]
\tikzstyle{dot}=[fill,shape = circle,minimum size = 4pt,inner sep    = 0pt,text height  = 0pt,text depth   = 0pt]
\newtheorem*{rep@theorem}{\rep@title}
\newcommand{\newreptheorem}[2]{%
	\newenvironment{rep#1}[1]{%
		\def\rep@title{#2 \ref{##1}}%
		\begin{rep@theorem}}%
		{\end{rep@theorem}}}
\theoremstyle{plain}
\newtheorem{theorem}{Theorem}
\newtheorem{lemma}[theorem]{Lemma}
\newtheorem{proposition}[theorem]{Proposition}
\newtheorem{cor}[theorem]{Corollary}
\newtheorem{problem}[theorem]{Problem} 
\newtheorem{fact}[theorem]{Fact}
\newtheorem{observation}[theorem]{Observation}
\newtheorem{definition}[theorem]{Definition}
\newcommand{\CNOT}{\operatorname{CNOT}}
\newcommand{\CZ}{\operatorname{CZ}}
\newcommand{\parityL}{\mathsf{\oplus L}}
\newcommand{\BPAC}{\mathsf{BPAC}}
\newcommand{\LDAGParity}{\mathsf{LDAGParity}} 
\newcommand{\CNOTMult}{\mathsf{CNOTMult^*}}
\newcommand{\qBPAC}{\mathsf{qBPAC}}
\newcommand{\qAC}{\mathsf{qAC}}
\newcommand{\DAGParity}{\mathsf{DAGParity}}
\newcommand{\EAlg}{\mathfrak E}
\newcommand{\DAlg}{\mathfrak D}
\setlist[enumerate]{topsep=1ex,itemsep=0ex,partopsep=1.5ex,parsep=1ex}
\setlist[itemize]{topsep=1ex,itemsep=0ex,partopsep=1.5ex,parsep=1ex}
\setlist[description]{topsep=1ex,itemsep=0pt,partopsep=1.5ex,parsep=1ex}
\newcommand{\Dec}{\mathsf{Dec}}
\newcommand{\Rec}{\mathsf{Rec}}
\newcommand{\footremember}[2]{%
    \footnote{#2}
    \newcounter{#1}
    \setcounter{#1}{\value{footnote}}%
}
\newcommand{\footrecall}[1]{%
    \footnotemark[\value{#1}]%
}
\DeclareExpandableDocumentCommand{\hackcontrol}{O{}m}{|[phase,#1]| {}}
\title{Interactive quantum advantage with noisy, shallow Clifford circuits}
\author{Daniel Grier\footremember{iqc}{Institute for Quantum Computing, University of Waterloo, Canada}\footremember{cs}{\texttt{dgrier@uwaterloo.ca}.  Cheriton School of Computer Science, University of Waterloo, Canada}
\and Nathan Ju\footrecall{iqc} \footremember{uiuc}{\texttt{34ndju@gmail.com}. Department of Computer Science, University of Illinois, Urbana-Champaign, IL}
\and Luke Schaeffer\footrecall{iqc} \footremember{co}{ \texttt{lrschaeffer@gmail.com}. Department of Combinatorics and Optimization, University of Waterloo, Canada}
}
\date{}
\begin{document}
\maketitle

\begin{abstract}
\normalsize
	Recent work by Bravyi et al.\ constructs a relation problem that a noisy constant-depth quantum circuit ($\QNC^0$) can solve with near certainty (probability $1 - o(1)$), but that any bounded fan-in constant-depth classical circuit ($\NC^0$) fails with some constant probability. We show that this robustness to noise can be achieved in the other low-depth quantum/classical circuit separations in this area.  In particular, we show a general strategy for adding noise tolerance to the interactive protocols of Grier and Schaeffer. As a consequence, we obtain an unconditional separation between noisy $\QNC^0$ circuits and $\AC^0[p]$ circuits for all primes $p \geq 2$, and a conditional separation between noisy $\QNC^0$ circuits and log-space classical machines under a plausible complexity-theoretic conjecture.

	A key component of this reduction is showing average-case hardness for the classical simulation tasks---that is, showing that a classical simulation of the quantum interactive task is still powerful even if it is allowed to err with constant probability over a uniformly random input.  We show this is true even for quantum interactive tasks which are $\parityL$-hard to simulate.  To do this, we borrow techniques from randomized encodings used in cryptography.
\end{abstract}

\section{Introduction}

A major goal in quantum complexity theory is identifying problems which are efficiently solvable by quantum computers and not efficiently solvable by classical computers. If willing to believe certain conjectures, one can be convinced of this separation by the discovery of quantum algorithms that solve classically hard problems. For example, the belief that classical computers cannot efficiently factor integers contrasts with Shor's algorithm for factoring integers on a quantum computer \cite{shor}. However, a demonstration of Shor's algorithm on instances that are not efficiently solvable by classical computers would require quantum resources far out of reach of near-term capabilities. This has spurred developments in devising sampling problems that separate efficient, near-term quantum computers and classical computers like IQP circuit sampling \cite{iqp}, BosonSampling \cite{bosons}, and random circuit sampling \cite{rcs}. However, convincing evidence that {\it noisy} quantum computers outperform classical computers in these tasks suffers from the necessity of assuming some non-standard complexity-theoretic conjectures that are often native to each proposal.

Surprisingly, if you restrict to the setting of constant-depth circuits, a noisy, unconditional separation \emph{is} possible.  At first, these unconditional separations were known only in the noiseless setting.  That line of work was initiated by pioneering work of Bravyi, Gosset, and K{\"o}nig \cite{bgk} who showed a strict separation between constant-depth quantum circuits ($\mathsf{QNC^0}$) and constant-depth classical circuits with bounded fan-in gates ($\mathsf{NC^0}$).  The separation is based on the relation problem\footnote{Generally speaking, a relation problem is defined by a relation $R \subseteq \Sigma^* \times \Sigma^*$.  Given an input $x$, the task is to find some $y$ such that $(x,y) \in R$.} associated with measuring the outputs of a shallow Clifford circuit, which they called the Hidden Linear Function (HLF) problem due to certain algebraic properties of the output.  Furthermore, they show that $\NC^0$ cannot even solve this problem on average, a result which was later strengthened in several ways \cite{coudronstarkvidick:2018, legall:2019, bkst}.  Nevertheless, these works still assumed that the quantum circuit solving the task was noise free.

Follow-up work of Bravyi, Gosset, K{\"o}nig, and Tomamichel \cite{bgkt} showed that it was possible to encode the qubits of the quantum circuit in such a way that it preserved the separation while affected by noise. Interestingly, this was accomplished not by explicitly carrying out the quantum error correction procedure, but by simply measuring the syndrome qubits of the code and requiring the classical circuit to do the same.  In fact, their procedure provided a more-or-less general recipe for taking a constant-depth quantum/classical circuit separation and turning it into a separation in which the quantum circuit was also allowed noise.

This raises an obvious question:  how many circuit separations can we upgrade in this way?  $\mathsf{NC^0}$ circuits are fairly weak---they cannot even compute the logical AND of all input bits---and so we would like to show that even larger classes of classical devices cannot solve a problem that a noisy shallow quantum circuit can.

As a warm-up, we first consider the separation of Bene Watts, Kothari, Schaeffer, and Tal \cite{bkst}, which shows that constant-depth classical circuits with \emph{unbounded} fan-in gates ($\mathsf{AC^0}$) cannot solve the HLF problem on average.\footnote{The authors of that paper refer to their task as the ``Relaxed Parity Halving Problem,'' but it is still essentially the problem of measuring the outputs of a constant-depth Clifford circuit.}  Combining this result with the general error-correction recipe for relation problems, we arrive at the following result:

\begin{theorem}
There is a relation task solved by a noisy\footnote{We employ the same local stochastic noise model used in \cite{bgkt}.  We refer the reader to \Cref{sec:local_stochastic_noise} for those details.} $\QNC^0$ circuit with probability $1 - o(1)$ on all inputs. On the other hand, any probabilistic $\AC^0$ circuit can solve the problem on inputs of length $n$ with probability at most $\exp(-n^{\alpha})$ on a uniformly random input for some constant $\alpha > 0$.  
\end{theorem}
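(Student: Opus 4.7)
The plan is to combine two existing ingredients: the average-case hardness of the Relaxed Parity Halving Problem (RPHP) against $\AC^0$ proved by Bene Watts, Kothari, Schaeffer, and Tal \cite{bkst}, and the general error-correction recipe of Bravyi, Gosset, K\"onig, and Tomamichel \cite{bgkt} that converts a shallow Clifford relation task into a noise-tolerant version whose hardness for a classical circuit class $\mathcal{C}$ is inherited from the hardness of the original task, provided the decoding reduction can be carried out in $\mathcal{C}$. The target theorem is then obtained by essentially plugging the first result into the second.

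For the quantum upper bound I would start from the noiseless $\QNC^0$ circuit that solves RPHP and apply the BGKT encoding verbatim: every logical qubit is replaced by a patch of the 2D surface code of slowly growing side length (polylogarithmic in $n$ suffices to drive the error to $o(1)$), each logical Clifford is simulated by a constant-depth physical Clifford circuit, and the circuit outputs both the logical measurement bits and the full set of syndrome bits. A standard threshold argument under the local stochastic noise model of \Cref{sec:local_stochastic_noise} shows that with probability $1-o(1)$ the output pair (logical outcomes, syndromes) is consistent with some bitstring lying in the RPHP relation on the given input, which is exactly the relation defining the new task.

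For the classical lower bound I would reduce a hypothetical $\AC^0$ simulator $C$ for the encoded task back to an $\AC^0$ simulator $C'$ for RPHP. Given a uniform RPHP input $x$, $C'$ samples the auxiliary randomness required by the encoding, runs $C$, and then applies the classical surface-code decoder to convert (syndromes, raw logical bits) into a candidate RPHP output. The key point is that for the BGKT encoding the decoder and the input lifting are both local, shallow operations and therefore computable in $\AC^0$; also, the uniform distribution on encoded inputs pulls back to the uniform distribution on RPHP inputs up to auxiliary randomness that $C'$ can supply itself. The hardness bound $\exp(-n^{\alpha})$ from \cite{bkst} then transfers to the encoded problem, with $\alpha$ possibly shrinking by a constant factor to absorb the polynomial blowup in input length caused by the encoding.

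The main technical obstacle is the bookkeeping for this reduction: one has to check carefully that (i) the BGKT decoding circuitry, including majority-style syndrome-to-correction logic, really fits inside $\AC^0$ rather than merely inside a more permissive low-depth class used implicitly in \cite{bgkt} against $\NC^0$, and (ii) the distribution on encoded inputs induced by sampling a uniform RPHP input and the required ancilla randomness matches, up to negligible statistical distance, the uniform distribution on the encoded problem's inputs---otherwise the $\exp(-n^{\alpha})$ average-case bound cannot be cleanly transferred. Once these two compatibility checks are in place the theorem follows by direct composition.
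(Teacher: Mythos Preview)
Your approach is the same as the paper's: take RPHP from \cite{bkst}, apply the BGKT noisy extension, and show that an $\AC^0$ solver for the encoded problem can be converted back into an $\AC^0$ solver for RPHP by appending the surface-code decoder.

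There is one substantive inaccuracy. You assert that ``the decoder and the input lifting are both local, shallow operations and therefore computable in $\AC^0$.'' The input lifting is indeed local (each logical gate is classically controlled by one input bit, and its physical implementation is a constant-depth Clifford on one or two code blocks), but the decoding map $\Dec$ is \emph{not} local: it takes the $m$ physical measurement outcomes of an entire code block and outputs one logical bit, and in fact parity on $k$ bits reduces by projections to $\Dec$ on $O(k^2)$ bits, so $\Dec$ cannot be computed by a polynomial-size $\AC^0$ circuit in $m$. The paper handles this by implementing $\Dec$ (and $\Rec$) as a brute-force truth-table circuit of size $\exp(m,m_{anc})$; since $m,m_{anc}=\Theta(\polylog n)$, the decoder is a \emph{quasipolynomial}-size $\AC^0$ circuit. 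The reduction therefore turns a size-$s$, depth-$d$ $\AC^0$ circuit for the noisy problem into a size-$(s+\exp(\polylog n))$, depth-$(d+O(1))$ circuit for RPHP, and the BKST bound (which is stated for arbitrary size $s$) still yields $\exp(-n^{\alpha})$ failure probability. So your flagged obstacle (i) has a negative answer as literally posed, but the argument survives once you route through quasipolynomial size.

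Your obstacle (ii) is a non-issue in this setting: the noisy-extended relation has the \emph{same} input $x$ as RPHP---only the output space is enlarged to $(\mathcal{Y},s)$---so there is no distribution mismatch to worry about.
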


The result of Bene Watts et al.\ is the strongest known low-depth separation of its kind, but stronger separations are known for tasks which admit some amount of interactivity.  Consider the shallow Clifford measurement problem discussed above, where the measurements are made in two rounds.  In the first round, the quantum device is given the bases in which to measure some of the qubits and returns their measurement outcomes; and in the second round, the quantum device is given bases in which to measure the remaining qubits and returns their measurement outcomes.  Grier and Schaeffer \cite{gs} show that any classical device which can solve such problems must be relatively powerful.  More specifically, if the initial Clifford state is a constant-width grid state, then the classical device can be used to solve problems in $\NC^1$, and if the starting state is a poly-width grid state, then the classical device can be used to solve problems in $\parityL$.\footnote{See Section~\ref{sec:complexityclasses} for definitions of all the relevant complexity classes needed for the paper.} Because $\AC^0[p] \subsetneq \NC^1$ unconditionally, the above interactive task can be solved by a $\QNC^0$ circuit but not an $\AC^0[p]$ circuit, i.e., an $\AC^0$ circuit with unbounded $\MOD_p$ gates for some prime $p$.

One of the contributions of this work is massaging the noisy circuit separation recipe for relations problems into a recipe for \emph{interactive} problems as well.  Starting from an interactive protocol which exhibits a separation with a noise-free quantum circuit, there are three key steps to upgrade the separation to the noisy setting:
\begin{enumerate}
\item \emph{Augment the interactive protocol with the surface code encoding of Bravyi et al.\ \cite{bgkt}.}  This is straightforward, but it's worth noting that it changes the problem definition---not just because there are more physical qubits due to the encoding, but because we cannot prepare the initial state exactly or decode the syndrome in constant depth. As for the relational case, the burden of these steps is offloaded into the problem definition.
\item \label{step:hard} \emph{Show classical average-case hardness.}  That is, show that even when the classical circuit simulating the interactive protocol is allowed to err with some constant probability over a uniformly random input, it can still be leveraged to solve a hard problem (e.g., a problem in $\NC^1$ or $\parityL$).  This step is the most involved, and new ideas will be required to upgrade existing interactive separations in this way.\footnote{There is a sense in which classical average-case hardness is not strictly required to construct an interactive protocol demonstrating a quantum/classical separation robust to noise.  That is, one could require that any classical simulation of the protocol be correct with high probability on \emph{all} inputs.  In this case, simply combining the results of \cite{bgkt} with an amplified version of \cite{gs} suffice to show these types of simulations can be leveraged to solve hard problems.  However, we believe (much like \cite{bgkt}) that these results are more convincing when a weaker requirement is made on the classical simulation---namely, that the classical device need only succeed with high probability over some reasonable distribution of inputs.  In this setting, average-case hardness is required.}
\item \emph{Connect to separations of classical complexity classes.}  In some cases, this will lead to an unconditional separation between noisy shallow quantum circuits and shallow classical circuits, and in some cases this will lead to a conditional separation.  We note that these separations will not be identical to those obtained in Ref~\cite{gs} due to the fact that we use quasipolynomial-size circuits to decode the syndrome qubits of the surface code. 
\end{enumerate}

Fortunately, it was shown in Ref~\cite{gs} that Step~\ref{step:hard} holds\footnote{Although it is not strictly required, we prove a slightly stronger average-case hardness in \Cref{etolerance}.} for the $\NC^1$-hardness result.  We immediately obtain the following separation:
\begin{theorem}
\label{informalnoisyseparation}
There is a two-round interactive task solved by a noisy $\QNC^0$ circuit with probability $1 - o(1)$ on all inputs. Any probabilistic $\AC^0[p]$ circuit (for primes $p \geq 2$) fails the task with some constant probability on a uniformly random input. 
\end{theorem}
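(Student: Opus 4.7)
The plan is to instantiate the three-step recipe laid out above, gluing together three essentially off-the-shelf ingredients: the noise-free two-round interactive protocol of Grier and Schaeffer~\cite{gs} built from constant-width grid Clifford states, which already yields a $\QNC^0$-vs-$\NC^1$ separation and is proved there to be average-case hard with respect to uniformly random inputs; the surface-code encoding template of~\cite{bgkt}; and the unconditional Razborov-Smolensky theorem $\AC^0[p]\subsetneq \NC^1$, which persists against quasi-polynomial-size $\AC^0[p]$ circuits.

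For Step~\ref{step:hard}'s predecessor, Step~1, I would take each logical qubit of the initial grid state and replace it with a surface code patch of sufficiently large constant distance. Each round of the protocol then has the noisy $\QNC^0$ device prepare the encoded state, perform the transversal Pauli-basis measurements dictated by that round's basis string, and report both the physical measurement bits and the syndrome bits. Exactly as in the relational setting of~\cite{bgkt}, the decoder is absorbed into the problem definition---neither party applies it in-circuit---and correctness is evaluated against the logical outcomes produced by a fixed classical decoder applied to the party's output. Standard threshold arguments for local stochastic noise then give success with probability $1-o(1)$ on every input, establishing the $\QNC^0$ upper bound.

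For Step~\ref{step:hard}, I would invoke the fact pointed out in the excerpt that~\cite{gs} already establishes average-case $\NC^1$-hardness for the unencoded grid-state protocol. Given any purported $\AC^0[p]$ simulator $C$ for the encoded task that succeeds with constant advantage on a uniformly random input, I would sample a uniformly random unencoded basis string, expand it deterministically to the corresponding encoded basis string, run $C$, and post-process $C$'s outputs with a quasi-polynomial-size surface-code decoder. The composition solves the unencoded Grier-Schaeffer protocol on average, which by~\cite{gs} can be leveraged to solve an $\NC^1$-hard problem. Step~3 is then immediate: Razborov-Smolensky, in its quasi-polynomial-size form, rules out such a composed simulator and yields the claimed constant failure probability on uniformly random inputs.

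The main obstacle is the bookkeeping in Step~\ref{step:hard}. One has to check that the unencoded-to-encoded input map is measure-preserving (so that ``uniformly random over encoded inputs'' agrees with ``uniformly random over unencoded inputs'' after the reduction), and that the noise on the physical outcomes keeps the effective logical error rate inside the regime in which the average-case hardness of~\cite{gs} can be invoked. The quasi-polynomial blow-up introduced by the decoder is precisely what forces the separation to be stated against $\AC^0[p]$ rather than a tighter class, matching the remark in Step~3 of the recipe about the resulting complexity classes differing from those of~\cite{gs}.
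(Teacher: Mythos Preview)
Your proposal is correct and follows essentially the same route as the paper: encode the constant-width grid-state protocol of~\cite{gs} in the surface code (Problem~\ref{prob:noisy_interactive_def} and Theorem~\ref{interactiveblackboxnoise}), decode the physical outputs with a quasi-polynomial $\AC^0$ circuit (Lemma~\ref{decode}), invoke the average-case $\NC^1$-hardness of~\cite{gs} for the unencoded task (sharpened slightly to threshold $1/30$ in Lemma~\ref{etolerance}, though this is inessential), and finish with Razborov--Smolensky (Corollary~\ref{cor:ac0[p]_separation}). One slip to fix: the surface-code patches must have distance $\Theta(\polylog n)$, not ``sufficiently large constant distance''---constant distance gives only constant logical fidelity per qubit and hence $o(1)$ overall success on $\poly(n)$ qubits; your later remark about the quasi-polynomial decoder blow-up shows you already have the right parameter in mind.
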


Unfortunately, Step~\ref{step:hard} is left as an open question in Ref~\cite{gs} for the $\parityL$-hardness result.  The second major contribution of this paper (and main technical result) is to show that we can, in fact, obtain average-case hardness for this setting:


\begin{theorem}
\label{informalparityl}
There is a two-round interactive task solved by a $\QNC^0$ circuit with certainty.
Let $\mathcal{R}$ be a classical oracle solving the task with probability $1 - \delta$. There exists $\delta > 0$ such that when $\mathcal{R}$ is implemented by a broad family\footnote{The actual broad family of devices we require are those that are ``rewindable" which we discuss in \Cref{sec:rewindoracle}.} of classical devices (e.g., log-space Turing machines, log-depth and bounded fan-in classical circuits) then 
$$
\parityL \subseteq (\BPAC^0)^{\mathcal{R}}.
$$
\end{theorem}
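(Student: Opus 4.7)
The plan is to combine the worst-case $\parityL$-hardness reduction of Grier and Schaeffer with a randomized encoding to upgrade it to average-case hardness, while exploiting rewindability of the classical oracle to handle the two-round interactive structure. First, I would recall the worst-case reduction: given a $\parityL$ instance, Grier and Schaeffer exhibit specific first-round and second-round queries to the interactive task whose responses, post-processed in $\BPAC^0$, recover the answer. The difficulty is that the distribution of these queries is highly structured, whereas $\mathcal{R}$ only comes with a success guarantee over uniformly random inputs. My first step is therefore to design a randomized encoding tailored to the Clifford measurement task. Concretely, I would construct a $\BPAC^0$ map $\mathrm{Enc}$ that, given a structured query $q$ and random string $r$, outputs a query $q' = \mathrm{Enc}(q,r)$ whose marginal distribution over $r$ is statistically close to uniform, together with a $\BPAC^0$ decoder $\mathrm{Dec}$ that, given $r$ and the oracle response to $q'$, reconstructs the response to $q$. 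Because the interactive task is essentially linear over $\mathrm{GF}(2)$, the encoding should consist of applying a random ``symmetry'' of the task (a random Clifford shift, or a random affine redescription of the grid state) to the query side, and then inverting that shift on the response side; this is exactly the kind of gadget for which the cryptographic randomized-encoding techniques (in the style of Ishai--Kushilevitz) are well-suited.

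Second, I would integrate this encoding into the two-round protocol. The second-round query in the Grier--Schaeffer reduction depends on the first-round response, so the randomization of round two must be chosen adaptively after decoding round one. This is where rewindability enters: for a rewindable oracle in the sense of \Cref{sec:rewindoracle}, one can treat $\mathcal{R}$ as maintaining a consistent ``internal state'' across the two rounds, so that the conditional distribution of round-two responses -- after choosing fresh randomization based on the decoded round-one response -- is faithful to the honest interactive protocol. I would verify that the rewindable-oracle formalism lets the $\BPAC^0$ reduction invoke $\mathcal{R}$ in a way that is both compatible with adaptive re-randomization and preserves the average-case error guarantee across both rounds.

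Third, I would handle error amplification. The worst-case reduction makes polynomially many queries, and each randomized query independently lands on a (nearly) uniformly distributed input on which $\mathcal{R}$ errs with probability at most $\delta$. Taking $\delta$ to be a sufficiently small constant, I would boost individual query reliability via independent re-randomizations combined with a majority vote (all $\BPAC^0$ operations), and then apply a union bound across the polynomially many reduction queries to guarantee overall success with probability at least $2/3$. Alternatively, one can leverage the fact that the outer $\parityL$ decoding step in the Grier--Schaeffer reduction is itself error-tolerant and absorb a small constant fraction of faulty responses there, which is the route I would attempt first since it avoids blowing up the query complexity.

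The main obstacle is the randomized encoding itself: it must simultaneously be $\BPAC^0$-computable in both directions, yield a marginal distribution close enough to uniform that the average-case guarantee on $\mathcal{R}$ transfers with only a constant-factor loss, and be compatible with the two-round interactive structure so that the rewindable-oracle framework applies to the adaptive round-two randomization. The rest of the argument -- plugging into the existing worst-case reduction, amplifying errors, and checking complexity bookkeeping -- is relatively mechanical, but the encoding is where the novel idea has to do real work, and is precisely the place where borrowing from cryptographic randomized encodings, tailored to the linear-algebraic structure of the Clifford grid-state measurement problem, will be essential.
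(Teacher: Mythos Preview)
Your high-level strategy is right, and you correctly identify randomized encodings in the Ishai--Kushilevitz style as the key tool. But the concrete plan has a real gap in \emph{where} the randomization is applied.

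You propose to randomize at the level of the oracle queries themselves, via ``a random Clifford shift, or a random affine redescription of the grid state,'' aiming for a marginal distribution ``statistically close to uniform.'' This is essentially what the Kilian-style randomization $\gamma$ in Grier--Schaeffer already does, and the paper explicitly observes that $\gamma$ alone is \emph{not} sufficient: it preserves the coset structure $g_i' H_m = g_i H_m$, so the underlying CNOT skeleton (and hence the problem instance) is still visible to the oracle. More fundamentally, ``close to uniform'' is the wrong target---any encoding that hides everything would also hide the answer bit. The correct notion is what the paper calls \emph{half-randomizing}: the encoded input is uniform over one of two disjoint equal-sized sets $\mathcal{C}_{yes}, \mathcal{C}_{no}$ determined by the answer.

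The paper's actual move is to randomize one level further back in the reduction chain: not the Clifford query, but the underlying $\DAGParity$ instance. Using \cite{aik}'s determinant-preserving randomization (multiply the relevant submatrix by random unit-triangular matrices over $\mathrm{GF}(2)$), a monotone DAG $A$ is mapped to a uniformly random monotone DAG $B$ with the same path-parity; this is then deterministically layered and fed through the existing $\EAlg$ and $\gamma$ reductions. The technical work is showing that the full composition $\gamma \circ \EAlg \circ \DAlg$ is half-randomizing, which requires checking that $\EAlg$ is injective and that $\gamma$ maps distinct CNOT sequences to uniform distributions over disjoint equal-cardinality sets.

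Your treatment of the second round also misreads the role of rewindability. The second-round inputs are not adaptively re-randomized based on decoded first-round responses; they are a fixed small set $I_2$ of five pentagram-line measurements, and rewindability is used to issue all of them against the \emph{same} first-round state. The error accounting then proceeds by a union bound over $|I_2|=5$ followed by Markov, not by per-query amplification of adaptively randomized second-round calls.
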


The proof of this theorem borrows an idea from cryptography called {\it randomized encodings}.  In particular, we will employ the construction of Applebaum, Ishai, and Kushilevitz \cite{aik} which randomizes instances of the following problem---given a directed acyclic graph (DAG), determine the parity of the number of paths from vertex $s$ to vertex $t$.  In fact, we will use that this problem reduces to the $\parityL$-hardness result in \cite{gs}.  Importantly, we show that when we compose the randomized encoding with the rest of the reduction, the distribution over inputs in the promise will be fairly uniform.  This leads to a general way to boost the randomization in worst-to-average-case reductions using the framework in \cite{gs}.

Combining the recipe for interactive circuit separations with \Cref{informalparityl}, we obtain the following consequence:
\begin{theorem}
\label{thm:parityL_noise_sep}
There is a two-round interactive task solved by a noisy $\QNC^0$ circuit with probability $1 - o(1)$ on all inputs.  Assuming $\parityL \not\subseteq (\qBPAC^0)^{\cL}$, any probabilistic log-space machine fails the task with some constant probability over a uniformly random input. 
\end{theorem}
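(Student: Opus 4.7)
The plan is to instantiate the three-step recipe from the introduction with the $\parityL$-hard interactive task of Grier and Schaeffer \cite{gs}, and to combine it with the average-case hardness result of \Cref{informalparityl} to obtain the classical lower bound.

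I would begin from the Grier-Schaeffer two-round Clifford measurement protocol on a poly-width grid state, where a noise-free $\QNC^0$ circuit solves the task with certainty and worst-case classical simulation is $\parityL$-hard. Applying the surface-code augmentation of \cite{bgkt} (Step 1 of the recipe) produces a new two-round interactive task whose honest implementation is still a $\QNC^0$ circuit, now succeeding on every input with probability $1-o(1)$ under the local stochastic noise model. As in the relational setting, the price paid on the classical side is that the problem definition now includes syndrome information and post-processing that cannot be carried out in strict constant depth; this is precisely why the hardness container appearing in the hypothesis is $(\qBPAC^0)^{\cL}$ rather than pure log-space.

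For the classical lower bound I argue by contradiction. Suppose a probabilistic log-space machine $\mathcal{M}$ solves the encoded task with probability at least $1-\delta$ on uniformly random inputs, for the constant $\delta$ supplied by \Cref{informalparityl}. The first step is to leverage $\mathcal{M}$ to simulate the original (unencoded) Grier-Schaeffer task: any instance of that task can be transformed into a suitable instance for $\mathcal{M}$, and the corresponding response decoded, using a shallow quantum circuit of the type captured by $\qBPAC^0$. Thus the original task is solvable in $(\qBPAC^0)^{\mathcal{M}}$, and this composite oracle inherits rewindability from $\mathcal{M}$. Invoking \Cref{informalparityl} therefore yields $\parityL \subseteq (\BPAC^0)^{(\qBPAC^0)^{\mathcal{M}}}$, and collapsing the oracle composition while folding $\mathcal{M}$ into the log-space oracle $\cL$ gives $\parityL \subseteq (\qBPAC^0)^{\cL}$, contradicting the standing hypothesis. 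Hence no such $\mathcal{M}$ can exist, and every probabilistic log-space machine must fail with constant probability on uniform inputs.

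The main obstacle is a distributional one: \Cref{informalparityl} requires the oracle to succeed over the uniform distribution on inputs to the \emph{original} \cite{gs} task, whereas $\mathcal{M}$ is only assumed to succeed on the uniform distribution on inputs to the \emph{encoded} task. I therefore have to verify that the $\qBPAC^0$ reduction from the original task to the encoded task pushes the uniform distribution on original inputs to a distribution close enough to uniform on the encoded input space (or otherwise concentrated inside $\mathcal{M}$'s good set) that the composed simulator inherits the required constant success probability. Once this bookkeeping is verified, the rest is a routine assembly: noise tolerance of the quantum side comes from Step 1 of the recipe, classical average-case hardness comes from \Cref{informalparityl}, and the desired noisy separation follows.
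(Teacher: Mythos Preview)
Your proposal is correct and follows essentially the same route as the paper: apply the surface-code augmentation to the \cite{gs} poly-width grid task to get the noisy quantum upper bound, then for the lower bound assume a log-space solver, wrap it with the $\qAC^0$ surface-code decoder, invoke the average-case $\parityL$-hardness, and collapse the oracle tower to contradict $\parityL \not\subseteq (\qBPAC^0)^{\cL}$.

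The distributional obstacle you flag at the end is not actually present, and recognizing why will simplify your write-up. In the paper's noisy problem the \emph{inputs} to the device are still just logical measurement bases for the graph-state qubits, exactly as in the unencoded task; the surface-code augmentation changes only the \emph{outputs} (physical-qubit measurement strings together with the syndrome $s$, which the device itself chooses). Hence the uniform distribution on inputs to the encoded task is literally the same distribution as on inputs to the original task, and the $\qBPAC^0$ wrapper is applied purely to decode outputs, not to re-encode inputs. So no push-forward analysis is needed. (One small slip: that wrapper is a classical $\qAC^0$ circuit, not a ``shallow quantum circuit''.)
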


Let us briefly unpack the $\parityL \not\subseteq (\qBPAC^0)^{\cL}$ assumption.  First, consider the plausible assumption that $\parityL \not\subseteq \cL$.  An $\cL$ machine is deterministic, while a $\parityL$ machine is non-deterministic and accepts if the parity of accepting paths is zero.  On the other hand, it is well-known that the parity function is not in $\qBPAC^0$ (i.e., randomized $\AC^0$ circuits of quasipolynomial size). Therefore, one might also expect that $(\qBPAC^0)^{\cL}$ is insufficiently powerful to compute $\parityL$ functions.\footnote{This assumption is slightly nonstandard; specifically, we would hope for a weaker assumption like $\parityL \not\subseteq (\BPAC^0)^{\cL}$. The barrier to proving our theorem under this assumption is that we require a {\it polynomial-size} low-depth classical circuit to decode the surface code. When the number of physical qubits per logical qubit is of order $O(\log n)$, there is a polynomial-size $\AC^0$ circuit for this task, but in our constructions, we have that the number of physical qubits per logical qubit is of order $\omega(\log n)$ to achieve a constant noise rate, and to the authors' knowledge, it is not known how to decode the surface code in $\AC^0$ (or even $\NC^1$) under these constraints.}

We do not attempt to give an exhaustive list of separations obtainable from combining \Cref{informalparityl} and the recipe for interactive circuit separations.  Much like the results of \cite{gs}, there is an inherent trade-off to the separation.  We can weaken the assumption at the expense of weakening the separation. For example, instead of a conditional separation between $\QNC^0$ and $\cL$ as in \Cref{informalnoisyseparation}, one could also make the analogous statement about $\QNC^0$ and $\NC^1$ under the assumption $\parityL \not\subseteq (\qBPAC^0)^{\NC^1}$.

\subsection{Open Problems}

Our work still leaves several unresolved questions.  We show average-case hardness results for classical devices solving quantum interactive tasks---if the classical machine only errs with small probability on a uniformly random input, it can be leveraged to solve hard problems.  We ask what happens in the error regime below this threshold. Can the success probabilities be exponentially reduced by parallel repetition of the same problems? Direct product theorems are often useful in proving these types of error amplifications, but it is unclear how they apply in this setting.  Parallel repetition could also \emph{improve} the success probability of the noisy quantum circuit if we only require that some fraction of the instances are solved correctly.  This could boost the error probability from inverse quasipolynomial to inverse exponential.

One could also approach this problem from the other direction by showing that there are circuits that can tolerate even higher amounts of error.  More generally, we ask what is the optimal amount of allowable error for each problem.  We show error thresholds of $29/30$ for classical devices solving $\NC^{1}$ problems, and $420/421$ for $\parityL$ problems.  Surely, these bounds are not tight.  How far can they be improved?

\section{Preliminaries}

This section discusses much of the background needed for this paper.  Readers familiar with previous results in this area, particularly \cite{bgkt} and \cite{gs}, can comfortably skip this section, except for \Cref{sec:problem_statement} where we formally define the main task we will consider for the rest of the paper.

\Cref{sec:complexityclasses} briefly touches on the low-depth circuit classes relevant to this paper.  \Cref{sec:local_stochastic_noise} explains the local stochastic noise model \cite{bgkt}, i.e., the type of error we allow in the quantum circuit. \Cref{sec:surface_code} describes the surface code (also following \cite{bgkt}) and its behavior under local stochastic noise.  \Cref{sec:problem_statement} defines a shallow Clifford circuit measurement task that unifies both the relation and interactive problem statements of \cite{bgk} and \cite{gs}.  Nevertheless, to make this distinction clear, we discuss the differences in relation and interactive tasks in \Cref{sec:rel_and_int}.  Finally, in \Cref{sec:mbqc} we discuss the connection between our problem and measurement-based quantum computation.

\subsection{Weak complexity classes}
\label{sec:complexityclasses}

For reference, we list the weak complexity classes of interest in this paper:

\begin{itemize}
	\item $\NC^i$: $\log^i$-depth circuits with bounded fan-in AND/OR/NOT gates.
	\item $\AC^i$: $\log^i$-depth circuits with unbounded fan-in AND/OR/NOT gates.
	\item $\AC^i[p]$: $\AC^i$ circuits with $\MOD_p$ gates. The $\MOD_p$ gate outputs 1 iff the sum of the inputs bits is 0 mod $p$.
	\item $\BP \mathcal{C}$: $\mathcal{C}$ circuits (e.g., $\NC^0$, $\AC^0$, etc.) that have access to random bits and two-sided bounded error.
	\item $\mathsf{q}\mathcal{C}$: $\mathcal{C}$ circuits of quasipolynomial (i.e., $\exp( \log^{O(1)} n)$) size.
	\item $\cL$: log-space Turing machines. 
	\item $\parityL$: Non-deterministic $\cL$ machines such that the accepting condition is that the number of accepting paths is odd. 
	\item $\QNC^i$: $\log^i$-depth \emph{quantum} circuits with arbitrary one and two-qubit gates.
\end{itemize}

In addition, we have the following inclusions that are either proven strict ($\subsetneq$) or believed to be strict ($\subseteq$): $\NC^0 \subsetneq \AC^0 \subsetneq \AC^0[p] \subsetneq \NC^1 \subseteq \cL \subseteq \parityL$.  We note that the inclusion $\AC^0[p] \subseteq \NC^1$ is only known to be strict when $p$ is prime.

\subsection{Local Stochastic Noise Model}
\label{sec:local_stochastic_noise}

While a noise-free quantum computation can reliably execute a sequence of operations, a noisy quantum computation may have sources of errors that corrupt several key parts of the computation including state initialization, gate execution, and measurement. To capture these sources of error, we consider the \emph{local stochastic quantum noise} model \cite{fgl, bgkt}. Under this model, random errors occur at each timestep of the execution of a quantum circuit. For example, a gate error occurs when random noise enters the computation prior to the execution of the gate. Similarly, an erroneous measurement outcome is modeled by random noise affecting the state of the system right before measurement. 

The types of random noise that we consider are random Pauli errors on each qubit. For a Pauli error $E \in \{I,X,Y,Z\}^{\otimes n}$, we borrow the convention of $\text{Supp}(E) \subseteq [n]$ to denote the subset of indexed qubits for which $E$ acts by a $X$, $Y$, or $Z$. In other words, $\text{Supp}(E)$ is the subset of qubits on which $E$ acts non-trivially. Local stochastic noise is parameterized by the noise rate $p$:

\begin{definition}
Let $p \in [0,1]$. A random $n$-qubit Pauli error $E$ is \emph{$p$-local stochastic} if
	\begin{equation}
		{\rm Pr}[F \subseteq {\rm Supp}(E)] \leq p^{|F|} \quad \forall \, F \subseteq [n]
	\end{equation} 
\end{definition}

Notice that this allows distant qubits to have correlated errors, but the probability that $k$ qubits are corrupted simultaneously decreases exponentially in $k$.  When we say that a layer of local stochastic noise $E$ is sampled with noise rate $p$, we use the notation $E \sim \mathcal{N}(p)$.

\subsection{The 2D surface code}
\label{sec:surface_code}

The 2D surface code is a CSS-type error correcting code that encodes one logical qubit into $m$ physical qubits on a 2D lattice. For a detailed discussion of the construction we employ throughout this paper, we refer the reader to Section IV of \cite{bgkt}. We will abstract away the physical surface code and henceforth denote the encoded version of a state with a line over the state vector, e.g., the logical $\ket{0}$ state is encoded as the $\ket*{\overline{0}}$ state. We follow the same convention when speaking of encoded circuits and measurement observables. If $\mathcal{Y}$ is the physical measurement outcome over multiple codeblocks, we denote $\mathcal{Y}=\mathcal{Y}^1...\mathcal{Y}^n$ with each $\mathcal{Y}^i$ the $m$ outcomes of the $i$'th codeblock. The space of binary physical measurement outcomes on one codeblock forms a linear subspace called the codespace, and we refer to it by $\mathcal{L}$.

A standard quantum computation begins with qubits prepared in a basis state, e.g., multiple copies of $\ket{0}$. However, the surface code must begin with an {\it encoded} basis state, $\ket*{\overline{0}}$, so we require a constant-depth procedure to produce such a state.  We can do this, albeit at the cost of extra ancilla qubits and a Pauli recovery operator dependent on the measurement outcomes of the ancillae. 

\begin{fact}
	\label{bellprep}
	(Basis state preparation, Theorem 23 in \cite{bgkt}) There is a constant-depth Clifford circuit on $2m + m_{anc}$ qubits that measures $m_{anc}$ qubits with measurement outcome $s \in \{0,1\}^{m_{anc}}$ and leaves the remaining $2m$ qubits in the state $\Rec(s)\ket*{\overline{0}}^{\otimes 2}$ for some Pauli operator $\Rec$ completely determined by $s$.\footnote{We note that the original construction of \cite{bgkt} actually creates an encoded Bell state $\ket*{\overline{\Phi}}$ instead of two zero states $\ket*{\overline{0}}^{\otimes 2}$ (up to a Pauli recovery operator), but these two states are equal up to a constant-depth Clifford, so these protocols are essentially identical using \Cref{constantdepthclifford} and the fact that Pauli operators can be ``pushed through" Cliffords, see \Cref{f}.}  Furthermore, the error on the final encoded basis state can be taken to be local stochastic noise with error parameter scaling polynomially in the error parameter for the circuit (but independent of $m$ and $m_{anc}$).
\end{fact}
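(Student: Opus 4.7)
The plan is to run two independent copies of the standard CSS preparation procedure for the surface code in parallel. I would initialize all $2m$ physical data qubits in $\ket{0}^{\otimes 2m}$, which already lies in the $+1$ eigenspace of every $Z$-type stabilizer of both codeblocks. To also project into the joint $+1$ eigenspace of the $X$-type stabilizers, I would measure each $X$-stabilizer using one ancilla qubit prepared in $\ket{+}$, entangled with its (at most four) neighboring data qubits by CNOTs with the ancilla as control, and then measured in the $X$ basis. Writing $s \in \{0,1\}^{m_{anc}}$ for the concatenated ancilla outcomes, the post-measurement data state is a Pauli frame away from $\ket*{\overline{0}}^{\otimes 2}$; one may take $\Rec(s)$ to be a product of single-qubit $Z$ operators along a matching of the syndrome defects to a fixed boundary of the dual lattice, which is a deterministic classical function of $s$.

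Because the surface-code $X$-stabilizers have constant weight, and their supports admit a proper coloring by a constant number of colors such that same-colored stabilizers act on disjoint data qubits, all of the ancilla-entangling CNOTs can be scheduled in constant depth. Preparations of $\ket{0}$ and $\ket{+}$ are depth zero, the CNOT layers have constant depth, and the final $X$-basis measurement is one additional layer, so the entire protocol is a constant-depth Clifford circuit on $2m + m_{anc}$ qubits, exactly as required. For the noise claim, I would invoke the standard observation (in the style of Lemma 5 of \cite{bgkt}) that a constant-depth Clifford circuit spreads any single-qubit Pauli error into a Pauli supported in its light cone, which has constant size; a union bound then shows that $p$-local stochastic noise entering at each layer emerges as $p'$-local stochastic noise on the output with $p' = \mathrm{poly}(p)$ depending only on the depth and light-cone size, not on $m$ or $m_{anc}$.

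The main obstacle is the noise book-keeping rather than the circuit construction itself. One has to show that noise on state preparation, on the CNOT layers, and on the syndrome measurements all combine into a single local stochastic error on the final encoded state whose rate is independent of the code size. The subtlety is that $s$ is itself noisy, so $\Rec(s)$ is a random Pauli correlated with the residual error on the data qubits. The cleanest way to handle this is to regard $\Rec(s)$ purely as classical bookkeeping and commute all Paulis through to the end of the circuit in the Heisenberg picture, so that no quantum operation applied inside the circuit depends on $s$. The net effect on the data qubits is then a fixed constant-depth Clifford followed by local stochastic Pauli noise, and the usual light-cone union bound closes the argument.
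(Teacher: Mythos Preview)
Your noiseless construction is fine, but the noise-tolerance claim does not follow from it, and this is exactly where your argument has a gap. You push all Pauli noise to the end of the circuit and correctly observe that the residual error on the data qubits, call it $E_d$, is local stochastic with parameter $\mathrm{poly}(p)$ by the light-cone bound. But the statement asks for the final data state to equal $E\cdot \Rec(s)\ket{\overline{0}}^{\otimes 2}$ where $s$ is the \emph{observed} syndrome. A $Z$-error on a single ancilla flips one bit of $s$, so the true projection corresponds to $s^{\ast}=s\oplus e_i$ and the data state is $E_d\,\Rec(s^{\ast})\ket{\overline{0}}^{\otimes 2}$. Hence $E=E_d\,\Rec(s^{\ast})\Rec(s)^{-1}$. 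For the surface code, any $Z$-type Pauli that flips exactly one $X$-stabilizer must be a string reaching the boundary, so $\Rec(s^{\ast})\Rec(s)^{-1}$ has weight $\Theta(\sqrt{m})$. A single ancilla fault, occurring with probability $\Theta(p)$, therefore produces an $E$ supported on a fixed set $F$ of size $\Theta(\sqrt{m})$; the local-stochastic condition $\Pr[F\subseteq \mathrm{Supp}(E)]\le (p')^{|F|}$ then forces $p'\ge p^{1/\Theta(\sqrt{m})}\to 1$, so no $m$-independent bound is possible. Your last paragraph hides this: ``no quantum operation depends on $s$'' is true, but $\Rec(s)$ in the \emph{target state} does depend on the noisy $s$, and that dependence is globally sensitive.

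The paper does not prove this fact at all; it is quoted from \cite{bgkt}, whose construction is not the naive 2D syndrome extraction but a preparation via a 3D cluster state (of Raussendorf--Bravyi--Harrington type) producing an encoded Bell pair. That 3D construction enjoys a single-shot property: measurement errors on the bulk ancillas translate into residual errors on the output surface-code block that are themselves local stochastic with $m$-independent rate. The footnote then converts $\ket{\overline{\Phi}}$ to $\ket{\overline{0}}^{\otimes 2}$ by a constant-depth encoded Clifford and pushes the Pauli frame through. So the missing idea in your plan is precisely this single-shot preparation; the light-cone union bound alone cannot control the effect of syndrome-bit errors on $\Rec$.
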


Following basis-state preparation, we would like to perform a constant-depth Clifford circuit on the surface code.

\begin{fact}[Constant-depth Cliffords, \cite{moussa:2016}]
\label{constantdepthclifford}
 The encoded $\overline{H}$, $\overline{S}$, $\overline{\CNOT}$ gates have constant-depth implementations on the surface code.
\end{fact}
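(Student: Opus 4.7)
The plan is to exhibit for each of $\overline{H}$, $\overline{S}$, and $\overline{\CNOT}$ a physical circuit whose depth is independent of the code parameter $m$, and to verify that conjugation by that circuit both preserves the surface-code stabilizer group and acts on the logical Pauli pair $(\overline{X},\overline{Z})$ as the advertised Clifford. Since the surface code is CSS and the logical operators are represented by string-like operators stretching across the lattice, all three checks reduce to local calculations on the stabilizer generators together with a single check on the logical operators per code block.

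For $\overline{H}$, the construction I would use is transversal: apply the physical Hadamard $H$ to each of the $m$ qubits of the code block in parallel. Because the surface code is self-dual as a CSS code—its $X$- and $Z$-stabilizer supports are exchanged by a reflection of the lattice—the circuit $H^{\otimes m}$ maps the stabilizer group to itself and swaps the logical $\overline{X}$ with $\overline{Z}$, which is precisely the action of $\overline{H}$. This is depth $1$. For $\overline{\CNOT}$, I would apply physical $\CNOT$ transversally between corresponding qubits of two identically laid-out code blocks; the CSS property ensures this maps $X\otimes I \mapsto X\otimes X$ and $I\otimes Z \mapsto Z\otimes Z$ on both stabilizers and logical operators, realizing $\overline{\CNOT}$ in depth $1$.

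The main obstacle is $\overline{S}$: applying $S$ transversally does not preserve the surface code, because $S X S^{\dagger} = Y$ while the code has no $Y$-type stabilizers. The plan here is to follow Moussa's folded-surface-code trick. Fold the planar lattice along a diagonal so that each $X$-type stabilizer is identified with a mirror-image $Z$-type stabilizer on the paired physical qubits. Then apply a constant-depth circuit consisting of single-qubit $S$ (or $S^{\dagger}$) on the fixed points of the fold together with $\CZ$ gates across each folded pair. A direct check on stabilizer generators shows that this circuit maps the stabilizer group to itself, and its action on the logical operators is $(\overline{X},\overline{Z})\mapsto(\overline{Y},\overline{Z})$, which is $\overline{S}$. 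The depth is a universal constant, bounded only by the coordination number of the folded lattice, independent of $m$.

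Assembling these three constructions yields constant-depth implementations of $\overline{H}$, $\overline{S}$, and $\overline{\CNOT}$ as claimed. The only genuinely nontrivial content is the folded-code verification for $\overline{S}$; the $\overline{H}$ and $\overline{\CNOT}$ cases reduce to standard transversality arguments for self-dual CSS codes and are essentially free once one has fixed the lattice geometry used in \Cref{bellprep}.
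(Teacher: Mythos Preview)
The paper does not supply its own proof of this statement; it is simply quoted as a fact with a citation to Moussa. Your sketch is essentially the content of that reference, so you are reproducing the intended argument rather than offering an alternative route.

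There is one genuine imprecision worth flagging in your treatment of $\overline{H}$. For the standard planar surface code, $H^{\otimes m}$ does \emph{not} map the stabilizer group to itself: it exchanges $X$-type and $Z$-type checks, but the supports of those two families are not identical---they are related by the lattice reflection you yourself mention. Hence $H^{\otimes m}$ sends the code to a reflected copy of itself, and to realize the logical Hadamard you must compose with the corresponding qubit permutation (a layer of parallel $\SWAP$s). This is still constant depth, so the conclusion survives, but the assertion that $\overline{H}$ is literally depth $1$ is incorrect. In Moussa's folded presentation this reflection is absorbed into the fold, and $\overline{H}$ becomes transversal $H$ together with $\SWAP$s across paired qubits---which dovetails with the folded $\overline{S}$ construction you describe. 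Your $\overline{\CNOT}$ and $\overline{S}$ sketches are accurate.
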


 In particular, an unencoded Clifford circuit can be transformed to an encoded Clifford circuit on the surface code with only constant depth overhead. If we add local stochastic noise to a Clifford circuit, we can propagate the errors to the end of the circuit.

\begin{fact}
	\label{bgktnoise}
	(Propagating noise, Theorems 17 and 23 in \cite{bgkt}) Suppose we have a quantum circuit with noise rate $p$ that creates multiple $\Rec(s) \ket*{\overline{0}}^{\otimes 2}$ states using \Cref{bellprep}. Then suppose that it performs a depth-$D$ Clifford circuit on these states. The state of the system is equivalent to a {\it noiseless} computation with only one layer of local stochastic noise, $E$, following the circuit such that $E \sim \mathcal{N}(O(p^{2^{-O(D)}}))$.
\end{fact}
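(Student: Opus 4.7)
The plan is to commute every noise layer through the Clifford circuit to the end of the computation, tracking how the local stochastic parameter degrades under propagation. The core tool is the Pauli propagation identity: for any Clifford $C$ and Pauli $E$ we have $CE = (CEC^\dagger)C$, and since the Clifford group normalizes the Pauli group, $CEC^\dagger$ is itself a Pauli. Hence every stochastic noise layer can be systematically pushed past the surface-code-encoded Cliffords until it sits as a single effective noise layer at the output.

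The first step I would prove is a single-layer propagation bound: if $E \sim \mathcal{N}(p)$ precedes a depth-1 Clifford layer $C$, then $CEC^\dagger$ is $O(p^{1/2})$-local stochastic. In a depth-1 layer each qubit participates in at most one two-qubit gate, so a Pauli supported on a single qubit is mapped by $C$ to a Pauli supported on at most two qubits. Therefore, if $F \subseteq \operatorname{Supp}(CEC^\dagger)$ with $|F| = k$, there is some preimage $F' \subseteq \operatorname{Supp}(E)$ with $|F'| \geq k/2$. A union bound over the at most $2^k$ such preimage candidates, combined with $\Pr[F' \subseteq \operatorname{Supp}(E)] \leq p^{|F'|}$, yields $\Pr[F \subseteq \operatorname{Supp}(CEC^\dagger)] \leq (cp^{1/2})^{k}$ for a universal constant $c$.

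Iterating this bound, a noise layer inserted at depth $d$ propagates through $D - d$ Clifford layers and accumulates parameter $O(p^{2^{-(D-d)}})$; in particular the initialization noise from \Cref{bellprep}, which is already local stochastic with parameter $\operatorname{poly}(p)$, ends at parameter $O(p^{2^{-D}})$ after full propagation. The final step is to combine the $D+1$ propagated independent noise layers using a composition lemma for local stochastic noise: the union of supports of $E_1$ and $E_2$ satisfies $\Pr[F \subseteq \operatorname{Supp}(E_1) \cup \operatorname{Supp}(E_2)] \leq \sum_{F_1 \cup F_2 = F} p_1^{|F_1|} p_2^{|F_2|}$, and the dominant (largest) parameter among the layers, namely $O(p^{2^{-D}})$, sets the final rate up to a constant inflation that can be absorbed into the $2^{-O(D)}$ exponent.

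The main obstacle is bookkeeping in the composition step. The $3^{|F|}$ terms from the union bound above threaten to inflate the noise rate unacceptably after $D$ compositions; the standard remedy is to absorb these combinatorial factors by slightly weakening the exponent at each merge, but one has to check that the cumulative weakening still fits inside the $O(p^{2^{-O(D)}})$ target. Managing the constants so that independence of the noise layers (inherited from the stochastic noise model) is correctly invoked, and so that the geometric halving of the exponent dominates the polynomial-in-$D$ losses from composition, is where the calculation is most delicate; modulo these details the conclusion follows.
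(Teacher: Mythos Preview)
The paper does not prove this statement at all: it is stated as a \emph{Fact} with a citation to Theorems~17 and~23 of \cite{bgkt}, and no argument is given in the present paper. So there is nothing in the paper to compare your proposal against beyond the citation itself.

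That said, your sketch is essentially the standard argument from \cite{bgkt}: push each Pauli noise layer through the subsequent depth-$1$ Clifford layers using $CE = (CEC^{\dagger})C$, show that conjugation by a depth-$1$ layer degrades the local-stochastic parameter from $p$ to $O(\sqrt{p})$ (via the ``each output qubit has a preimage of size at most two'' counting you describe), iterate to get the $p^{2^{-O(D)}}$ scaling, and then merge the resulting layers with a union-bound composition lemma. Your identification of the bookkeeping in the composition step as the delicate point is accurate, and the resolution you indicate (absorbing the $3^{|F|}$-type combinatorial blowups into the exponent) is exactly how it is handled. So your proposal is correct and in line with the cited source; it simply goes further than the present paper, which is content to quote the result.
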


Suppose that we have finished performing an encoded Clifford circuit on the surface code as \Cref{bgktnoise} describes. We could express the final Pauli error as $X(v)Z(w)$ where $X(v)$ is a Pauli $X$ only on qubits with their corresponding bit in $v$ set to $1$, and similarly for $Z(w)$. Note that for $Z$-basis measurements only the $X(v)$ term will affect the measurement outcomes.  Measuring one logical qubit, we obtain a length-$m$ bitstring $\mathcal{Y}$ encoding the measurement outcome. The decoding function $\Dec \colon \{0,1\}^m \to \{0,1\}$ selects whichever logical measurement is most likely, and is robust to errors in the $m$ physical measurements:

\begin{fact}
	\label{codeblockcorrect}
	(Lemma 21 in \cite{bgkt}) Suppose that there is only one layer of local stochastic noise $X(v) \sim \mathcal{N}(r)$ with $r \leq 0.01$ which occurs immediately before the measurement of any codeblock. Then
	\begin{equation}
		\Pr_{X(v) \sim \mathcal{N}(r)}[\Dec(x \oplus v) = \Dec(x)] \geq 1 - \exp (- \Omega (m^{1/2}))
	\end{equation}
	for any $x \in \mathcal{L}$.
\end{fact}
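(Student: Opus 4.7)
The plan is to exploit the fact that the 2D surface code on $m$ physical qubits has distance $d = \Theta(m^{1/2})$: the nontrivial logical $X$ operators correspond to horizontal chains of $X$ Paulis that span the planar lattice from boundary to boundary, and similarly for $Z$. Since $x \in \mathcal{L}$, we have $\Dec(x)$ correct; the question is whether $v$ can push $x \oplus v$ into a different logical coset. Because $\Dec$ is a minimum-weight-style decoder (say, minimum-weight perfect matching on the syndrome graph, or the cluster decoder used in \cite{bgkt}), this can happen only if $v$ together with the decoder's correction forms a \emph{nontrivial} element of the first homology of the lattice rel.\ the rough boundary --- equivalently, a path of $X$-errors crossing the lattice. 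Any such path has weight at least $d/2 = \Omega(m^{1/2})$.

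The heart of the proof is then a standard Peierls-type counting argument, adapted to the local-stochastic noise model. For the decoder to fail, $v$ must contain some self-avoiding path $\gamma$ of length $k \ge d/2$ inside the lattice (or, more precisely, a connected error cluster of that size; either version works). The number of self-avoiding walks of length $k$ starting from a fixed qubit is at most $4^k$, so the number of such $\gamma$'s is bounded by $m \cdot 4^k$. By the local-stochastic property, for each fixed $\gamma$ of size $k$ the probability that $\gamma \subseteq \mathrm{Supp}(X(v))$ is at most $r^k$. Union-bounding gives
\begin{equation}
\Pr[\text{decoder fails}] \;\le\; \sum_{k \ge d/2} m \cdot (4r)^k \;\le\; \frac{m \,(4r)^{d/2}}{1 - 4r}.
\end{equation}
For $r \le 0.01$ we have $4r \le 0.04$, so this is $m \cdot \exp(-\Omega(m^{1/2})) = \exp(-\Omega(m^{1/2}))$ as desired.

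The main obstacle is making the ``bad chain'' reduction fully rigorous for the specific decoder used: one needs to argue that whenever the matching/cluster decoder produces a correction $c$ such that $v \oplus c$ is a nontrivial logical operator, the set $\mathrm{Supp}(v)$ itself (not just $\mathrm{Supp}(v \oplus c)$) contains a connected subset of size $\Omega(m^{1/2})$. This is the content of the renormalization-group analysis of Bravyi--Haah or the cluster-decoder analysis used in \cite{bgkt}: one partitions potential error sets into ``levels'' by their diameter and shows that a level-$\ell$ failure requires a connected error component of size roughly $2^\ell$, so the smallest failure-causing event has support at least $d/2$ contained in $\mathrm{Supp}(v)$. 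Once this geometric fact is granted, the counting bound above yields the stated $1 - \exp(-\Omega(m^{1/2}))$ success probability uniformly over the starting codeword $x$.
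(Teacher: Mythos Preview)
The paper does not prove this statement at all: it is stated as a \emph{Fact}, explicitly attributed to Lemma~21 of \cite{bgkt}, and used as a black box in the preliminaries. There is therefore no ``paper's own proof'' to compare your proposal against.

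That said, your sketch is a faithful outline of the standard Peierls/counting argument that underlies the cited result, and you correctly identify the one nontrivial step (that a decoding failure forces a connected error cluster of size $\Omega(m^{1/2})$ inside $\mathrm{Supp}(v)$, not merely inside $\mathrm{Supp}(v \oplus c)$). For minimum-weight decoding the usual way to finish is slightly different from what you wrote: if the decoder fails then $v \oplus c$ is a nontrivial logical of weight $\ge d$, and since $|c| \le |v|$ at least half of that logical string lies in $\mathrm{Supp}(v)$; one then union-bounds over logical strings rather than over arbitrary self-avoiding walks. Either way the arithmetic you carried out (with $4r \le 0.04$) gives the claimed $\exp(-\Omega(m^{1/2}))$ bound.
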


This fact says that when the layer of local stochastic noise is below a constant threshold of $0.01$, then $\Dec$ will successfully decode the surface code measurement outcome for any $x$ in the codespace.

\subsection{Generic Graph State Measurement Problem}
\label{sec:problem_statement}

All of the problems we consider fit into the following framework. 
\begin{problem}[$k$-Round Graph State Measurement problem] \label{prob:interactive_def}
Let $k \geq 1$ be an integer. Let $\{ G_n = (V_n, E_n) \}_{n \geq 1}$ be a uniform family of graphs, where $G_n$ has $|V_n| = O(\poly(n))$ vertices. Furthermore, for each graph $G_n$, suppose the vertices are colored with $k$ colors, i.e., there exists $\chi_n \colon V_n \to [k]$ for all $n \geq 1$. Each choice of $k$, $\{ G_n \}$, and $\{ \chi_n \}$ defines a problem within this framework. 

Let $\CZ(i,j)$ denote a controlled-$Z$ gate on qubits $i$ and $j$. The problem is to prepare the graph state $\ket{G_n}$, where
$$
\ket{G_n} := \prod_{(i,j) \in E_n} \CZ(i,j) \ket{+}^{\otimes \left| V_n \right|},
$$
and then measure the vertices in $k$ rounds of interaction. In the $i$th round, measurement bases (either $X$ or $Y$) are provided for all vertices of color $i$ (i.e., $\chi^{-1}(i)$), and the device is expected to output corresponding measurement outcomes (either $+1$ or $-1$) for each such vertex. At the end of $k$ rounds, the device succeeds if measuring $\ket{G_n}$ in the input measurement bases could generate (with non-zero probability) the output measurement outcomes.
\end{problem}
Within this paper, we will consider only non-interactive relation problems ($k=1$, following \cite{bgk, bgkt}, etc.), and two-round interactive problems ($k=2$, following \cite{gs}) where the respective graphs are defined in the literature. As a result, we will avoid precisely defining the families of graphs and rely on the following properties of the graphs:
\begin{itemize}
    \item The family of graphs is efficiently constructible and uniform (i.e., all graphs are constructed by the same machine). In other words, we will assume any basic processing of the graph (e.g., enumerating the vertices, determining adjacency, etc.) is not a bottleneck in our complexity reductions. 
    \item The maximum degree of the vertices is $O(1)$ to allow constant-depth ($\QNC^{0}$) construction of $\ket{G_n}$. 
    \item The state $\ket{G_n}$ may be used as a resource for MBQC (measurement-based quantum computation) for some family of quantum circuits that \emph{is} precisely defined. 
\end{itemize}

Additionally, all of the problems that we describe can be solved by families of \emph{classically-controlled} Clifford circuits. A \emph{classically-controlled} Clifford circuit is a Clifford circuit which receives classical input $x \in \{0,1\}^n$ and has the property that each gate in the circuit is uniquely controlled by a single bit of the classical input $x$. That is, the gate indexed by $i$ is activated iff $x_i=1$.

With this definition in hand, we make the following straightforward claim. 

\begin{theorem}
Any $k$-Round Graph State Measurement problem on graphs $\{ G_n \}_{n \geq 1}$ with $O(1)$ maximum degree is solved by a family of $O(1)$-depth, classically-controlled Clifford circuits. Furthermore, the circuits are uniform (i.e., generated by a fixed Turing machine, say, given the input $n$) if the family of graphs is uniform. 
\end{theorem}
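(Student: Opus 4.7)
The plan is to construct the required circuit in two stages, graph-state preparation followed by classically-controlled basis rotations, and to argue each stage has constant depth and fits the classically-controlled framework.

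For the preparation stage, recall that $\ket{G_n}$ is obtained by applying $\CZ$ gates along every edge of $G_n$ to $\ket{+}^{\otimes |V_n|}$. Since the $\CZ$ gates commute, the only obstacle to depth $O(1)$ is edge conflict: two $\CZ$'s sharing a qubit cannot be executed simultaneously. Because $G_n$ has maximum degree bounded by a constant $\Delta$, I would invoke Vizing's theorem to partition $E_n$ into $\Delta+1 = O(1)$ matchings $M_1,\dots,M_{\Delta+1}$, and then apply, in parallel, all $\CZ$ gates in $M_i$ during layer $i$. Preparation of $\ket{+}^{\otimes|V_n|}$ is a single layer of Hadamards on the all-zero input. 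To fit the classically-controlled format, I would extend the classical input $x$ with a fixed set of padding bits hard-wired to $1$, and attach each preparation gate to one of those always-on bits; since the activation pattern does not depend on the actual input, this is without loss.

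For each measurement round $i$, the classical input specifies, for every vertex $v \in \chi_n^{-1}(i)$, a basis bit $b_v \in \{0,1\}$ interpreted as ``measure in $X$'' or ``measure in $Y$''. A standard way to realize a $P$-basis measurement is to conjugate by a single-qubit Clifford that maps $P$ to $Z$ and then measure in the computational basis: use $H$ for an $X$-measurement and $S^\dagger H$ (or equivalently $HS$) for a $Y$-measurement. To express this as a classically-controlled Clifford, I would first apply $H$ on qubit $v$ controlled by an always-on padding bit, and then apply $S^\dagger$ on qubit $v$ controlled by $b_v$ (so $S^\dagger$ is present exactly when $b_v=1$, yielding the $Y$-basis rotation), followed by a standard $Z$-basis measurement. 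Each of these is depth $1$ per round, and the gate controlling each single-qubit Clifford is uniquely attached to one bit of the augmented classical input, exactly as the definition of a classically-controlled Clifford circuit requires.

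Putting the stages together gives a circuit of depth $\Delta+1 + O(k) = O(1)$, since $k$ is a fixed constant for the problems considered. Correctness is immediate: the preparation stage produces $\ket{G_n}$ exactly, and applying the classically-controlled basis rotations followed by $Z$-measurement samples precisely from the distribution of measuring $\ket{G_n}$ in the specified bases, so the output is an accepting transcript with probability $1$. For uniformity, note that the only input-dependent data needed to emit the circuit is the edge list of $G_n$, a fixed edge coloring, and the coloring $\chi_n$; if the family $\{G_n\}$ is generated by a single Turing machine, a greedy algorithm (or any deterministic procedure) constructs a valid $(\Delta+1)$-edge-coloring from that description, and assembling the classically-controlled gate list is straightforward bookkeeping.

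The main technical content is really just the edge-coloring step that converts bounded degree into constant depth; everything else is mechanical. The only mildly subtle point is the requirement that every gate be activated by exactly one input bit, which I handle by padding the input with hard-coded $1$'s so that deterministic ``always-on'' gates (the preparation gates and the unconditional $H$ in each measurement round) fit the format without modifying the semantics of the problem.
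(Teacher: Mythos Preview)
Your proposal is correct and, in fact, considerably more detailed than the paper's own treatment, which simply asserts that the theorem ``follows almost by definition of the problem itself'' and points to the schematic circuit figure. The edge-coloring argument via Vizing's theorem to schedule the $\CZ$ layers, the padding of the classical input with always-on bits to accommodate the unconditional gates, and the uniformity discussion are all sound and make explicit what the paper leaves implicit.

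One small technical slip: in your measurement gadget you apply $H$ first and then a controlled $S^\dagger$, giving the operator $S^\dagger H$ on the state before the $Z$-measurement. But $(S^\dagger H)^\dagger Z (S^\dagger H)=H S Z S^\dagger H = H Z H = X$, so this still implements an $X$-measurement regardless of $b_v$. The fix is simply to swap the order---apply the controlled $S$ (or $S^\dagger$) \emph{before} the unconditional $H$, so that the combined operator is $HS$ (respectively $HS^\dagger$), which indeed conjugates $Z$ to $\pm Y$. You already list $HS$ as an alternative, so this is a one-line correction and does not affect the argument.
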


The theorem above follows almost by definition of the problem itself.  \Cref{fig:quantum_interactive_task} depicts a circuit for a general $2$-round protocol. In the $2$-Round Graph State Measurement problem there is no garbage (all qubits are eventually measured), but otherwise a circuit for the task looks very much like \Cref{fig:quantum_interactive_task}. 

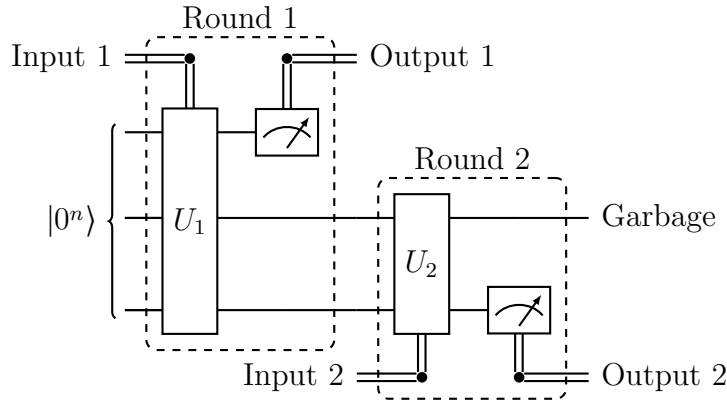
\begin{figure}[ht]
	\centerline{
		\begin{quantikz}
	\lstick{Input 1} & \hackcontrol{}\vcw{1}\cw \gategroup[wires=4,steps=2,style={dashed,inner sep=2pt,rounded corners}]{Round 1}  &  \hackcontrol{}\vcw{1}  & \cw \rstick{Output 1} \\
	\lstick[wires=3]{$\ket{0^n}$} & \gate[3]{U_1} & \meter{} & \\
	\qw & \qw & \qw & \qw & \gate[2]{U_2} \gategroup[wires=3,steps=2,style={dashed,inner sep=2pt,rounded corners}]{Round 2}  &  \qw & \qw \rstick{Garbage} \\
	\qw & \qw & \qw & \qw &  & \meter{} \\
	& & & \lstick{Input 2} & \hackcontrol{} \vcw{-1}\cw & \hackcontrol{} \vcw{-1} & \cw \rstick{Output 2} 
\end{quantikz}
	}
	\caption{General form of a quantum circuit for a $2$-round interactive protocol.}
	\label{fig:quantum_interactive_task}
\end{figure}

\subsection{Relation and Interactive problems}
\label{sec:rel_and_int}

As discussed above, we will only consider problems with one or two rounds of interactivity. A one-round protocol corresponds to what would ordinarily be called a \emph{relation problem}, i.e., given an input $x \in \{ 0, 1 \}^n$, produce an output $y \in \{ 0, 1 \}^{m}$ (where $m$ is polynomially related to $n$) such that $(x, y)$ belongs to some relation $R \subseteq \{ 0, 1 \}^n \times \{ 0, 1 \}^{m}$. In this paper, for all $x \in \{0,1\}^n$ in the promise there exists some $y \in \{ 0, 1 \}^m$ satisfying the relation, but we note that it may not be unique (for us, often \emph{half} the strings satisfy the relation). We use the notation $R(x,y) = 1$ to indicate success (i.e., $(x,y)$ satisfies the relation problem), otherwise $R(x, y) = 0$. 

The two-round tasks are back-and-forth interactions between a question-asker (challenger) and a computational device. The challenger provides the first input $x_1$, and the device replies with its first output $y_1$. The challenger provides a second input, $x_2$, possibly dependent on $y_1$, and the device is expected to reply with some output $y_2$. The sequence of back-and-forth messages $(x_1, y_1, x_2, y_2)$ is known as the \emph{transcript}. The problem itself is defined by a set of transcripts $T$, and the device succeeds if the final transcript belongs to $T$.  

For the purposes of complexity theoretic reductions, we treat the device as an \emph{oracle with state}, in a very natural way. We feed the first round input, it returns the first round output, then the next input it treats as the second round input (i.e., continuing the previous interaction, making use of its internal state), and answers with the second round output. After the last round of input/output, feeding another input to the oracle will cause it to restart the whole protocol again from the beginning. 

\subsubsection{Rewind Oracle}
\label{sec:rewindoracle}

Ordinarily, oracles are treated as black boxes in reductions. However, the results of Grier and Schaeffer \cite{gs} depend on a pivotal difference between a \emph{classical} implementation of an interactive oracle and a \emph{quantum} implementation. Crucially, when a classical device solves an interactive problem, the internal state of the device may be duplicated at any time during the protocol, and later restored to that point. That is, in addition to restarting the interactive protocol afresh, we have the option to \emph{rewind} it to any previous point in the computation, and perhaps continue with different inputs. It turns out that for many classical complexity classes, if the oracle belongs to that class, then so does the oracle with rewinding.

\begin{fact}
Suppose $\mathcal{O}$ is an oracle for an interactive problem, and let $\mathcal{R}$ be the \emph{rewind oracle} obtained by giving rewind capability to $\mathcal{O}$. If $\mathcal{O}$ is implemented by a device in some classical complexity class $\mathcal{C}$, then $\mathcal{R}$ is also in $\mathcal{C}$ as long as $\mathcal{C}$ is one of $\AC^0, \AC^0[p], \TC^0, \NC^1, \cL$.
\end{fact}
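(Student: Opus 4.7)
\noindent
The plan is to show that simulating a rewind oracle $\mathcal{R}$ amounts to re-executing the underlying oracle $\mathcal{O}$ on a re-supplied transcript prefix, and that this re-execution stays within each of the listed classes. The proof is essentially an exercise in closure under constant-round self-composition, with a single subtle point about randomness.

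First, I would put $\mathcal{O}$ in a normal form. Fold all of $\mathcal{O}$'s randomness into a single random tape $r$ sampled once at the start of the protocol; then the round-$i$ output is a deterministic function $y_i = f_i(x_1,\ldots,x_i; r)$ of the inputs seen so far and $r$. The rewind oracle $\mathcal{R}$ can then be implemented as follows: sample $r$ once up front, remember it, and on any query supplying a (possibly rewound) prefix $(x_1,\ldots,x_i)$, return $f_i(x_1,\ldots,x_i; r)$. Because $r$ is frozen, any replay of the same prefix returns the same answer, so this implementation exactly matches the rewind semantics that a caller running a classical reduction would see.

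Second, I would argue that each $f_i$ lies in $\mathcal{C}$. Since $\mathcal{O}\in\mathcal{C}$ and $f_i$ is obtained by running $\mathcal{O}$ for the first $i$ rounds on inputs $(x_1,\ldots,x_i)$ with random tape $r$ and extracting the $i$-th output, one may treat $(x_1,\ldots,x_i,r)$ as the input of a single $\mathcal{C}$-computation. For the constant-depth circuit classes $\AC^0$, $\AC^0[p]$, $\TC^0$, the round circuits are constant depth and the cascade of $i$ rounds (with $i\le k=O(1)$) is again constant depth. For $\NC^1$, each round circuit has depth $O(\log n)$ and a constant-round cascade still has depth $O(\log n)$. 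For $\cL$, a log-space machine can simulate another log-space machine round by round, feeding the appropriate prefix of its read-only random tape and inputs, and log-space computations compose in $\cL$.

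Third, I would assemble $\mathcal{R}$ by noting that every query handed to $\mathcal{R}$ triggers exactly one invocation of the appropriate $f_i$-procedure with the stored tape $r$; the caller in a reduction makes at most polynomially many such queries, and each individual answer is produced by a $\mathcal{C}$-computation, so $\mathcal{R}$ remains in $\mathcal{C}$. The main conceptual obstacle, and the reason this fact is nontrivial, is handling randomness coherently: the rewind semantics require that replaying the same prefix yields the same answer, which forces the implementation to commit once to $r$. This is exactly the classical/quantum asymmetry exploited elsewhere in the paper---classical randomness is a passive resource that can be frozen and re-read across every rewound branch, whereas a quantum device would need to reproduce already-collapsed measurement outcomes, which is impossible in general. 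Once randomness is handled this way, the remaining content of the fact is just the elementary closure of $\AC^0$, $\AC^0[p]$, $\TC^0$, $\NC^1$, and $\cL$ under constant-round self-composition.
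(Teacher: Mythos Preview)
Your proposal is correct and follows essentially the same idea as the paper's proof sketch: classical internal state can be duplicated, so rewinding reduces to re-running the device. The paper phrases this as saving the gate outputs (or tape contents) at the rewind point and resuming from there, whereas you freeze the random tape once and re-execute from scratch on each prefix; these are equivalent implementations, and your explicit handling of randomness is a detail the paper's terse sketch omits.
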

\begin{proof}[Proof sketch]
    For the case of a circuit, one can simply fix the outputs of the gates up to some point and use these outputs as additional input to simulate subsequent points of the computation multiple times. For a Turing machine, one copies the contents of the tape up to some point and restarts the TM from these contents to simulate subsequent points of the computation multiple times.
\end{proof}

In the context of $2$-round graph state measurement, the oracle's answers represent measurement outcomes for a round of qubits. With a rewind oracle, we can measure the second round qubits and then rewind back to the end of the first round. We are then free to choose a different basis (i.e., change the second round input) and measure again, as many times as we choose. This is a powerful ability which an actual quantum device does not have since, in general, it is impossible to make two non-commuting measurements on a state, even if the measurements are known in advance. We formally state all of our hardness reductions in terms of the rewind oracle, meaning that the problem is only ($\NC^{1}$, $\parityL$, etc.)-hard for classical devices, since a quantum solution does not imply a rewind oracle. 

Last, although rewind oracles appear to give the power to measure the second round qubits in arbitrary bases, the outcomes are not truly \emph{samples} from the distribution of measurement outcomes. Because of our definition of interactive protocols (and the $k$-round graph state measurement problem), we can only claim that oracle returns a \emph{possible} outcome, not necessarily with the same probability, or perhaps not even random at all! 

In fact, we must assume the oracle produces adversarial outputs, designed to thwart our hardness reductions at every turn. For example, suppose we wish to distinguish between two non-orthogonal states. For any measurement, there is some outcome consistent with both states, and in the worst case the oracle will always give us that outcome. \cite{gs} combats this in two ways: First, they use demonstrations of non-contextuality (specifically, the magic square and magic pentagram games) to force the oracle to reveal some information about the state. Second, they use self-randomization subroutines to conceal the original query within a uniformly random input query, so that the oracle cannot tell which pair of states we want to distinguish, and thus cannot reliably choose an outcome common to both. See \cite{gs} for a full description of this process; we re-use many of their ideas, but make changes in our proof of \Cref{parityLmain} to increase the input query randomization. We discuss the motivation behind this change in the corresponding sections. 

\subsection{Measurement-based Quantum Computation}
\label{sec:mbqc}

Many of the results in \cite{gs} are based on the hardness of computing the final state (or even distinguishing between two possible final states) in some depth-$\Omega(n)$ Clifford circuit. However, we need the simulation to be performed by a constant-depth quantum circuit, so we use the following theorem from measurement-based quantum computation (MBQC) \cite{rb, rbb}:

\begin{theorem} \label{thm:mbqc}
    Fix the \emph{layout} of a circuit of one- and two-qubit gates, i.e., a circuit diagram where all the gates are placeholders to be replaced with a concrete one- or two-qubit unitary later. Let us map each placeholder gate to a \emph{gadget} (details omitted, see \cite{rb, rbb} or \cite{gs}), in the form of a constant-size graph, and connect these gadgets together (as they are connected in the circuit) to form a graph $G$.
    
    For any concrete unitary $U$, there exists a set of measurement bases for the qubits of the corresponding gadget, such that measuring in those bases performs $PU$ for some Pauli operation $P$ depending on the measurement outcomes. That is, if we plug in unitaries to get a circuit $U_k \cdots U_1$ and measure the gadget qubits in the appropriate bases, then the remaining qubits are in state $P_k U_k \cdots P_1 U_1 \ket{+^n}$ for Pauli operations $P_1, \ldots, P_k$. Moreover, if the unitary operations are Clifford then $X$- and $Y$-basis measurements suffice.
\end{theorem}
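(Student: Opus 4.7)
The plan is to invoke the standard Raussendorf--Briegel measurement-based quantum computation construction and verify the Clifford specialization. The starting primitive is \emph{one-bit teleportation}: if a data qubit in state $\ket{\psi}$ is coupled to an ancilla $\ket{+}$ by a $\CZ$ gate and the data qubit is then measured in the $X$-basis with outcome $s \in \{0,1\}$, the ancilla is left in state $H Z^{s} \ket{\psi}$. Chaining such teleportations along a path of cluster qubits implements an alternating sequence of Hadamards up to trackable Pauli byproducts; measuring instead in the rotated basis $\{R_z(\theta)\ket{+}, R_z(\theta)\ket{-}\}$ implements $H R_z(\theta)$ up to a Pauli correction determined by the outcome.

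First I would assemble a gadget library. For any single-qubit unitary $U$, an Euler decomposition of the form $U = H R_z(\gamma_3)\, H R_z(\gamma_2)\, H R_z(\gamma_1)$ (absorbing any global phase) is realized by a short chain of cluster qubits whose measurement angles are $\gamma_1, \gamma_2, \gamma_3$, with adaptive corrections based on earlier outcomes so that all prior Pauli byproducts commute through the desired rotations. For two-qubit gates, $\CZ$ needs only a direct graph edge between the two wires (since $\CZ$ is diagonal and commutes through subsequent one-bit teleportations up to Pauli byproducts), and $\CNOT$ is obtained by sandwiching a $\CZ$ with Hadamard gadgets. Wiring these constant-size subgraphs together according to the circuit layout defines $G$. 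Correctness then follows by induction on the number of gates: after measuring every qubit inside the first $j$ gadgets, the still-unmeasured qubits on the gadget boundaries are in the state $P_j U_j \cdots P_1 U_1 \ket{+^n}$, where each $P_i$ is a Pauli determined by the outcomes inside gadget $i$. The inductive step combines the teleportation identity with the commutation relations $U_i P = P' U_i$ (for a suitable new Pauli $P'$) to push the accumulated byproducts across $U_i$.

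The only genuinely new verification is that $X$- and $Y$-basis measurements suffice when every $U_i$ is Clifford. Every Clifford decomposes into $H$, $S$, and $\CZ$, and in the single-qubit gadget the corresponding measurement angles lie in $\{0, \pi/2\}$ modulo $\pi$, which are exactly the $X$- and $Y$-bases; the $\CZ$ gadget contributes only a graph edge and requires no basis choice at all. The adaptive angle adjustments needed to commute earlier Paulis past a Clifford gate act on $\{0, \pi/2\}$ by sign flip modulo $\pi$, preserving the $\{X, Y\}$ constraint throughout the protocol. The main ``obstacle'' is ultimately bookkeeping---carefully tracking which Pauli byproducts arise, how they commute through subsequent gadgets, and confirming that the adaptive angles remain in $\{0, \pi/2\}$ whenever the inputs are Clifford---rather than any new conceptual ingredient beyond the standard MBQC machinery of \cite{rb, rbb}.
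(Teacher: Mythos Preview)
The paper does not actually prove this theorem; it is quoted as background from the measurement-based quantum computation literature, with details explicitly deferred to \cite{rb, rbb} and \cite{gs}. Your sketch is a faithful outline of exactly that standard Raussendorf--Briegel construction (one-bit teleportation, Euler-angle single-qubit gadgets, $\CZ$ as a graph edge, induction over gadgets, and the observation that Clifford angles land in $\{0,\pi/2\}$), so it matches the paper's intended proof. One minor remark: for the Clifford case you do not need any adaptive angle adjustments at all, since Pauli byproducts propagate through Cliffords as Paulis and can be tracked purely classically; your closure argument for $\{0,\pi/2\}$ is correct but slightly more than is needed.
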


In other words, we can simulate any Clifford circuit with a particular layout by measuring a constant-degree graph, and thus with a constant-depth circuit. The only catch is that there are Pauli errors throughout the circuit. Fortunately, the Clifford group normalizes the Pauli group by definition, so each Pauli operation can be ``pushed'' through the circuit at the expense of some computation. Unfortunately, pushing all the error to the end of the circuit is effectively as hard as simulating the circuit outright, and therefore not practical in our hardness reductions.

Finally, we note that MBQC works as expected with interactive protocols. In our interactive protocols, for instance, we let nearly all of the gates be in the first round, and only a handful, comprising a constant-depth circuit at the end, into the second round. Then for all the gates in a round, we measure the qubits of their gadgets appropriately, computing the state $\ket{\psi} := P_2 U_2 P_1 U_1 \ket{+}^{n}$ where $U_1, U_2$ are the Clifford unitaries represented by the two parts of the circuits, and $P_1, P_2$ are Pauli operations. Since $U_1$ is a deep Clifford circuit, we are unable to compute $P_1$ in our reductions, but we \emph{can} compute $P_2$ since it is easy to push Pauli operations through a constant-depth circuit such as $U_2$. In the second round we also measure $\ket{\psi}$ in, say, the $X$-basis (this is part of the second round), which means we can tell which outcomes $P_2$ flipped, and translate the actual measurement outcomes to outcomes for the state $U_2 P_1 U_1 \ket{+}^n$. 

Thus, we can think about our interactive protocols like this: the first round input specifies Clifford gates for the placeholders in some layout. The first round output determines $P_1$, but not in an easily decodable way, not unlike a cryptographic commitment. The second round input specifies more gates, but since they have constant depth, they are used only to change the measurement basis. Finally, the second round output tells us, with some constant-depth classical processing, the result of measuring $P_1 U_1 \ket{+}^n$ in the measurement basis specified in the second round input. With a rewind oracle, this gives us the power to measure $P_1 U_1 \ket{+}^n$ in different bases with the \emph{same} unknown $P_1$ each time.

\section{The noisy extension and \texorpdfstring{$\AC^0$}{AC0} separation}

In this section, we review the noisy relaxation of the $1$-Round Graph State Measurement Problem of \cite{bgkt} called the \emph{noisy extension}. We will revisit the main results of that paper, and show how their separation between noisy $\QNC^0$ and $\NC^0$ can be extended to a separation between noisy $\QNC^0$ and $\AC^0$ using the results of Bene Watts et al.\ \cite{bkst}.  The results of this section are not independent from our results for interactive Clifford simulation tasks since the noisy extension will play a critical role there, as well.

\subsection{The noisy extension}

Suppose that we have a relation problem defined by $R \subseteq \{0,1\}^* \times \{0,1\}^*$ that is solved with certainty by a classically-controlled Clifford circuit $C$. On input $x \in \{0,1\}^n$, the circuit is tasked with finding a $y$ such that $(x,y) \in R$, or in other words, it is tasked with solving the reation problem defined by $R$. Let us name this circuit on input $x$ as $C_x$. The computation begins with multiple copies of $\ket{0}$, applies $C_x$, and measures all qubits as output. To make this relation noise-tolerant, we convert $R$ to its {\it noisy-extended} version. The noisy-extended version is defined using the 2D surface code in \Cref{sec:surface_code}. Its effect on a relation problem is the following: For some input $x$, the set of $y$ such that $(x,y) \in R$ is enlarged to the set of $\mathcal{Y}$ that decode to $y$. 

However, recall that the procedure for basis state preparation on the surface code incurs an additional Pauli operator $\Rec(s)$. Fortunately, the effect of this Pauli operator on the overall quantum computation can be propagated through the circuit. Consider the classically-controlled Clifford circuit on input $x$, $\overline{C_x}$. Since the Clifford group normalizes the Pauli group, we can define $f(s,x)$ and $h(s,x)$ by

\begin{equation}
	\label{f}
	X(f)Z(h) \sim \overline{C_x} \Rec(s) \overline{C_x}^{\dagger}
\end{equation}
where $f=f^1 \ldots f^n$ and each $f^i$ is $m$ bits describing the Pauli $X$ operator on the $i$'th codeblock of the circuit. We are now ready to define the noisy-extended relation.

\begin{definition}
	\label{noisyextended}
	The \emph{noisy-extended relation} $R'$ associated with relation $R$ is defined as
	
	\begin{equation}
		R'(x, (\mathcal{Y},s)) = \begin{cases*}
      					1 & if $R(x, y) = 1$ for $y_i = \Dec(\mathcal{Y}^i \oplus f^i(s,x)) \quad \forall \, i \in [n]$ \\
      					0 & otherwise
   				    \end{cases*}
	\end{equation}
\end{definition}

As a sanity check, let us first show that if a {\it noiseless} quantum circuit solves a relation problem $R$ with certainty, then the noiseless quantum circuit encoded by the 2D surface code solves the noisy-extended relation problem $R'$ with certainty. Through the basis state preparation procedure (\Cref{bellprep}), the circuit prepares the state $(\Rec(s^1) \otimes \ldots \otimes \Rec(s^n))\ket*{\overline{0}^n} = \Rec(s)\ket*{\overline{0}^n}$ obtaining syndrome outputs $s = s^1 \ldots s^n$. It then performs a classically-controlled Clifford circuit (on input $x$) $\overline{C_x}$ in constant-depth (\Cref{constantdepthclifford}) and measures the output $\mathcal{Y}$ with the property 
\begin{equation}
	| \bra{\mathcal{Y}} \overline{C_x} \Rec(s) \ket*{\overline{0}^n} |^2 > 0.
\end{equation}
By propagating the Pauli $\Rec(s)$ over Clifford circuits using \Cref{f}, we have that $| \bra{\mathcal{Y}} X(f)Z(h)\overline{C_x}  \ket*{\overline{0}^n} |^2 > 0$. Since Pauli $Z$ operators have no effect on $Z$-basis measurement, we get 
\begin{equation} 
	| \bra{\mathcal{Y} \oplus f(s,x)} \overline{C_x} \ket*{\overline{0}^n} |^2 > 0.
\end{equation}
Finally, because $\overline{C_x}$ is the encoded version of $C_x$, the bit string $y$ defined by 
\begin{equation}
	\label{decoding}
	y_i = \Dec(\mathcal{Y}^i \oplus f^i(s,x)) \; \forall i \in [2n]
\end{equation} 
satisfies the original relation $R$, and therefore the output of the entire procedure (i.e., $(\mathcal Y, s)$) satisfies the noisy extended relation $R'$.  Of course, the whole point of defining the noisy extension is to show that the outputs of a noisy quantum circuits still satisfy it:

\begin{theorem}[Noise-tolerance of noisy extension, Theorem 17 in \cite{bgkt}]
\label{blackboxnoise}
	 Suppose a constant-depth classically-controlled Clifford circuit solves a relation problem. Then, there exists another constant-depth classically-controlled Clifford circuit solving its corresponding noisy extended relation problem with probability at least $1 - \exp(-\Omega (m^{1/2}))$ when the noise rate is below some constant constant threshold.
\end{theorem}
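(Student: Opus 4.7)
The plan is to assemble the pieces already laid out in \Cref{sec:surface_code} into a single construction and analysis. Given a constant-depth classically-controlled Clifford circuit $C_x$ solving $R$, I would build the encoded circuit $\overline{C_x}$ that (i) prepares $\Rec(s)\ket*{\overline{0}^n}$ using the basis-state preparation procedure of \Cref{bellprep}, emitting the syndrome bits $s$ as part of the classical output; (ii) applies the encoded Cliffords $\overline{H}, \overline{S}, \overline{\CNOT}$ in constant depth via \Cref{constantdepthclifford}, using the same classical control bits of $x$ as $C_x$ did (so the overall circuit remains classically-controlled and of constant depth); and (iii) measures every data qubit in the $Z$-basis to output $\mathcal{Y}$. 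The noiseless correctness calculation is already spelled out in the excerpt: pushing $\Rec(s)$ through $\overline{C_x}$ via \Cref{f} replaces the output with $X(f)Z(h)\overline{C_x}\ket*{\overline{0}^n}$; $Z(h)$ is invisible to $Z$-basis measurement, and $X(f)$ produces exactly the twist $\mathcal{Y}^i \oplus f^i(s,x)$ appearing in \Cref{noisyextended}.

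The noisy case then proceeds by collapsing all intermediate noise into a single final layer. This is precisely the content of \Cref{bgktnoise}: the constant-depth $\overline{C_x}$, combined with the noisy basis-state preparations, is equivalent to a noiseless execution followed by one layer of local stochastic noise $E \sim \mathcal{N}(r)$ with $r = O(p^{2^{-O(D)}})$, where $D$ is the constant depth of $\overline{C_x}$ and $p$ is the underlying noise rate. Write $E = X(v)Z(w)$; since only the $X$ component affects $Z$-basis outcomes, the corrupted output is $\mathcal{Y} \oplus v$, with the codeblock restriction $v^i$. A direct application of the definition of local stochastic noise (take $F$ to be a subset of a single codeblock) shows that each $X(v^i)$ is itself $r$-local stochastic. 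Choosing $p$ below a small enough constant ensures $r \leq 0.01$, which matches the hypothesis of \Cref{codeblockcorrect}.

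Now \Cref{codeblockcorrect} applies to each codeblock: the noiseless vector $\mathcal{Y}^i \oplus f^i(s,x)$ lies in $\mathcal{L}$, so $\Pr[\Dec((\mathcal{Y}^i \oplus v^i) \oplus f^i(s,x)) \neq \Dec(\mathcal{Y}^i \oplus f^i(s,x))] \leq \exp(-\Omega(m^{1/2}))$. A union bound over the $n$ codeblocks gives a total failure probability of $n \cdot \exp(-\Omega(m^{1/2}))$, which remains $\exp(-\Omega(m^{1/2}))$ provided $m$ grows sufficiently quickly in $n$ (e.g.\ $m = \omega(\log^2 n)$). On the complementary event, every codeblock decodes to the value it would have taken in the noiseless encoded run, so the decoded $y$ satisfies $R(x,y) = 1$ and hence $R'(x,(\mathcal{Y}, s)) = 1$.

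The main obstacle I anticipate is the bookkeeping around the local stochastic noise: one must check carefully that restricting the global error to a codeblock preserves the $r$-local-stochastic bound (it does, since the quantifier in the definition is over \emph{all} $F \subseteq [n]$), and that the per-codeblock failure events can be handled by a union bound even though the errors across codeblocks are correlated (the union bound needs only the marginal probabilities, so correlation is harmless). A secondary subtlety is that $R$ may admit many valid $y$ for a fixed $x$: this is transparent because $\Dec$ deterministically recovers the specific $y$ that the noiseless $\overline{C_x}$ would have produced, and that $y$ is guaranteed to be a valid witness by the hypothesis on $C$.
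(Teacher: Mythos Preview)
Your proposal is correct and follows essentially the same approach as the paper (which does not give its own proof, since the theorem is cited from \cite{bgkt}): the paper's ``sanity check'' paragraph immediately preceding \Cref{blackboxnoise} already lays out the noiseless analysis exactly as you do, and the noisy case is precisely the intended assembly of \Cref{bellprep}, \Cref{constantdepthclifford}, \Cref{bgktnoise}, and a per-codeblock application of \Cref{codeblockcorrect} followed by a union bound. Your handling of the two subtleties---that restricting a $p$-local stochastic error to a codeblock remains $p$-local stochastic, and that the union bound only needs marginals---is accurate and matches the reasoning in \cite{bgkt}.
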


\subsubsection{\texorpdfstring{$\mathsf{AC^0}$}{AC0} decoding circuit}

The noisy extension allows a noisy constant-depth quantum circuit to solve the modified relation problem with high probability.  However, in order to demonstrate a separation between the quantum and classical circuits' capabilities, we need to argue that the noisy extended relation problem remains hard for classical circuits. Our goal is to show that any $\qAC^0$-capable device can actually decode measurements of the physical qubits into measurements of the logical qubits.  Therefore, any such device which can solve the noisy extended relation problem can also solve the original relation problem.  This allows us to port previous noiseless separations to the noisy setting.  

Recall that the encoded quantum circuit will output $\mathcal{\mathcal{Y}}$ and $s$ such that the bitstring $y$ defined by \Cref{decoding} is the output of an unencoded circuit. We show that decoding can be carried out by an $\exp(m, m_{anc})$-size $\mathsf{AC^0}$ circuit.\footnote{In fact, this size is nearly optimal for $\AC^0$ circuits because computing the parity of $n$ bits reduces (using only projections) to $\Dec$ on $O(n^2)$ bits.} The construction is shown in \Cref{decgadget}.

\usetikzlibrary{arrows, automata}

\begin{figure}[ht]
	\centering
    \begin{tikzpicture}[
            > = stealth, 
            shorten > = 1pt, 
            auto,
            node distance = 3cm, 
            semithick 
        ]

        \tikzstyle{every state}=[
            draw = black,
            thick,
            fill = white,
            minimum size = 4mm
        ]

	\node (x) at (-1, 0) {$x$};
	\node (s) at (-1, 1) {$s$};
	\node (Y) at (-1, 2) {$\mathcal{Y}$};
	
	\node[state] (rec) at (1, 1) {$\Rec$};
	\path[->] (s) edge node {} (rec);
	
	\node[state] (f) at (3.5, 0) {$f$};
	\path[->] (rec) edge node[sloped] {$\Rec(s)$} (f);
	\path[->] (x) edge node {} (f);
	
	\node[state] (xor) at (6, 1) {$\oplus$};
	\path[->] (Y) edge node {} (xor);
	\path[->] (f) edge node[sloped, below] {$f^i(s,x)$} (xor);
	
	\node[state] (dec) at (10, 1) {$\Dec$};
	\path[->] (xor) edge node {$\mathcal{Y}^i \oplus f^i(s,x)$} (dec);
	
	\node (y) at (14,1) {};
	\path[->] (dec) edge node {$\Dec(\mathcal{Y}^i \oplus f^i(s,x))$} (y);
	
\end{tikzpicture}
\caption{Decoding gadget} \label{decgadget}
\end{figure}
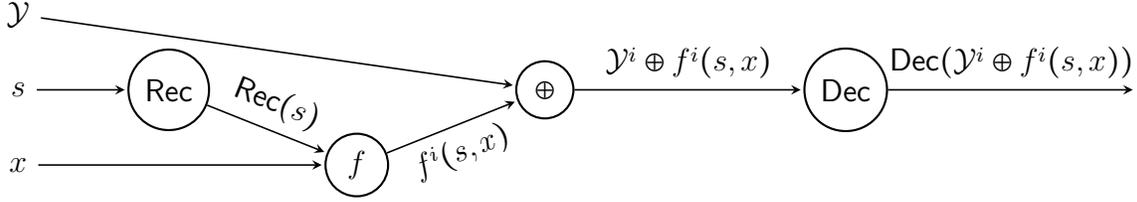

\begin{lemma}
	\label{decode}
	Given $0/1$ bitstrings $\mathcal{Y}$, $s$, and $x$ of lengths $m$, $m_{anc}$, and $n$, respectively, $\Dec(\mathcal{Y}^i \oplus f^i(s,x))$ can be computed by an $\mathsf{AC^0}$ circuit of size $\exp(m_{anc}, m)$ for each $i$.\footnote{We assume that $\poly(n) \leq \exp(m,m_{anc})$ throughout the paper, so we omit the circuit's size dependence on $n$.}
\end{lemma}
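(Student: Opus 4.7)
The plan is to implement the decoding gadget of \Cref{decgadget} component-by-component as an $\AC^0$ circuit, using two basic facts: (i) any Boolean function of $k$ inputs admits a depth-$2$ DNF of size $O(2^k)$ and hence lies in $\AC^0$ with size $\exp(k)$; and (ii) conjugating a Pauli through a constant-depth Clifford circuit keeps each output Pauli coordinate a function of only $O(1)$ input bits, because each physical qubit has a constant-size backwards light cone.

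Given these tools, I would handle the four boxes in turn. The recovery Pauli $\Rec(s)$ depends only on the $m_{anc}$-bit syndrome $s$, so each of its output coordinates admits a depth-$2$ DNF of size $\exp(m_{anc})$ by (i). For the propagation $f^i(s,x)$, the encoded circuit $\overline{C_x}$ is constant-depth by \Cref{constantdepthclifford} together with the hypothesis that $C_x$ is a constant-depth classically-controlled Clifford circuit; applying (ii) to the identity $\overline{C_x}\,\Rec(s)\,\overline{C_x}^{\dagger} \sim X(f)Z(h)$ from \Cref{f} shows that each bit of $f^i(s,x)$ depends on $O(1)$ bits of $\Rec(s)$ and $O(1)$ control bits of $x$, so $f^i$ is computable in $\NC^0$ with $O(m)$ gates. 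The bitwise XOR with $\mathcal{Y}^i$ is then a trivial depth-$1$ layer of $m$ two-input XOR gates. Finally, $\Dec\colon \{0,1\}^m \to \{0,1\}$ is a fixed Boolean function of its $m$-bit input, so by (i) it has an $\AC^0$ implementation of depth $2$ and size $\exp(m)$. Composing these four constant-depth pieces in series gives a constant-depth $\AC^0$ circuit of total size $\exp(m_{anc}) + O(m) + \exp(m) = \exp(m_{anc}, m)$, matching the claim; any residual $\poly(n)$ contribution from enumerating gates of $C_x$ is absorbed by the standing convention $\poly(n) \leq \exp(m,m_{anc})$.

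I do not anticipate a real obstacle: this is essentially bookkeeping on top of the decomposition in \Cref{decgadget}. The only point meriting care is confirming that the fan-in bound on $f^i$ survives the surface-code encoding, since in principle encoding could inflate the light cone. This is fine because the encoded gates $\overline{H}$, $\overline{S}$, and $\overline{\CNOT}$ from \Cref{constantdepthclifford} are themselves constant-depth Clifford circuits on the physical qubits, so the composed encoded circuit still has constant physical depth and the light-cone bound transfers from $C_x$ to $\overline{C_x}$ without any polynomial blowup in $n$.
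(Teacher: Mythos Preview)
Your proposal is correct and matches the paper's own argument essentially line for line: brute-force DNF (truth-table) circuits for $\Rec$ and $\Dec$ of size $\exp(m_{anc})$ and $\exp(m)$, a locality/light-cone argument to propagate the Pauli through the constant-depth encoded Clifford $\overline{C_x}$ layer by layer, and a trivial XOR. The only cosmetic difference is that the paper phrases the propagation as ``conjugate by one depth-one layer at a time'' rather than invoking the backwards light cone directly, but these are the same computation.
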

\begin{proof}
    Note that $\Rec$ and $\Dec$ can be computed by $\exp(m, m_{anc})$ size circuits by using a truth-table circuit for these functions. We would like to conjugate the Pauli $\Rec(s) = \Rec(s^1) \otimes ... \otimes \Rec(s^n)$ with the classically-controlled Clifford circuit, $\overline{C_x}$, to obtain $f^i(s,x)$ in \Cref{f}. Consider any depth-one Clifford circuit, $C$, on $mn$ qubits composed of one- and two-qubit gates. Conjugating the Pauli operator $\Rec(s)$ by $C$ is locally computable by a polynomial size $\AC^0$ circuit because $C$ only uses one- and two-qubit Clifford gates. Repeating this process for each layer of the constant-depth circuit $\overline{C_x}$, XOR'ing each bit with $\mathcal Y^i$, and plugging the output into $\Dec$ gives the lemma.
\end{proof}

In summary, we have computed $\Dec(\mathcal{Y}^i \oplus f^i(s,x))$ for each $i$ with a $\exp(m, m_{anc})$ size $\AC^0$ circuit.  Because $m,m_{anc} = \polylog(n)$ in all of our constructions, we have shown that the decoding can be done in $\qAC^0$.  This quasipolynomial blowup will not affect the conclusion of this section since neither poly-size nor quasipoly-size $\AC^0$ circuits can solve parity.  However, this blowup \emph{will} be important to our conclusions about interactive problems (see, for instance, \Cref{thm:parityL_noise_sep} in the Introduction).

\subsection{Noise-tolerant \texorpdfstring{$\mathsf{AC^0}$}{AC0} separation}

	
We begin by reviewing the relevant problem that is solvable by {\it noiseless} $\mathsf{QNC^0}$ circuits but is hard for $\mathsf{AC^0}$ circuits to solve. By applying the noisy extension introduced in the previous section, we can prove that a separation persists as the quantum circuit is subject to noise. 

It was shown in \cite{bkst} that there is a problem that is solved with certainty by a $\mathsf{QNC^0}$ circuit but is hard for $\mathsf{AC^0}$ circuits to solve. The problem, a promise version of 1-Round Graph State Measurement Problem that they call the Relaxed Parity Having Problem (RPHP), is a relation problem with inputs uniformly chosen from a set $P_n \subseteq \{0,1\}^n$ for all $n$ that is solved with certainty by a constant-depth classically-controlled Clifford circuit.\footnote{We will actually use \emph{Parallel} RPHP (under the same name RPHP, for simplicity), a version with polynomially many copies of vanilla RPHP, since this gives us better error guarantees.}  RPHP is hard for $\mathsf{AC^0}$ circuits with an error rate exponentially close to $1$. 

\begin{lemma}[Theorem 26 in \cite{bkst}]
\label{repeatrphphard}
Any probabilistic $\mathsf{AC^0}$ circuit of size $s$ and depth $d$ cannot solve RPHP with probability exceeding
\begin{equation}
	\exp(\frac{-n^{1/2-o(1)}}{O(\log s)^{2d}})
\end{equation}
over a uniformly random input $x \in P_n$\footnote{The original statement of this claim in \cite{bkst} refers to deterministic $\AC^0$ circuits instead of probabilistic circuits. A simple argument shows that it also holds for probabilistic circuits: A probabilistic circuit is a convex combination of deterministic circuits, and every deterministic circuit in this convex combination succeeds with probability at most $\exp(-n^{1/2-o(1)}/O(\log s)^{2d})$ over a uniformly random input.}.
\end{lemma}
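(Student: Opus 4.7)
The plan is to invoke the deterministic version of Theorem~26 from Bene~Watts et al.~\cite{bkst} as a black box, and then perform a short averaging argument to lift the bound from deterministic to probabilistic $\AC^0$ circuits. Concretely, I would first restate the deterministic guarantee: for every deterministic $\AC^0$ circuit $C$ of size $s$ and depth $d$,
\[
\Pr_{x \sim P_n}\!\bigl[C(x) \text{ solves RPHP on input } x\bigr] \;\le\; \exp\!\Bigl(-\tfrac{n^{1/2-o(1)}}{O(\log s)^{2d}}\Bigr).
\]
This is exactly the content of \cite[Theorem 26]{bkst}, and in my plan I would not reprove it; I would just quote it, noting that RPHP here means the parallel (polynomially amplified) version, so that the exponent above already incorporates the amplification.

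Next, I would model an arbitrary probabilistic $\AC^0$ circuit $C_r$ of size $s$ and depth $d$ as a circuit that takes, in addition to the input $x \in P_n$, a string of random bits $r \in \{0,1\}^{\rho}$ for some $\rho \le s$. For each fixing $r = r_0$, the resulting circuit $C_{r_0}$ is a \emph{deterministic} $\AC^0$ circuit of size at most $s$ and depth at most $d$, and so by the deterministic bound above,
\[
\Pr_{x \sim P_n}\!\bigl[C_{r_0}(x) \text{ solves RPHP}\bigr] \;\le\; \exp\!\Bigl(-\tfrac{n^{1/2-o(1)}}{O(\log s)^{2d}}\Bigr).
\]
Then by Fubini / linearity, the overall success probability of the probabilistic circuit over uniformly random input and uniformly random coins is
\[
\Pr_{x,r}\!\bigl[C_r(x) \text{ solves RPHP}\bigr] \;=\; \mathbb{E}_{r}\!\Bigl[\Pr_{x \sim P_n}[C_r(x) \text{ solves RPHP}]\Bigr] \;\le\; \exp\!\Bigl(-\tfrac{n^{1/2-o(1)}}{O(\log s)^{2d}}\Bigr),
\]
which is exactly the probabilistic bound claimed in the lemma.

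Since the deterministic statement is used as a black box, there is essentially no obstacle to overcome: the only thing to be careful about is that the class of deterministic circuits obtained by fixing the random tape does not grow in size or depth beyond $s$ and $d$, which is clear because fixing inputs is just a restriction. A minor point worth flagging in the write-up is that the ``probabilistic circuit is a convex combination of deterministic circuits'' phrasing in the footnote is really the same Fubini argument above, phrased slightly differently; I would stick to the Fubini formulation to make the bound pointwise-in-$r$ and then average, which is the cleanest presentation.
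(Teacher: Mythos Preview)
Your proposal is correct and matches the paper's own justification: the paper treats the deterministic bound from \cite{bkst} as a black box and extends it to probabilistic circuits via the convex-combination (equivalently, Fubini/averaging) argument given in the footnote. Your write-up is simply a more explicit rendering of that one-sentence argument.
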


Define \emph{Noisy} RPHP to be the noisy extended relation problem associated with RPHP (see \Cref{noisyextended}) with $m, m_{anc} = \Theta(\polylog(n))$.  By construction, a noisy quantum circuit can solve Noisy RPHP with high probability.  Plugging this relation problem into \Cref{blackboxnoise}, we get:

\begin{proposition}
	\label{qnoisyparallel}
	There is a constant-depth classically-controlled Clifford circuit that solves Noisy RPHP with probability at least 
	$
		1 - \exp(- \Omega (\polylog (n)))
	$
	over all inputs when the noise rate is below some constant threshold.
\end{proposition}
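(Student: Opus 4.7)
The plan is to obtain this proposition as an essentially direct application of \Cref{blackboxnoise} to the Clifford circuit known to solve (noiseless) RPHP with certainty. In other words, once the parameters of the surface code encoding are fixed to $m, m_{\mathrm{anc}} = \Theta(\polylog(n))$, the black-box noise-tolerance guarantee of \cite{bgkt} does essentially all of the work, and the only thing to verify is that the resulting error bound indeed simplifies to $\exp(-\Omega(\polylog(n)))$.

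First, I would recall the starting point: RPHP is a $1$-Round Graph State Measurement problem, and by the discussion preceding \Cref{repeatrphphard}, it is solved with certainty by a constant-depth classically-controlled Clifford circuit $C$ on inputs drawn from the promise $P_n$. By the definition of the noisy extension (\Cref{noisyextended}), Noisy RPHP is the relation $R'$ associated with $R = \mathrm{RPHP}$ under the surface code with parameters $m, m_{\mathrm{anc}} = \Theta(\polylog(n))$. I would then invoke \Cref{blackboxnoise} on this $C$: it produces a new constant-depth classically-controlled Clifford circuit $C'$ that, on any fixed input $x$ in the promise and provided the local stochastic noise rate is below the fixed threshold given there, outputs $(\mathcal{Y},s)$ satisfying $R'(x,(\mathcal{Y},s))=1$ with probability at least $1 - \exp(-\Omega(m^{1/2}))$.

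To finish, I would substitute the chosen encoding size: since $m = \Theta(\polylog(n))$, we have $m^{1/2} = \Theta(\sqrt{\polylog(n)})$, which is itself $\Theta(\polylog(n))$ (the square root of any $\log^k n$ is $\log^{k/2}n$, still polylogarithmic). Hence the per-input success probability is $1 - \exp(-\Omega(\polylog(n)))$, uniformly over all promise inputs $x$. The claim that the bound holds ``over all inputs'' then just reflects that \Cref{blackboxnoise} provides a per-input guarantee, not an average-case one, so no union bound or distributional argument is needed.

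I do not anticipate a meaningful obstacle: the statement is essentially packaging \Cref{bellprep}, \Cref{constantdepthclifford}, and \Cref{codeblockcorrect}, already bundled into \Cref{blackboxnoise}, with a specific parameter choice. The only minor care points are (i) making sure that the construction of the RPHP circuit really fits the ``constant-depth classically-controlled Clifford circuit'' template required as the hypothesis of \Cref{blackboxnoise} (which follows from the generic $k$-Round Graph State Measurement construction in \Cref{sec:problem_statement} with $k=1$), and (ii) verifying that the noise threshold of \Cref{blackboxnoise} can be taken independent of $n$, which is explicit in \cite{bgkt}. Once these are noted, the statement follows immediately.
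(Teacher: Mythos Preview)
Your proposal is correct and matches the paper's approach exactly: the paper states the proposition immediately after remarking ``Plugging this relation problem into \Cref{blackboxnoise}, we get,'' i.e., it is a direct instantiation of \Cref{blackboxnoise} with $m = \Theta(\polylog(n))$. Your additional remarks about the square-root and the per-input nature of the guarantee are accurate elaborations of what the paper leaves implicit.
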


We conclude with the separation between $\mathsf{AC^0}$ and noisy $\QNC^0$ using Noisy RPHP:

\begin{theorem}
	Any probabilistic $\mathsf{AC^0}$ circuit of size $s$ and depth $d$ cannot solve Noisy RPHP with probability exceeding
	\begin{equation}
		\exp(\frac{-n^{1/2-o(1)}}{O(\log(s + \exp(\polylog (n))))^{2d+O(1)}})
	\end{equation}
	over the uniformly random $x \in P_n$.
\end{theorem}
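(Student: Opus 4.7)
The plan is to reduce solving Noisy RPHP to solving plain RPHP with only constant-depth, quasipolynomial-size $\AC^0$ overhead, and then invoke \Cref{repeatrphphard}. Suppose $C$ is a probabilistic $\AC^0$ circuit of size $s$ and depth $d$ that solves Noisy RPHP on a uniformly random $x \in P_n$ with some probability $p$. I will construct a probabilistic $\AC^0$ circuit $C'$ for plain RPHP that succeeds with probability at least $p$, and whose size and depth are controlled by the parameters of $C$ and the decoding gadget.

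The construction of $C'$ on input $x \in P_n$ is immediate from the definition of the noisy-extended relation (\Cref{noisyextended}): feed $x$ into $C$ to obtain an output $(\mathcal{Y}, s)$, then compose with the decoding gadget of \Cref{decgadget}, which on inputs $(\mathcal{Y}, s, x)$ computes $y_i = \Dec(\mathcal{Y}^i \oplus f^i(s,x))$ for each $i$, and outputs $y$. By \Cref{noisyextended}, whenever $(\mathcal{Y}, s)$ satisfies $R'(x, (\mathcal{Y}, s)) = 1$, the string $y$ constructed this way satisfies $R(x, y) = 1$. Hence $C'$ solves RPHP on $x$ whenever $C$ solves Noisy RPHP on $x$, and since both problems use the same input distribution (uniform over $P_n$), $C'$ succeeds with probability at least $p$.

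Next I will bound the resources of $C'$. By \Cref{decode}, the decoding gadget can be realized by an $\AC^0$ circuit of size $\exp(m, m_{anc}) = \exp(\polylog(n))$ and constant depth (since $m, m_{anc} = \Theta(\polylog(n))$ in the definition of Noisy RPHP). Stacking this on top of $C$ yields a probabilistic $\AC^0$ circuit $C'$ of size $s + \exp(\polylog(n))$ and depth $d + O(1)$. Applying \Cref{repeatrphphard} to $C'$, its success probability on a uniformly random $x \in P_n$ is at most
\[
\exp\!\left(\frac{-n^{1/2-o(1)}}{O(\log(s + \exp(\polylog(n))))^{2(d+O(1))}}\right),
\]
which is exactly the claimed upper bound, yielding the theorem.

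There is essentially no hard step here: the pieces (the hardness of RPHP in \Cref{repeatrphphard}, the soundness of the noisy extension in \Cref{noisyextended}, and the $\AC^0$ decoder in \Cref{decode}) already do all the work. The only point to be careful about is that the input distribution is preserved under the reduction — we use the same $x$ for both circuits, so uniformity over $P_n$ is automatic — and that the exponent $2d$ in \Cref{repeatrphphard} becomes $2d + O(1)$ to account for the constant-depth decoding overhead, which is reflected in the $2d + O(1)$ in the statement.
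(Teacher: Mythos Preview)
Your proposal is correct and follows essentially the same approach as the paper: compose the assumed $\AC^0$ circuit for Noisy RPHP with the $\AC^0$ decoding gadget of \Cref{decode} to obtain a circuit for plain RPHP, then invoke \Cref{repeatrphphard} with the enlarged size and depth parameters. Your write-up is in fact more careful than the paper's, explicitly noting that the input distribution is preserved and that the depth increase by $O(1)$ accounts for the exponent $2d+O(1)$.
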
	
\begin{proof}
	We decode the output of Noisy RPHP to one of RPHP using \Cref{decode}. This incurs an extra size overhead of the $\mathsf{AC^0}$ circuit by $\exp(m, m_{anc}) = \exp(\polylog (n))$. By applying this size expansion to \Cref{repeatrphphard}, we arrive at the desired bound.
\end{proof}

\section{Interactive hardness}

To start, let us extend the $2$-Round Graph State Measurement Problem (\Cref{prob:interactive_def}) to its noisy variant by encoding the qubits in the surface code in the natural way:

\begin{problem}[Noisy $k$-Round Graph State Measurement] \label{prob:noisy_interactive_def}
Consider the $k$-Round Graph State Measurement Problem where each logical qubit is encoded in $\Theta(\polylog(n))$ physical qubits according to the surface code (see \Cref{sec:surface_code}).  That is, we start with some encoded graph state
$$
\ket{\overline{G_n(s)}} := \prod_{(i,j) \in E_n} \overline{\CZ}(i,j)  \Rec(s)\ket{\overline{+}}^{\otimes \left|V_n\right|},
$$
where $s$ denotes the syndrome qubits associated with preparing the $|V_n|$ logical $\ket{\overline{+}}$ states. The choice of $s$ is left to the device.  In the $i$th round of interaction, logical measurement bases (either $\overline{X}$ or $\overline{Y}$) are provided for all vertices of color $i$, and the device is expected to output the corresponding measurement outcomes for every physical qubit representing each such vertex. At the end of the $k$ rounds, the device also outputs $s$.

The device succeeds if its outputs match possible outcomes (i.e., literally any outcome with non-zero probability) from measuring $\ket{\overline{G_n(s)}}$ in the bases given as input. This acceptance criterion is equivalent to requiring that the measurements of the encoded graph state satisfies the noisy-extended relation (see \Cref{noisyextended}).
\end{problem}

Showing that there exists a noisy, constant-depth circuit which solves the 2-round problem for constant-degree graphs is relatively straightforward given the properties of the code.

\begin{theorem}
\label{interactiveblackboxnoise}
     There exists a constant-depth classically-controlled Clifford circuit that solves Noisy $2$-Round Graph State Measurement on constant-degree graphs with probability at least $1 - \exp(- \Omega(\polylog (n)))$ for any input when the noise rate is below some constant threshold.
\end{theorem}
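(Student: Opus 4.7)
The plan is to adapt the non-interactive noise analysis of \Cref{blackboxnoise} to the two-round setting by viewing the entire protocol as a constant-depth adaptive Clifford computation. First I would construct the classically-controlled Clifford circuit explicitly: use \Cref{bellprep} to prepare encoded basis states $\Rec(s^i)\ket*{\overline{0}}$ for each vertex $i \in V_n$, apply an encoded Hadamard to turn each into $\ket*{\overline{+}}$ (up to a known Pauli), then apply $\overline{\CZ}$ across every edge of $G_n$ via \Cref{constantdepthclifford}. Because $G_n$ has constant degree, its edge set partitions into a constant number of matchings, so the $\overline{\CZ}$ layer is constant depth. In the $i$th round the circuit reads the requested basis for each color-$i$ vertex, applies an encoded single-qubit Clifford rotating $\overline{X}$ or $\overline{Y}$ to $\overline{Z}$, and measures every physical qubit of that codeblock in the $Z$-basis. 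The syndromes $s = s^1\cdots s^{|V_n|}$ from basis-state preparation are reported as part of the final output.

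Next I would propagate all Pauli noise to just before each batch of measurements. The protocol, unrolled, consists of two constant-depth Clifford segments separated by a round of $Z$-basis measurements. Applying \Cref{bgktnoise} to the first segment, any local stochastic noise injected during basis-state preparation, during the $\overline{\CZ}$ layer, during the first-round basis rotation, or during the first round of measurements themselves can be commuted through the Clifford operations and absorbed into a single local stochastic layer of rate $O(p^{2^{-O(D)}})$ immediately preceding the first measurement. Because the first-round measurement outcomes feed forward only \emph{classically} (they determine which second-round basis-change gates are activated, but do not entangle with any live qubits beyond a known Pauli byproduct), the post-measurement state is again a Clifford state with a residual Pauli, and the same propagation argument applies to the second segment, producing a second local stochastic layer of constant rate before the second round of measurements.

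Finally I would union-bound the decoding failure probability. For physical noise rate $p$ below a constant threshold, both inflated noise layers lie below the decoding threshold of \Cref{codeblockcorrect}, and each codeblock of size $m = \Theta(\polylog(n))$ decodes correctly with probability $1 - \exp(-\Omega(m^{1/2}))$. Choosing $m = \polylog(n)$ large enough that $m^{1/2}$ dominates $\log n$, a union bound over the $O(\poly(n))$ codeblocks shows that every codeblock decodes correctly with probability at least $1 - \exp(-\Omega(\polylog(n)))$. Conditioned on this event, the decoded logical outcomes form a valid trajectory of a noiseless encoded measurement of $\ket{\overline{G_n(s)}}$ in the requested bases, which by construction satisfies the acceptance condition of \Cref{prob:noisy_interactive_def}.

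The main technical subtlety I anticipate is the noise propagation across the first round of measurements: \Cref{bgktnoise} is stated for measurement-free constant-depth Clifford circuits, and one must verify that inserting an intermediate round of $Z$-basis measurements does not spoil the local stochastic structure of the residual error. The justification is that destructive $Z$-basis measurements, together with classical bookkeeping of the first-round outputs, leave the remaining qubits in a Clifford state with a Pauli that can be folded into the same $\Rec(s)$-style accounting already used at basis-state preparation; the second segment is then itself a constant-depth Clifford circuit acting on an encoded state to which \Cref{bgktnoise} applies verbatim. Once this is in place, the remainder of the argument is a direct port of the non-interactive proof of \Cref{blackboxnoise}.
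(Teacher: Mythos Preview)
Your construction of the circuit and the union bound at the end are fine, and they match the paper. The gap is in your treatment of the second-round noise. You propose to apply \Cref{bgktnoise} to the ``second segment'' just as you did to the first, but \Cref{bgktnoise} is stated for a \emph{fixed} constant-depth Clifford circuit, and the second-round basis-change circuit $\overline{\mathcal{C}_B}$ is not fixed: it is chosen by the challenger after seeing the first-round outputs. Those outputs depend on $E_1$, which in general is correlated with the residual error $E_2$ on the unmeasured qubits, so the challenger's choice of $x_2$ may depend on $E_2$. Conjugating $E_2$ by a circuit that is itself a function of $E_2$ need not produce a local stochastic error, and your ``feed forward only classically'' remark does not address this—the issue is not quantum entanglement but adversarial adaptivity.

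The paper handles this differently. It decomposes the pre-measurement error as $E = E_1 \otimes E_2$ and observes that the marginal $E_2$ is still $p$-local stochastic \emph{unconditionally}, so with high probability $E_2$ has few errors in every codeblock. Crucially, $\overline{\mathcal{C}_B}$ consists only of logical single-qubit basis changes acting within individual codeblocks, so whatever $x_2$ the challenger picks, conjugation can inflate the support of $E_2$ by at most a constant factor and cannot move error between codeblocks. Combined with the (genuinely local stochastic, because non-adaptive) fresh noise in the second segment, the total error per codeblock stays below the code distance with high probability. That locality argument is the missing ingredient in your sketch; the subtlety you flagged about intermediate measurements is comparatively minor.
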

\begin{proof}
	We will construct the natural classically-controlled Clifford circuit (see \Cref{interactivecircuit}) using the surface code with logical qubits of size $m = \Omega(\polylog (n))$.  In fact, this circuit is identical to that used for the associated relation problem (i.e., Noisy 1-Round Graph State Measurement), except the second-round inputs are only provided after the first-round measurements are returned by the circuit.  Therefore, the round-one outputs may leak information about the error on the round-two qubits.  However, we will show that the challenger's questions in round two cannot concentrate the error on any given logical qubit and therefore, the error can be decoded normally.
	
	To recap the error correcting components, let us describe the circuit $\mathcal C$:  first, the circuit prepares the logical zero state  $\Rec(s) \ket*{\overline{0}}^{\otimes |V_n|}$ using \Cref{bellprep}; next, the circuit constructs the graph state using a constant-depth circuit $\overline{\mathcal{C}_G}$ using \Cref{constantdepthclifford}; finally, in each round of the protocol, the circuit performs a basis change Clifford operation ($\overline{\mathcal{C}_A}$ and $\overline{\mathcal{C}_B}$ for the first round and second round, respectively) and measures in the $Z$ basis. 
	
	Let us analyze the effect of noise in this process. Consider the error $E$ immediately before the first round measurement.  By \Cref{bgktnoise}, we know that if the noise rate is below some constant threshold, then $E \sim \mathcal N(p)$ with $p < .01$.  In fact, by the definition of local stochastic noise, the error $E = E_1 \otimes E_2$ can be decomposed into two errors (on the round-one qubits and round-two qubits, respectively) both of which are themselves locally stochastic with the same parameter. That is, $E_1 \sim \mathcal N(p)$ and $E_2 \sim \mathcal N(p)$.  Because the error on the first-round qubits $E_1$ has noise parameter $p < .01$, the first round measurements are correct with with probability at least $1 - \exp(- \Omega (\polylog (n)))$ by \Cref{codeblockcorrect}.
	
	By the same token, we also have that $E_2$ is local stochastic noise.  The second-round inputs may depend on $E_2$, but critically, $\overline{\mathcal{C}_B}$ is simply a single layer of logical single-qubit gates.  By \Cref{constantdepthclifford}, we have that each such operation can be applied with a constant-depth circuit on a single codeblock, and so the number of non-identity Pauli elements in $E_2$ can increase by at most a constant factor.\footnote{In fact, the specific construction of the single-qubit logical operations \cite{moussa:2016}  increases the non-identity Pauli elements by at most a factor of 2.}  Furthermore, the rest of the error in the circuit can once again be captured as local stochastic noise right before the second-round measurement by \Cref{bgktnoise}.  Therefore, there is once again a small enough parameter for the noise rate such that the number of errors from both sources remain below the distance\footnote{Although we have mostly referred to the surface code as being able to correct for some specific noise rate, we observe that a low noise rate corresponds to few errors with high probability, and it is the paucity of errors that ultimately determines the success of the code \cite{bgkt}.} of the code with high probability.
\end{proof}

\begin{figure}[ht]
	\centerline{
		\begin{quantikz}
		\lstick{$x_1$} & \cw & \cw & \cw & \cwbend{2} \gategroup[wires=3,steps=4,style={dashed,inner sep=6pt,rounded corners}]{Round 1} & & & & & & & \\
		\lstick{$x_2$} & \cw & \cw & \cw & \cw & \cw & \cw & \cw & \cw & \cwbend{2}  \gategroup[wires=3,steps=4,style={dashed,inner sep=6pt,rounded corners}]{Round 2} & & & \\
		\lstick[wires=2]{$\Rec(s) \ket{\overline{0}}^{\otimes |V_n|}$} & \qw \qwbundle{} & \gate[2]{\overline{\mathcal C_G}} & \qw  \qwbundle{} & \gate{\overline{\mathcal C_A}} \vcw{-2} & \meter{} & \cw \rstick{$\mathcal{Y}_1$} \\
		\qw & \qw \qwbundle{} & & \qw \qwbundle{} & \qw & \qw & \qw & \qw & \qw & \gate{\overline{\mathcal C_B}}\vcw{-2} & \meter{} & \cw \rstick{$\mathcal{Y}_2$} & 
\end{quantikz}
	}
	\caption{Circuit solving Noisy $2$-Round Graph State Measurement Problem}
	\label{interactivecircuit}
\end{figure}
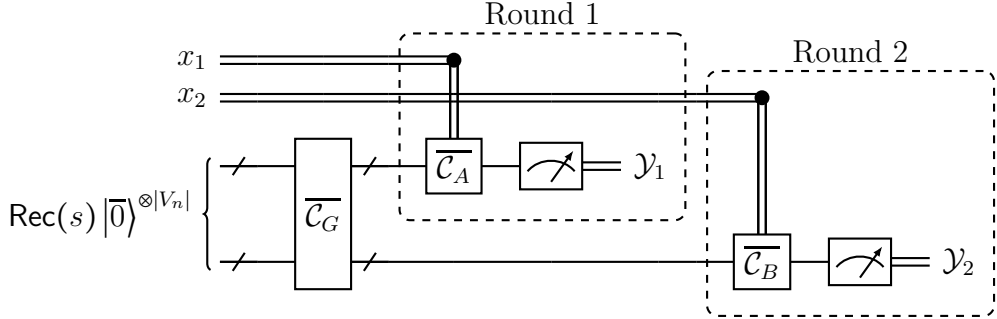

\subsection{\texorpdfstring{$\NC^1$}{NC\textonesuperior}-hardness}
\label{sec:nc1hardness}

This section focuses on the $2$-Round Graph State Measurement problem on a $2 \times O(n)$ grid, especially in the noisy setting.\footnote{The noiseless version of the problem is referred to as the CliffSim[2] in \cite{gs}.  Technically, the graph used in \cite{gs} is a subset of the $2 \times O(n)$ grid arranged in a brick-like pattern. However, such a pattern was only used to reduce the number of qubits per Clifford operation, and the result is qualitatively the same with a simple grid.}  Since the grid is of width 2, the first round measurements are suitable for an MBQC simulation of a sequence of $2$-qubit Clifford gates, and the second round measurements are used to apply one more Clifford gate to permit arbitrary Clifford measurements of a $2$-qubit state.  We will show that a classical simulation of this problem requires solving $\NC^1$-hard problems (\Cref{nc1hardness}), and as a result obtain an unconditional separation between $\AC^0[p]$ circuits and noisy $\QNC^0$ circuits (\Cref{cor:ac0[p]_separation}).

Let us quickly recap how \cite{gs} connects this problem to $\NC^{1}$-hard computation. First, a result of Barrington and Th\'{e}rien shows that computing products in non-solvable groups is $\NC^1$-hard \cite{bt}, even to distinguish between the product being the identity or some other predetermined group element. The $2$-qubit Cliffords modulo $2$-qubit Pauli operations, which we denote $\{\mathcal{C}_{2} / \mathcal{P}_2\}$, turn out to be a non-solvable group (specifically, isomorphic to $S_{6}$). For this special case, Barrington and Th\'{e}rien's result translates to the following:

\begin{theorem}
    \label{thm:nc1hardcliffordproduct}
    Let $C_1, \ldots, C_n \in \mathcal{C}_2 / \mathcal{P}_2$ be given. Promised that the product $C_1 \cdots C_n$ is either $I \otimes I$ or $H \otimes H$ modulo Pauli operations, deciding whether $\ket{\psi} := C_1 \cdots C_n \ket{++}$ is an $X$-basis state (i.e., $\ket{++}, \ket{+-}, \ket{-+}, \ket{--}$) or a $Z$-basis state ($\ket{00}$, $\ket{01}$, $\ket{10}$, $\ket{11}$) is $\NC^{1}$-hard.
\end{theorem}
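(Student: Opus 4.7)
The plan is to reduce from the word problem over the non-solvable group $\mathcal{C}_2/\mathcal{P}_2 \cong S_6$, which is $\NC^1$-hard by Barrington--Th\'erien. Concretely, by \cite{bt}, given a sequence $g_1, \ldots, g_n$ of elements of $S_6$, distinguishing between the cases $g_1 \cdots g_n = e$ and $g_1 \cdots g_n = g^*$, for any fixed non-identity target $g^* \in S_6$, is $\NC^1$-hard. I would first verify (or cite) the isomorphism $\mathcal{C}_2/\mathcal{P}_2 \cong S_6$: the $2$-qubit Clifford group has order $11520 = |\mathcal{P}_2| \cdot |S_6|$ after quotienting by the center, and a standard computation shows the quotient acts faithfully on the six cosets of Pauli stabilizers of the six pairs of anticommuting $2$-qubit Pauli observables, giving an isomorphism with $S_6$. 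Under this identification $H \otimes H$ is a specific non-identity element, so one can instantiate the hard problem with target $g^* = H \otimes H$.

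Given an arbitrary instance of the word problem with target $H \otimes H$, I would output the corresponding sequence of Cliffords $C_1, \ldots, C_n \in \mathcal{C}_2/\mathcal{P}_2$ verbatim as the instance of the state-distinguishing problem. The correctness argument is then a short quantum calculation. In both branches of the promise the state is of the form $\ket{\psi} = P \cdot U \ket{++}$ for some $P \in \mathcal{P}_2$ (since the product is only defined modulo Paulis) and some genuine Clifford representative $U$ of $C_1\cdots C_n$. If $C_1 \cdots C_n = I \otimes I$ mod $\mathcal{P}_2$, then $U = I$ (without loss of generality) and $\ket{\psi} = P\ket{++}$; since every $2$-qubit Pauli permutes the four $X$-basis states $\{\ket{\pm \pm}\}$ up to global phase (as Paulis normalize the group generated by $X \otimes I$ and $I \otimes X$), $\ket{\psi}$ is an $X$-basis state. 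If instead $C_1 \cdots C_n = H \otimes H$ mod $\mathcal{P}_2$, then $U = H \otimes H$ and $\ket{\psi} = P (H\otimes H)\ket{++} = P \ket{00}$, which is a $Z$-basis state by the analogous closure of $Z$-basis states under Paulis.

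Putting these together, a decision procedure for the state-distinguishing problem immediately yields a decision procedure for the $S_6$ word problem with target $H \otimes H$, so the state problem is $\NC^1$-hard.

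The main potential obstacle is not a deep one but rather a bookkeeping subtlety: confirming that Barrington--Th\'erien can be applied with the specific target element $H \otimes H$ (as opposed to some canonical element like a $5$-cycle used in textbook presentations of Barrington's theorem). This is handled by the standard observation that in a non-solvable group $G$ the set of targets for which the word problem is $\NC^1$-hard is closed under conjugation and nontrivial commutators, and in fact any fixed non-identity $g^*$ works: given a word with intended target some canonical $g_0 \neq e$, pick $h \in G$ with $[h,g^*] \neq e$ and rewrite the word using the standard Barrington commutator trick to produce $g^*$ or $e$ in place of $g_0$ or $e$. Provided $H \otimes H \neq I \otimes I$ in $\mathcal{C}_2/\mathcal{P}_2$ (which it clearly is, since $H\otimes H \ket{++} = \ket{00}$ is not of the form $P\ket{++}$), this lets us target $H\otimes H$ directly and completes the reduction.
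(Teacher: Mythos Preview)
Your proposal is correct and follows exactly the approach the paper takes: the paper does not give a standalone proof of this theorem but simply cites Barrington--Th\'erien for the $\NC^1$-hardness of the word problem over non-solvable groups, notes that $\mathcal{C}_2/\mathcal{P}_2 \cong S_6$, and states the theorem as the translation of that result to this particular group with target element $H \otimes H$. Your write-up just fills in the details (the isomorphism, the target-flexibility of Barrington's construction, and the easy quantum calculation linking the two promise cases to $X$- versus $Z$-basis states) that the paper leaves implicit.
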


Our problem for this section is explicitly designed to prepare $\ket{\psi}$ (up to an unknown Pauli $P \in \mathcal{P}_2$) and measure it in an arbitrary Clifford basis. Our goal is to show that a rewind oracle for the problem solves an $\NC^{1}$-hard problem. All that remains is to show how repeated measurements of $P \ket{\psi}$ in arbitrary Clifford bases can be used to determine something about the state, and ultimately leveraged to distinguish $X$-basis states from $Z$-basis states. 

We borrow a tool of \cite{gs} to extract some information about the first round state from the rewind oracle $\mathcal{R}$. We simply quote the lemma here without proof, but it uses a demonstration of non-contextuality, the magic square game, to force the oracle to reveal something about the state, and then it uses a plethora of randomization tricks so the oracle does not control what it reveals. 

\begin{lemma}
	\label{selfreduce} (Theorem 24 in \cite{gs})
	Let $\mathcal{R}$ be a rewind oracle solving the $2 \times O(n)$ interactive problem, possibly with some error (i.e., it may fail the task some fraction of the time). Suppose we are given gates $C_1, \ldots, C_n \in \mathcal{C}_2$. There is a randomized $(\AC^{0})^{\mathcal{R}}$ circuit which 
	\begin{itemize}
	    \item processes $C_1, \ldots, C_n$ to a related but uniformly random sequence of operations,
	    \item calls $\mathcal{R}$ with the above uniformly random sequence in the first round, and makes $6$ uniformly random (but not independent) measurements in the second round (i.e., rewinds 5 times), then
	    \item post-processes the outputs from $\mathcal{R}$ in conjunction with the randomization.
	\end{itemize}
	The result is an algorithm which outputs either a uniformly random stabilizer Pauli (excluding $II$) or uniformly random non-stabilizer Pauli of $C_1 \cdots C_n \ket{++}$. Moreover, the circuit outputs a stabilizer only if at least one call to $\mathcal{R}$ was in error. 
\end{lemma}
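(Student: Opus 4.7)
The plan is to combine the magic square game---a quantum-versus-classical gap based on contextuality---with a layered randomization of the oracle query, so that no matter how adversarial $\mathcal{R}$ is, its marginal view of each of the six calls is indistinguishable from a uniformly random valid query. Recall that the magic square is a $3\times 3$ grid of two-qubit Pauli observables in which every row is a commuting triple whose product is $+II$ and every column is a commuting triple whose product is $-II$. Measuring an entire row (or column) on a two-qubit state forces a parity constraint on the outcomes equal to the eigenvalue of the row-product (resp.\ column-product) on the state. Across three row-queries and three column-queries, the nine reported outcomes must satisfy all six product constraints simultaneously; contextuality implies that no such assignment is jointly consistent, so either the outcomes pin down a Pauli that is \emph{not} a stabilizer of $\ket{\psi}:=C_1\cdots C_n\ket{++}$, or some call was in error---in which case the recovered Pauli may fall into the stabilizer group.

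To implement these measurements through $\mathcal{R}$, I would use the MBQC correspondence of \Cref{thm:mbqc} to encode $C_1,\ldots,C_n$ into first-round measurement bases of the $2\times O(n)$ grid, and use the second-round measurements to apply one extra layer of two-qubit Cliffords that rotate the remaining logical qubits into the Pauli basis of a chosen row or column. The rewind capability of $\mathcal{R}$ provides the six second-round queries on a shared first-round state. On top of this I would stack three independent randomizations, each undone in post-processing: pre- and post-composing $C_1,\ldots,C_n$ with a uniformly random element of $\mathcal{C}_2$ so that the first-round query is marginally uniform over valid Clifford sequences; applying a uniformly random symmetry of the magic square so that $\mathcal{R}$ cannot tell which row or column is being requested in a given call; and conjugating each second-round basis change by a uniformly random Pauli to hide the target eigenvalue. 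All three layers are reversible in $\AC^0$, keeping the overall reduction in $(\AC^0)^{\mathcal{R}}$.

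The main obstacle is showing that the three randomization layers jointly (i) make the marginal distribution of each of the six calls \emph{exactly} uniform over valid queries, while (ii) preserving enough joint correlation that the nine magic-square eigenvalues can still be reconstructed in post-processing. The delicate point is verifying that $\mathcal{C}_2/\mathcal{P}_2$ (isomorphic to $S_6$) acts transitively enough on (query-sequence, second-round-basis)-pairs for the oracle's marginal view to be genuinely uniform, while the row/column structure of the square survives and remains legible once the randomization is undone. Granted this, the analysis should conclude cleanly: when all six calls are correct, the outcomes determine the eigenvalue of every one of the nine Paulis on $\ket{\psi}$, and the randomization forces the reported Pauli to be uniformly distributed over non-stabilizers; if any call errs, the parity contradiction among the six row/column products instead throws the output into the $15$-element non-identity stabilizer group, again uniformly by the randomization. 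This matches the stated guarantee that a stabilizer is output only when at least one call to $\mathcal{R}$ was in error.
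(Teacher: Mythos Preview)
The paper does not prove this lemma; it is quoted verbatim from \cite{gs} with only the one-sentence hint that the argument ``uses a demonstration of non-contextuality, the magic square game, to force the oracle to reveal something about the state, and then it uses a plethora of randomization tricks so the oracle does not control what it reveals.'' Your proposal is consistent with that hint and with the structure of the analogous $\parityL$ routine written out in \Cref{alg:algo1}, so at the level of architecture---magic-square contextuality for the six second-round calls, Kilian-style randomization of the first-round Clifford sequence, a symmetry of the square to hide which row/column is queried, Pauli conjugation to hide signs---you are on the same track as \cite{gs}.

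That said, the mechanism you describe for extracting the output Pauli is not quite right, and the slip matters for the ``stabilizer only if an error'' clause. The six correct calls do \emph{not} ``determine the eigenvalue of every one of the nine Paulis on $\ket{\psi}$'': the non-stabilizers of $P\ket{\psi}$ have genuinely random $\pm 1$ outcomes, and an adversarial $\mathcal{R}$ may choose them. What the contextuality argument actually gives you is that the nine Paulis are each measured twice (once in their row, once in their column), and since no $\pm 1$ assignment to the nine cells can satisfy all six product constraints, at least one Pauli must return \emph{inconsistent} values across its two appearances. Any Pauli that stabilizes $P\ket{\psi}$ (equivalently, stabilizes $\ket{\psi}$ up to sign) is deterministic and therefore consistent whenever the oracle is correct; hence the inconsistent Pauli must be a non-stabilizer. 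The algorithm outputs \emph{that} Pauli, not an eigenvalue table. Likewise, when a call errs the output is not ``thrown into'' the stabilizer group; an error merely removes the guarantee, so the output \emph{may} be a stabilizer. The uniformity over the $12$ non-stabilizers (resp.\ $3$ stabilizers) then comes from the outer randomization acting transitively on each orbit, which is the point you correctly flag as the delicate step.
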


We apply this lemma repeatedly to learn many non-stabilizers of $C_1 \cdots C_n \ket{++}$, until we accumulate enough non-stabilizers to determine the state $C_1 \cdots C_n \ket{++}$ up to Pauli operations. Thus, a rewind oracle for the 2-Round Graph State Measurement Problem on a $2 \times \poly(n)$ grid, with randomized $\AC^{0}$ reduction circuits, can solve the $\NC^{1}$-hard problem in Theorem~\ref{thm:nc1hardcliffordproduct}. We do not state the theorem (or proof) since it will generalize it to the noisy version of the problem in \Cref{etolerance}, with proof in Appendix~\ref{improvednc1}.

\subsubsection{Noisy \texorpdfstring{$\NC^1$}{NC\textonesuperior} separation}

Now let us add an error correcting code with $\polylog(n)$ physical qubits per logical qubit, so that a noisy constant-depth quantum circuit can solve the problem with high probability. That is, we extend the original 2-Round Graph State Measurement Problem (on the $2 \otimes \poly(n)$ grid) to its noisy-extended version using \Cref{prob:noisy_interactive_def}. 

By applying \Cref{interactiveblackboxnoise}, we immediately get a constant-depth quantum circuit that solves the problem. 

\begin{cor}
     There exists a constant-depth classically-controlled Clifford circuit that solves the Noisy $2$-Round Graph State Measurement Problem on a $2 \times \poly(n)$ grid with probability at least $1 - \exp(- \Omega(\polylog (n)))$ when the noise rate is below some constant threshold.
\end{cor}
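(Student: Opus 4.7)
The plan is to observe that this corollary is essentially a direct specialization of \Cref{interactiveblackboxnoise} to the specific graph family at hand, so almost no new work is required. First, I would verify that the $2 \times \poly(n)$ grid is a constant-degree graph: every vertex has at most four neighbors (two within its row and one in the adjacent row, plus itself from wrap-around arguments one does not need here, so actually at most $3$). Thus, the family of graphs $\{G_n\}$ arising from $2 \times \poly(n)$ grids satisfies the constant-degree hypothesis of \Cref{interactiveblackboxnoise}.

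Second, I would point out that the family is clearly efficiently and uniformly constructible (a trivial Turing machine can enumerate the vertices and edges of a $2 \times \poly(n)$ grid in logspace), so the problem falls squarely into the Noisy $2$-Round Graph State Measurement framework of \Cref{prob:noisy_interactive_def}. Applying \Cref{interactiveblackboxnoise} verbatim then yields a constant-depth classically-controlled Clifford circuit solving the task with success probability at least $1 - \exp(-\Omega(\polylog(n)))$, provided the local stochastic noise rate is below the threshold guaranteed by that theorem (which in turn comes from propagating noise through the surface code via \Cref{bgktnoise} and decoding via \Cref{codeblockcorrect}).

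There is really no obstacle to overcome here: the whole point of phrasing \Cref{interactiveblackboxnoise} in terms of arbitrary constant-degree graph state measurement problems was to cover exactly these specializations. The only thing worth noting explicitly is the choice of codeblock size $m = \Theta(\polylog(n))$, which is what gives the $1 - \exp(-\Omega(\polylog(n)))$ success probability (as opposed to a weaker bound for $m = \Theta(\log n)$). Since the circuit for the $2 \times \poly(n)$ grid is just the specific instantiation of the natural circuit depicted in \Cref{interactivecircuit} (prepare the encoded graph state with constant-depth $\overline{\mathcal{C}_G}$ using \Cref{constantdepthclifford}, then apply the appropriate single-qubit logical basis-change Cliffords in each round), the proof reduces to a single invocation of the general theorem. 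I would keep the proof to one or two sentences.
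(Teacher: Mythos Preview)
Your proposal is correct and matches the paper's approach exactly: the paper states this corollary immediately after \Cref{interactiveblackboxnoise} with only the sentence ``By applying \Cref{interactiveblackboxnoise}, we immediately get a constant-depth quantum circuit that solves the problem,'' and gives no further proof. Your plan to verify the constant-degree hypothesis and invoke the general theorem is precisely what is intended (and arguably more detailed than the paper itself).
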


We now prove classical hardness for the same task. We first show that a $(\mathsf{BPAC}^0)^{\mathcal{R}}$ circuit can decide a $\NC^1$-hard problem even if the rewind oracle $\mathcal{R}$ fails on $\frac{1}{30}$ possible inputs. This is a slight improvement over the $\frac{2}{75}$ present in \cite{gs}. Then we reduce the from the noisy problem to the noiseless problem to complete the reduction. 

\begin{lemma}
	\label{etolerance}
	Let $\mathcal{R}$ be the rewind oracle for the $2$-Round Graph State Measurement Problem on a $2 \times \poly(n)$ grid. Suppose for a uniformly random input (first \& second round), $\mathcal{R}$ is incorrect with probability $\epsilon < \frac{1}{30}$. Then
\begin{equation}
	\NC^1 \subseteq (\mathsf{BPAC}^0)^{\mathcal{R}}
\end{equation}
\end{lemma}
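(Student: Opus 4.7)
The plan is to reduce from the $\NC^1$-hard decision problem of Theorem~\ref{thm:nc1hardcliffordproduct}: given $C_1, \ldots, C_n \in \mathcal{C}_2 / \mathcal{P}_2$, decide whether $\ket{\psi} := C_1 \cdots C_n \ket{++}$ is an $X$-basis state or a $Z$-basis state. Equivalently, I must determine which of the two candidate sign-free stabilizer groups $\{II, XI, IX, XX\}$ or $\{II, ZI, IZ, ZZ\}$ stabilizes $\ket{\psi}$, which in turn reduces to classifying each of the six Paulis in $\{XI, IX, XX, ZI, IZ, ZZ\}$ as a stabilizer or non-stabilizer of $\ket{\psi}$.

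The main engine is to invoke Lemma~\ref{selfreduce} in parallel $k = \polylog(n)$ times with fresh randomness. Each invocation makes six queries to $\mathcal{R}$ on uniformly random inputs, so by a union bound all six queries succeed with probability at least $1 - 6\epsilon$; under this event the output is a uniformly random one of the twelve non-stabilizer Paulis of $\ket{\psi}$. Consequently, each fixed non-stabilizer Pauli appears per trial with probability at least $(1-6\epsilon)/12$, while each fixed non-identity stabilizer appears with probability at most $6\epsilon/3 = 2\epsilon$ (the worst case being that every erroneous invocation returns a uniformly random stabilizer). The hypothesis $\epsilon < 1/30$ is precisely what makes $2\epsilon < (1-6\epsilon)/12$, producing a fixed constant gap between the per-trial frequencies of stabilizer and non-stabilizer Paulis.

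A Chernoff bound then ensures that with $k = \omega(\log n)$ samples, the empirical frequency of each of the six distinguishing Paulis concentrates to within $o(1)$ of its expectation except with inverse-polynomial failure probability. To finish, for each Pauli in $\{XI, IX, XX, ZI, IZ, ZZ\}$ I compare its empirical count against a threshold planted in the constant gap; the three low-frequency Paulis must form either $\{XI, IX, XX\}$ or $\{ZI, IZ, ZZ\}$, and outputting ``$X$-basis'' or ``$Z$-basis'' accordingly solves the original $\NC^1$-hard problem.

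I expect the main obstacle to be implementing this constant-gap frequency comparison inside $\AC^0$, since exact thresholding is not available. Standard $\AC^0$ approximate-majority or approximate-counting primitives (for instance Ajtai's circuits, or a $\BPAC^0$ random-subsampling trick with constant padding that re-centers the two expected frequencies symmetrically around $1/2$) should suffice, but one must verify that these compose cleanly with the rest of the reduction and that failure probabilities union-bound over the six Paulis and the $k$ parallel invocations. The constant $1/30$ appears tight for this extraction strategy, because the defining inequality $2\epsilon < (1-6\epsilon)/12$ degenerates to equality at $\epsilon = 1/30$; any further improvement would require either reducing the six oracle queries per call to Lemma~\ref{selfreduce} or distilling strictly more information from each sampled Pauli than a single coarse stabilizer/non-stabilizer label.
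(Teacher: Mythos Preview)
Your proposal is correct and follows essentially the same argument as the paper: both invoke Lemma~\ref{selfreduce}, compute the same two bounds $\Pr[\text{output a fixed stabilizer}] \le 6\epsilon/3 = 2\epsilon$ and $\Pr[\text{output a fixed nonstabilizer}] \ge (1-6\epsilon)/12$, observe that $\epsilon < 1/30$ is exactly the condition making these bounds separate by a positive constant, and then sample enough times to empirically distinguish the two classes. The only minor difference is that the paper notes a constant $O(1/\sigma^2)$ number of samples already suffices (since the gap $\sigma = (1-30\epsilon)/12$ is a constant), which sidesteps your $\AC^0$ approximate-counting concern entirely: with $O(1)$ samples all tallies and comparisons are constant-size and trivially in $\AC^0$.
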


We defer the proof to Appendix~\ref{improvednc1}. Clearly, replacing a rewind oracle for the noiseless problem with one for the noisy problem (having the same error rate) makes no difference, except that we require quasipoly-size $\AC^{0}$ for decoding. By decoding its output using \Cref{decode}, we conclude that a $(\qBPAC^0)^{\mathcal{R}}$ solves an $\NC^{1}$-hard problem.

\begin{theorem}
	\label{nc1hardness}
	Let $\mathcal{R}$ be the rewind oracle for the Noisy 2-Round Graph State Measurement Problem on a $2 \times \poly(n)$ grid. Suppose for a uniformly random input (first \& second round), $\mathcal{R}$ is incorrect with probability $\epsilon < \frac{1}{30}$. Then
\begin{equation}
	\NC^1 \subseteq (\mathsf{qBPAC}^0)^{\mathcal{R}}
\end{equation}
\end{theorem}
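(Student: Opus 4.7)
The plan is to reduce the noisy case to the noiseless case already handled by \Cref{etolerance} by interposing the decoding gadget of \Cref{decgadget} between the outer reduction circuit and the noisy oracle $\mathcal{R}$. The decoding wrapper has quasipolynomial size but constant depth, which is precisely what upgrades the conclusion from $\BPAC^0$ to $\qBPAC^0$.

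First, I would build an effective noiseless rewind oracle $\mathcal{R}'$ from $\mathcal{R}$. Given round-one and round-two inputs, the wrapper forwards them to $\mathcal{R}$ and collects the physical outcomes $(\mathcal{Y}_1, \mathcal{Y}_2)$ together with the syndrome $s$ reported at the end of the protocol. It then returns the logical outcomes $y^j = \Dec(\mathcal{Y}^j \oplus f^j(s,x))$ for each codeblock, computed by the gadget of \Cref{decgadget}. Because $m, m_{anc} = \polylog(n)$ in our construction, \Cref{decode} guarantees that this circuit has size $\exp(m, m_{anc}) = \exp(\polylog(n))$ and constant depth, i.e., it is a quasipoly-size $\AC^0$ circuit.

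Next, I would check that $\mathcal{R}'$ is a valid rewind oracle for the noiseless $2$-Round Graph State Measurement Problem on the $2 \times \poly(n)$ grid, with the same error probability $\epsilon < 1/30$. By \Cref{noisyextended}, $\mathcal{R}$ succeeds on a given input exactly when the decoded logical outcomes satisfy the original noiseless relation, so the error of $\mathcal{R}'$ on a uniformly random input coincides with that of $\mathcal{R}$. The rewind capability survives because $s$ is set at basis-state preparation---before any round begins---and is therefore a deterministic function of the oracle's internal state at every rewind point; concretely, one can run $\mathcal{R}$ through to the end once to learn $s$ and reuse this value during all subsequent rewinds.

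Finally, I would apply \Cref{etolerance} to $\mathcal{R}'$ to obtain $\NC^1 \subseteq (\BPAC^0)^{\mathcal{R}'}$, and unfold the wrapper: each call to $\mathcal{R}'$ is a single call to $\mathcal{R}$ composed with a quasipoly-size, constant-depth $\AC^0$ decoding circuit, so the overall reduction is a $(\qBPAC^0)^{\mathcal{R}}$ circuit, which is the desired conclusion. The only step requiring real care is the bookkeeping around when $s$ becomes available: since the problem definition only releases $s$ at the end of the protocol, one must perform an initial pass to obtain $s$ before the round-one outcomes can be decoded. This is not a genuine obstacle for a rewind oracle, but it is the single place where the noisy reduction departs from a pure black-box application of \Cref{etolerance}.
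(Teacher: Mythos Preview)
Your proposal is correct and follows essentially the same approach as the paper: wrap the noisy oracle with the decoding gadget of \Cref{decode} to obtain an effective noiseless rewind oracle, then invoke \Cref{etolerance}, with the quasipolynomial decoding cost accounting for the change from $\BPAC^0$ to $\qBPAC^0$. You are in fact more careful than the paper about the timing of $s$ (the paper simply asserts that decoding ``makes no difference''); your observation that an initial complete pass through $\mathcal{R}$ suffices to fix $s$ for subsequent rewinds is a reasonable way to handle this bookkeeping, though you should be aware that the problem definition does not literally force a classical device to commit to $s$ before seeing the second-round input, so a fully rigorous treatment would need a word on why this cannot hurt the reduction.
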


Quasi-polynomial size bounded-probability $\AC^{0}$ circuits ($\mathsf{qBPAC^0}$) may seem like an unnatural class, but we can relate this result to more familiar classes.

\begin{cor}
\label{cor:ac0[p]_separation}
	For any prime $p$, any probabilistic $\AC^0[p]$ circuit of depth $d$ and size $\exp(n^{1/2d})$ cannot pass the Noisy 2-Round Graph State Measurement Problem on a $2 \times \poly(n)$ grid with probability at least $\frac{29}{30}$ over a uniformly random input.
\end{cor}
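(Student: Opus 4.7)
The plan is to derive \Cref{cor:ac0[p]_separation} from \Cref{nc1hardness} together with the Razborov--Smolensky lower bound on the size of $\AC^0[p]$ circuits computing modular counting functions. Suppose, for contradiction, that there is a probabilistic $\AC^0[p]$ circuit $\mathcal{A}$ of depth $d$ and size $\exp(n^{1/2d})$ that succeeds on the Noisy $2$-Round Graph State Measurement Problem on the $2 \times \poly(n)$ grid with probability at least $29/30$ on a uniformly random input; then $\mathcal{A}$ errs on at most a $1/30$ fraction of inputs, which matches the hypothesis of \Cref{nc1hardness}.

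First, I promote $\mathcal{A}$ to a rewind oracle $\mathcal{R}$. By the rewind fact in \Cref{sec:rewindoracle}, $\AC^0[p]$ is closed under rewinding, so $\mathcal{R}$ is itself implementable by an $\AC^0[p]$ circuit of essentially the same depth and size as $\mathcal{A}$, preserving the $1/30$ error bound. \Cref{nc1hardness} then gives $\NC^1 \subseteq (\qBPAC^0)^{\mathcal{R}}$, and in particular for any prime $q \neq p$ the function $\MOD_q$ (which lies in $\NC^1$) is computable by a $\qBPAC^0$ circuit with oracle access to $\mathcal{R}$.

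Second, I substitute the $\AC^0[p]$ implementation of $\mathcal{R}$ into every oracle gate. The resulting probabilistic $\AC^0[p]$ circuit for $\MOD_q$ has depth $d + O(1)$ (the constant depth of the $\qBPAC^0$ reduction plus one parallel layer of oracle subcircuits) and size at most quasipolynomial in $n$ times $\exp(n^{1/2d})$. Invoking Razborov--Smolensky, any $\AC^0[p]$ circuit of depth $D$ computing $\MOD_q$ on $n$ bits requires size at least $\exp(\Omega(n^{1/2D}))$; plugging in $D = d + O(1)$ and comparing to the upper bound yields the desired contradiction for all sufficiently large $n$.

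The main obstacle is bookkeeping the constants through the composition. The quasipolynomial size overhead and the additive constant depth overhead coming from the $\qBPAC^0$ reduction must be tracked so that the final circuit still falls below the Razborov--Smolensky threshold for $\MOD_q$; the size exponent $1/2d$ in the statement of \Cref{cor:ac0[p]_separation} is calibrated precisely so that this accounting goes through. It is also worth remarking that the quasipolynomial blow-up ultimately comes from the $\exp(m, m_{\mathrm{anc}}) = \exp(\polylog(n))$ decoding circuit of \Cref{decode}, and that the list of classes closed under rewinding in \Cref{sec:rewindoracle} does include $\AC^0[p]$, which is what permits the first step of the argument in the first place.
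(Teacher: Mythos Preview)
Your proposal is correct and follows essentially the same route as the paper: assume an $\AC^0[p]$ device succeeds with probability $29/30$, pass to a rewind oracle of comparable size and depth, invoke \Cref{nc1hardness} to get a $\qBPAC^0$ reduction solving an $\NC^1$-hard problem, substitute the $\AC^0[p]$ implementation of the oracle, and contradict Razborov--Smolensky. The only step the paper makes explicit that you leave implicit is the derandomization: you apply Razborov--Smolensky directly to a \emph{probabilistic} $\AC^0[p]$ circuit, whereas the paper first observes that $\BPAC^0[p]$ is contained in non-uniform $\AC^0[p]$ of comparable size and depth (an Adleman-type argument) before invoking the lower bound, which is stated for deterministic circuits.
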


\begin{proof}
Suppose that an $\AC^0[p]$ circuit could pass the interactive protocol with probability $\frac{29}{30}$ over the uniform input distribution, implying a rewind oracle of similar size and depth that succeeds with probability at least $\frac{29}{30}$. By the previous results, this gives a $\BPAC^0[p]$ circuit of similar size and depth solving an $\NC^1$-hard problem, which combined with the fact that $\BPAC^0[p]$ is contained in non-uniform $\AC^0[p]$ with similar size and depth \cite{ab}, we arrive at the stated size lower bound by Razborov and Smolensky \cite{razborov, smolensky}.
\end{proof}

\subsection{\texorpdfstring{$\mathsf{\oplus L}$}{ParityL}-hardness}
\label{sec:parityL_hardness}

The main theorem of this section is an average-case $\mathsf{\oplus L}$-hardness result for the $2$-Round Graph State Measurement Problem on a $\poly(n) \times \poly(n)$ grid state, which leads to a conditional separation between noisy $\QNC^0$ circuits and classical log-space machines. We will do this by showing a worst-to-average-case reduction from a simple $\parityL$-hard problem to the $2$-Round Graph State Measurement Problem on a $\poly(n) \times \poly(n)$ grid state.

First, let us review this $\parityL$-hard problem, which concerns the parity of paths in a directed acyclic graph.  Define $M$ to be the set of upper-triangular adjacency matrices of monotone graphs---i.e., a directed graph on vertices $V=\{1, \ldots, n\}$ with  no edges from $j$ to $i$ for $j \geq i$.

\begin{lemma}
	\label{dagparity}
	(\cite{damm}) Let $\DAGParity$ be the problem of deciding the parity of paths from $1$ to $n$ in $A \in M$. Then
	\begin{equation*}
		\parityL \subseteq (\NC^0)^{\DAGParity}.
	\end{equation*}
\end{lemma}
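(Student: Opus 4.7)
The plan is to encode the computation of a $\parityL$ machine $M$ on input $x \in \{0,1\}^n$ as a DAG path-parity instance, and then observe that this encoding is local enough to be carried out by an $\NC^0$ reduction with access to a single $\DAGParity$ oracle call. The guiding principle is standard: the number of accepting computation paths of $M$ on $x$ equals the number of source-to-sink paths in the configuration graph of $M(x)$, so parity of accepting paths becomes parity of source-to-sink paths.

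First I would fix a nondeterministic log-space TM $M$ for the given $\parityL$ language. Without loss of generality I would modify $M$ so that it has a \emph{unique} accepting configuration $c_{\mathrm{acc}}$: whenever $M$ would accept, it runs a deterministic cleanup routine that wipes the work tape, returns both heads to position $0$, and enters a distinguished ``done'' state. Because the cleanup is deterministic, it is a one-to-one bijection between accepting computation paths of the original $M$ and source-to-$c_{\mathrm{acc}}$ paths of the modified machine, so the parity is preserved. I would also equip each configuration with a step counter (going up to the standard polynomial time bound for log-space), which guarantees that the configuration graph is a DAG: a configuration at time $t$ can only transition to a configuration at time $t+1$.

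Next I would enumerate configurations in the natural layered order (increasing time, then lexicographic within a layer), place the initial configuration $c_{\mathrm{init}}$ at index $1$ and $c_{\mathrm{acc}}$ at index $N$, where $N = \poly(n)$ is the total number of time-stamped configurations. Let $A \in \{0,1\}^{N \times N}$ be the adjacency matrix of this graph, which is upper triangular by construction, hence $A \in M$. The number of source-to-sink paths in $A$ equals the number of accepting computation paths of $M$ on $x$, so $\DAGParity(A)$ decides the language. It remains to show $A$ can be produced in $\NC^0$ from $x$. The entry $A[i,j]$ is $1$ iff configuration $j$ is a legal successor of configuration $i$; given that $i$ and $j$ are hard-wired constants from the circuit's perspective, this depends only on whether the input-tape symbol read in configuration $i$ matches the symbol the transition requires. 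The input-head position in configuration $i$ is determined by $i$ itself, so each output bit of $A$ depends on \emph{at most one} bit of $x$ (and on no bits at all when the transition does not consult the input tape). Hence the map $x \mapsto A$ is computed by a depth-$1$, bounded fan-in circuit, i.e., an $\NC^0$ reduction, and a single oracle call to $\DAGParity$ on the output yields the answer.

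The main obstacle I anticipate is the unique-accepting-configuration step: one must verify that the deterministic cleanup can be implemented in log space without accidentally merging distinct computation paths or creating extra ones. The cleanup must be purely deterministic (so it contributes a factor of exactly $1$ to the path count from each original accepting state) and must terminate in a canonical configuration regardless of the tape contents on acceptance; standard bookkeeping handles this, but it is the only place where care is required. Once that is set up, the DAG construction, the upper-triangular ordering, and the single-bit locality of each matrix entry all follow from routine properties of log-space configuration graphs.
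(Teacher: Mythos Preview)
The paper does not prove this lemma; it is stated with a citation to \cite{damm} and used as a black box. Your proposal is the standard configuration-graph argument underlying Damm's result, and it is correct: once the machine is normalized to have a unique accepting configuration and a time counter, each adjacency bit $A[i,j]$ depends on at most the single input-tape cell read in configuration $i$, so the map $x \mapsto A$ is $\NC^0$ and one $\DAGParity$ call finishes.
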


For a detailed description of how an instance of $\DAGParity$ is reduced to an instance of the $2$-Round Graph State Measurement Problem, we refer the reader to \cite{gs}.\footnote{Careful readers might notice that the $\parityL$-hard instances in \cite{gs} are derived from \emph{layered} directed acyclic graphs with directed edges only passing from one layer to the next.  This layered structure in the associated problem $\LDAGParity$ is used throughout those reductions.  Nevertheless, we will show how to convert our $\DAGParity$ instances into instances of $\LDAGParity$ and so the entire reduction goes through as normal.}  Our goal in this section will be to show that randomizing an instance of the $\DAGParity$ problem suffices for an average-case hardness result for the graph state measurement problem.  It will be useful to describe this randomized algorithm as a composition of three algorithms, $\gamma$, $\EAlg$, and $\DAlg$, which once composed together take $A \in M$ as input and output a sequence of operations\footnote{Formally, each one of these operations consists of a CNOT gate followed by a circuit of CZ and Phase gates.} which serve as input to the graph state measurement problem through measurement-based computation gadgets. We will describe $\gamma$, $\EAlg$, and $\DAlg$ in greater detail in the next section.

To obtain our average-case hardness result, we constrain the first round inputs of the graph state measurement problem to be from the set $\gamma \circ \EAlg \circ \DAlg(M)$. We constrain the second round inputs to be from a small set of Pauli measurements $I_2$, defined formally in \Cref{erroranalysisappendix}.
 
\begin{theorem}
\label{parityLmain}
Let $\mathcal{R}$ be the rewind oracle for 2-Round Graph State Measurement Problem on a $\poly(n) \times \poly(n)$ grid promised that input is from the image of $\gamma \circ \EAlg \circ \DAlg(M)$ in the first round and $I_2$ in the second round. If the oracle fails with probability\footnote{The probability is taken over the randomness of the input, as well as the internal randomness of $\mathcal{R}$, since we allow $\mathcal{R}$ answer randomly.} at most $\epsilon < \frac{1}{421}$ on a uniformly random input (first and second round) satisfying the promise, then 
\begin{equation*}
\mathsf{\parity L} \subseteq (\mathsf{BPAC^0})^{\mathcal{R}}.
\end{equation*}
\end{theorem}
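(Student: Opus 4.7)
The plan is to prove the theorem by reducing $\DAGParity$ to the promised graph state measurement problem in the average case, since by \Cref{dagparity} we have $\parityL \subseteq (\NC^0)^{\DAGParity} \subseteq (\BPAC^0)^{\DAGParity}$, so it suffices to exhibit $\DAGParity \in (\BPAC^0)^{\mathcal{R}}$. Composing the two containments then yields $\parityL \subseteq (\BPAC^0)^{\mathcal{R}}$.

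Given a worst-case instance $A \in M$, my reduction runs in three logical phases. First, compute $\DAlg(A)$ in $\AC^0$ to massage the raw $\DAGParity$ input into a form suitable for randomized encoding (e.g., into the layered $\LDAGParity$ form hinted at in the paper's footnote, so that the existing \cite{gs} reduction applies). Second, apply $\EAlg$, the randomized encoding of Applebaum--Ishai--Kushilevitz, using fresh randomness to produce an encoded instance whose induced distribution on the promise set is (essentially) uniform regardless of $A$. Third, apply $\gamma$ to convert this into a first-round query for $\mathcal{R}$, together with uniformly random second-round measurements drawn from $I_2$. The rewind oracle is invoked a constant number of times (using rewinds to get multiple second-round answers from a single first-round state), and a constant-depth postprocessor extracts the parity bit; this is exactly the style of self-testing / non-contextuality argument used in \Cref{selfreduce} and its $\parityL$ analogue in \cite{gs}, but now instantiated on top of a uniformly random encoded query so that the average-case oracle guarantee kicks in.

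Two things need to be verified for this to go through, and the second is the real work. The easy piece is that $\DAlg$, $\EAlg$, $\gamma$, and the postprocessor all live in (randomized) $\AC^0$: $\DAlg$ and $\gamma$ are local combinatorial rewrites, $\EAlg$ is by construction a local/low-depth randomized encoding (this is the point of the AIK framework), and the postprocessor is a constant-depth aggregation of the Pauli-twirled oracle outputs, with Pauli ``pushing'' through the constant-depth second-round Clifford $\overline{\mathcal{C}_B}$ handled as in \Cref{interactiveblackboxnoise}. The hard piece is showing that for \emph{every} $A \in M$, the distribution of $\gamma \circ \EAlg \circ \DAlg(A)$ paired with a uniform draw from $I_2$ is close enough to the uniform distribution over the promised input set that the oracle's $\epsilon$-error guarantee survives with at most a small multiplicative blow-up per query. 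This is precisely the property the randomized encoding is designed to provide: the randomness of $\EAlg$ completely masks $A$ while preserving the accessible parity information, so conditioning on $A$ induces (statistically close to) the uniform promise distribution.

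The main obstacle, as flagged above, is this uniformity claim. In \cite{gs} the self-randomization subroutines sufficed to hide an individual query inside a uniform-looking one; here the entire $\DAGParity$ instance must be hidden, which is why the randomized-encoding machinery is brought in. Once uniformity (to within the tolerance allowed by the constant number of oracle calls) is established, the $\epsilon < 1/421$ threshold follows from a union bound: if the reduction makes $k$ oracle queries and requires all of them to be correct, then one needs the per-query error times $k$ to stay strictly below the worst-case error that the aggregation step can still decode. A careful bookkeeping of how many rewinds are required per logical measurement in the encoding, and how many such measurements feed the parity extractor, yields the specific constant $1/421$; this is a mechanical but tedious computation analogous to the $1/30$ calculation in \Cref{etolerance} for the $\NC^1$ case.
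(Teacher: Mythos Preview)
Your high-level strategy is right, but there are two genuine gaps. First, you have the roles of $\DAlg$ and $\EAlg$ reversed: in the paper $\DAlg$ is the \emph{randomized} step (it applies the Applebaum--Ishai--Kushilevitz encoding and then converts to a layered DAG), while $\EAlg$ is the \emph{deterministic} injective reduction $\LDAGParity \to \CNOTMult$ taken directly from \cite{gs} (\Cref{Fselfreduction}). More importantly, the distribution of $\gamma \circ \EAlg \circ \DAlg(A)$ is \emph{not} uniform over the promise set regardless of $A$: a randomized encoding by construction reveals the answer bit, so the output is uniform over exactly one \emph{half} of the promise set ($\mathcal{C}_{yes}$ or $\mathcal{C}_{no}$) depending on the parity of paths in $A$. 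This is the ``half-randomizing'' property (\Cref{half-randomizing}, \Cref{Fhalfrandomizing}), and it costs a factor of $2$ in the error bound via \Cref{2eps}. Your uniformity claim, as stated, is false and would not follow from any randomized-encoding argument.

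Second, the derivation of $1/421$ is not a union bound over constantly many calls. The paper's analysis pays the factor $2$ above, then a factor $|I_2|=5$ by union bound over second-round inputs, then applies \emph{Markov's inequality} over the randomness $r_d$ of $\DAlg$ to bound the probability that the conditional first-round error exceeds the threshold $\tfrac{1}{21}$ required by \Cref{restateerror}; this gives $210\epsilon$, and one needs $210\epsilon + \alpha < \tfrac{1}{2}$ for majority voting over $O(\log n)$ independent repetitions to succeed, whence $\epsilon < \tfrac{1}{421}$. The $\tfrac{1}{21}$ threshold itself comes from the Pauli-sampling protocol of \Cref{avgcaseanalysis}, which makes $O(\log n)$, not constantly many, oracle calls. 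Your sketch omits both the Markov step and the intermediate threshold lemma, which are exactly where the constant comes from.
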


Appealing to \Cref{decode}, we obtain the following corollary for the Noisy 2-Round Graph State Measurement Problem: 
\begin{cor}
\label{cor:noisy_rewind_goodness}
If the rewind oracle $\mathcal{R}'$ fails with probability $\epsilon < \frac{1}{421}$ on uniformly random inputs of the Noisy 2-Round Graph State Measurement Problem (under the same promise as \Cref{parityLmain}), then
\begin{equation*}
	\mathsf{\parity L} \subseteq (\mathsf{qBPAC^0})^{\mathcal{R}'}.
\end{equation*}
\end{cor}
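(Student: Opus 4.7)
The plan is to convert the noisy rewind oracle $\mathcal{R}'$ into a noiseless rewind oracle $\mathcal{R}$ by wrapping it with the $\qAC^0$ decoding circuit of \Cref{decode}, and then to invoke \Cref{parityLmain}. This is essentially the same reduction that was used in the proof of \Cref{nc1hardness} from \Cref{etolerance}, just applied to the $\parityL$-hard setting rather than the $\NC^1$-hard setting.

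Concretely, I would build an oracle $\mathcal{R}$ for the (noiseless) $2$-Round Graph State Measurement Problem on the $\poly(n)\times\poly(n)$ grid as follows. On its first invocation, $\mathcal{R}$ forwards the first-round logical bases $x_1$ (drawn from the promised set $\gamma\circ\EAlg\circ\DAlg(M)$) to $\mathcal{R}'$, receives physical outcomes $\mathcal{Y}_1$, then forwards the second-round logical bases $x_2 \in I_2$, receives $\mathcal{Y}_2$ together with the committed syndrome $s$, and stores $s$. It then outputs $\Dec(\mathcal{Y}^i \oplus f^i(s,(x_1,x_2)))$ for each logical vertex $i$, which is a $\qAC^0$ computation by \Cref{decode}. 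For every subsequent rewind to the end of round one, $\mathcal{R}$ rewinds $\mathcal{R}'$ to the same point, queries with the new second-round input, and decodes using the previously stored $s$.

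Next I would check that $\mathcal{R}$ inherits the failure budget of $\mathcal{R}'$. The uniform input distributions agree for the two problems because both accept the same logical bases in each round. By the definition of the noisy-extended relation (\Cref{noisyextended}) and the acceptance criterion of \Cref{prob:noisy_interactive_def}, whenever $\mathcal{R}'$ succeeds on a query, the decoded outputs form a valid transcript for the underlying noiseless task on the same input. Hence $\mathcal{R}$ fails with probability at most $\epsilon < 1/421$ on a uniformly random promised input. Applying \Cref{parityLmain} then yields $\parityL \subseteq (\BPAC^0)^{\mathcal{R}}$, and since $\mathcal{R}$ is realized by a $\qAC^0$ circuit with oracle access to $\mathcal{R}'$, composing the reductions gives $\parityL \subseteq (\mathsf{qBPAC}^0)^{\mathcal{R}'}$.

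The only delicate point is the treatment of $s$ under rewinding: the argument relies on the fact that in the rewind-oracle model the device's classical state (which determines the syndrome it commits to) is preserved across rewinds, so the $s$ recorded during the initial full pass describes the encoded state for all subsequent rewound second-round queries as well. This is the same observation used implicitly in the noisy $\NC^1$ reduction, and it is the only step specific to the noisy setting; everything else is a routine bookkeeping check that the polynomial-to-quasipolynomial blowup from \Cref{decode} propagates exactly as it did in the proof of \Cref{cor:ac0[p]_separation}.
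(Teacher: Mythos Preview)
Your proposal is correct and takes essentially the same approach as the paper: the paper states the corollary as an immediate consequence of \Cref{decode} applied to \Cref{parityLmain}, exactly the wrap-and-decode strategy you describe (and the same pattern used to pass from \Cref{etolerance} to \Cref{nc1hardness}). Your discussion of how the syndrome $s$ behaves under rewinding is more explicit than anything the paper spells out, but it is consistent with the paper's treatment and not a deviation in approach.
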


The corollary above suffices to prove the conditional separation mentioned in the Introduction (\Cref{thm:parityL_noise_sep}): assuming $\parityL \not\subseteq (\qBPAC^0)^{\cL}$, any probabilistic $\cL$ machine fails the Noisy $2$-Round Graph State Measurement Problem with some constant probability over a uniformly random input.  Suppose that some probabilistic $\cL$ machine fails with probability $o(1)$ over uniform input.  Then by \Cref{cor:noisy_rewind_goodness}, we have $\parityL \subseteq (\mathsf{qBPAC^0})^{\cL}$, contradicting the assumption.

\subsubsection{Input half-randomization and \texorpdfstring{$\gamma \circ \EAlg \circ \DAlg$}{gamma compose E compose D} description}

Before detailing each of the three algorithms, let us briefly describe the purpose of each.  The algorithm $\gamma$ captures the randomization scheme described in \cite{gs}, whose main goal is to randomize the \emph{second} round of the interactive protocol. The result is that an adversarial oracle cannot prevent us from learning the state $\ket{\psi}$ created by the first round measurements. We keep $\gamma$ so we do not have to re-invent the wheel for the second round. However, $\gamma$ does not sufficiently randomize the first round to conceal the instance of the $\parityL$-hard problem we are trying to solve. It is conceivable that the oracle has some small probability of error, but coincidentally concentrates that error on the instance (or small number of instances which $\gamma$ randomizes to the same distribution) that we wish to solve. 

The other procedure, $\DAlg$, is adapted from work of Applebaum, Ishai and Kushilevitz \cite{aik}. They introduce the notion of a \emph{randomized encoding}, where it is possible to transform $x$ to a probability distribution that hides everything about $x$ except some function $f(x)$. For example, and especially relevant to us, they show how a binary matrix of a certain form can be mapped to a uniformly random matrix of the same form, and same determinant modulo $2$. In this case, the randomized encoding hides everything about the input (which, recall, is essentially the problem instance) except for one bit, a property we call \emph{half randomizing}. It turns out the parity of the determinant is even $\parityL$-complete, but since our quantum gates are inherently reversible (i.e., non-zero determinant), we must find a more nuanced way to randomize a $\parityL$-complete problem to adequately randomize the first round. 

Finally, $\EAlg$ connects the two procedures by performing a reduction from the output of $\DAlg$ to the input for $\gamma$.

To describe this approach in more detail, let us recall two additional problems from \cite{gs}: $\LDAGParity$ (the layered variant of $\DAGParity$) is the problem of deciding whether a layered DAG has an even number of paths from a fixed start vertex to a fixed end vertex; $\CNOTMult$ is the problem of deciding whether a sequence of CNOT gates (where each gate acts on two of polynomially many input wires) multiplies to the identity, promised that they multiply to either the identity or a fixed 3-cycle.  Let us also borrow the notation $\CNOT_m$ to denote the set of 2-input CNOT gates that can act on $m$ wires. 

It was shown \cite{gs} that $\LDAGParity$ is $\NC^0$-reducible to $\CNOTMult$ using the deterministic algorithm $\EAlg$.
\begin{lemma}[\cite{gs}]
	\label{Fselfreduction}
	There exists a deterministic algorithm $\EAlg$ in $\NC^0$ that takes as input an adjacency matrix $A \in \{0,1\}^{n \times n}$ of a layered DAG and outputs a sequence of $\CNOT$ gates $g_1,...,g_{\poly(n)} \in \CNOT_{\poly(n)}$ such that 
	\begin{enumerate}
	    \item $A \in \LDAGParity$ iff $(g_1,...,g_{\poly(n)}) \in \CNOTMult$.
	    \item $\EAlg$ is injective.
	\end{enumerate}
\end{lemma}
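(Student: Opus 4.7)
The plan is to encode the $\LDAGParity$ bit as a single entry of the matrix produced by an input-dependent $\CNOT$ circuit, and then to collapse that single entry into the identity-vs.-$3$-cycle distinction via a fixed, input-independent gadget.

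For the input-dependent part of $\EAlg$, I would allocate wires indexed by (layer, vertex) of the layered DAG together with $O(1)$ auxiliary wires $z_1, z_2$, and for each edge $(a,b)$ of the DAG (i.e., each $A[a,b]=1$) emit a single $\CNOT$ in a predetermined slot of the output sequence, with wires determined by the edge's endpoints. Each emitted gate depends on a single input bit, keeping this part in $\NC^0$. Call the resulting circuit $C_A$. A straightforward induction on layers, using that chained $\CNOT$s compute $\mathbb{F}_2$ matrix-matrix products, shows that $C_A$ is block-triangular with identity diagonal blocks and that its extreme off-diagonal block is the iterated product $A_1 A_2 \cdots A_{k-1}$ of the layer adjacency matrices; in particular the single entry $C_A[(1,s),(k,t)]$ is precisely the parity of $s$-to-$t$ paths, so $A \in \LDAGParity$ iff this entry is $0$.

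I would then append a fixed (input-independent) $\CNOT$ gadget whose effect is a commutator-style construction---conjugating $C_A$ by a fixed $3$-cycle $\pi$ on an auxiliary triple of wires involving $(1,s)$ and composing with $C_A^{-1}$ (realized simply by reversing the prefix, since each $\CNOT$ is an involution over $\mathbb{F}_2$)---arranged so that all $C_A$-dependent contributions cancel except those mediated by the $((1,s),(k,t))$ entry. To avoid the gadget picking up the rest of the $(1,s)$ column (which also records path parities from $s$ to other vertices), the construction augments the DAG with two dummy funnel layers: a unique source $s^\ast$ connected only to $s$, and a unique sink $t^\ast$ reachable only from $t$. The resulting product is the identity when the path parity is $0$ and a fixed non-identity element of $\mathrm{GL}_{\mathrm{poly}(n)}(\mathbb{F}_2)$ when it is $1$, tuned to coincide with the prescribed $3$-cycle of $\CNOTMult$. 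Because the suffix depends only on $n, s, t$, it is itself $\NC^0$-computable. Injectivity is immediate: each designated wire pair in the prefix carries a $\CNOT$ iff the corresponding $A[a,b]=1$, and the suffix is identical across all inputs, so the full $\CNOT$ sequence determines $A$.

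The main technical obstacle is the commutator analysis: verifying that the combination of commutator, funnels, and $\CNOT$ emission order really does cancel every $C_A$ contribution except the single path-parity entry, and that the surviving non-identity case matches the fixed $3$-cycle named by $\CNOTMult$. This is a purely linear-algebraic calculation in characteristic $2$, but delicate---naive choices of $\pi$ pick up the whole $(1,s)$-column of $C_A$ and produce a non-permutation matrix in the non-identity case---and it is where the specific form of the $\EAlg$ gadget from \cite{gs} does the work.
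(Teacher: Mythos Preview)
The paper does not prove this lemma at all; it is stated with attribution to \cite{gs} and used as a black box. There is therefore no ``paper's own proof'' to compare your proposal against.

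That said, your sketch is a plausible outline of how the \cite{gs} construction goes: build a block-triangular $\CNOT$ matrix whose extreme off-diagonal block is the iterated layer-product (so one entry records the $s$--$t$ path parity), then use a fixed commutator-style gadget to collapse that single bit into identity versus the designated $3$-cycle. You correctly flag the commutator analysis as the delicate step and explicitly defer it to \cite{gs}, so as written this is a proof \emph{plan}, not a proof. Two small points worth tightening if you flesh it out: first, to stay in $\NC^0$ with a fixed-length output sequence you must say what occupies a slot when the corresponding edge bit is $0$ (the usual fix is a dummy $\CNOT$ on auxiliary wires, so each output gate's description still depends on $O(1)$ input bits); second, ``reversing the prefix'' to realize $C_A^{-1}$ means re-emitting the same input-dependent gates in reverse order, so you should check that injectivity still follows cleanly (it does, since the prefix alone already determines $A$).
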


Then the output of $\EAlg$, $(g_1,...,g_{\poly(n)})$, is converted to first round input for the 2-Round Graph State Measurement Problem using the randomized algorithm $\gamma$. We prove the following crucial property about $\gamma$ in \Cref{appendixparitylapplications}.

\begin{lemma}
	\label{fproperties}
	For any distinct sequences of $\CNOT$ gates $(a_1,...,a_{\poly(n)}) \neq (b_1,...,b_{\poly_n})$, $\gamma(a_1,...,a_{\poly(n)})$ and $\gamma(b_1,...,b_{\poly(n)})$ are uniform distributions over disjoint sets of the same cardinality.
\end{lemma}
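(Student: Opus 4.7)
The plan is to exploit the structural description of $\gamma$ given earlier in the excerpt: on input $(a_1,\ldots,a_k)$ with $k = \poly(n)$, the algorithm produces a sequence of $k$ ``operations,'' where the $i$-th operation consists of the CNOT gate $a_i$ itself, followed by a circuit of CZ and Phase gates sampled (using fresh random bits) from some set $\mathcal{D}_i$ that depends only on the index $i$ (and on $n$), not on the particular CNOT $a_i$. My proposal is to read off the three required properties directly from this product-like structure.

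First I would verify \emph{disjointness}. Given any element of $\mathrm{supp}(\gamma(a_1,\ldots,a_k))$, one can recover $(a_1,\ldots,a_k)$ by simply reading off the CNOT gate at the start of each of the $k$ output operations---CNOT gates are syntactically distinguishable from CZ/Phase gates, so no decoding subtlety arises. Consequently, if $(a_1,\ldots,a_k) \neq (b_1,\ldots,b_k)$, then for some index $i$ we have $a_i \neq b_i$, and every output of $\gamma(a)$ contains $a_i$ at position $i$ while every output of $\gamma(b)$ contains $b_i$, so $\mathrm{supp}(\gamma(a)) \cap \mathrm{supp}(\gamma(b)) = \emptyset$.

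Next I would establish \emph{uniformity and equal cardinality} simultaneously. By construction, $\gamma$ consumes some random string $r$ whose length depends only on $n$, and the map $r \mapsto \gamma(a;r)$ is the concatenation of $k$ independent maps, each of which samples a CZ/Phase wrapper uniformly from $\mathcal{D}_i$. Since these samplers are bijections between their random-bit blocks and $\mathcal{D}_i$, the overall map $r \mapsto \gamma(a;r)$ is a bijection onto $\mathrm{supp}(\gamma(a))$. This yields uniformity of $\gamma(a)$ on its support, and the cardinality of that support equals $\prod_{i=1}^{k} |\mathcal{D}_i|$, a quantity independent of $a$.

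The main obstacle will be confirming that the $\gamma$ inherited from \cite{gs} really has this input-independent, product sampling structure---in particular, that the randomization neither touches the CNOT backbone nor uses a CZ/Phase sample space whose size depends on $a_i$. If there is any coupling of random bits across positions, the three properties still follow provided the sampling space as a whole depends only on $n$, but one then has to verify bijectivity of the global random-to-output map rather than factor it coordinate by coordinate. Once the product structure (or a suitable global analogue) is pinned down, disjointness, uniformity, and equal cardinality are immediate and the lemma follows.
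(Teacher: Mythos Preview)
Your model of $\gamma$ is wrong, and this undermines the main argument. The actual procedure (see Algorithm~1 in the appendix) is a \emph{Kilian randomization with a palindrome}: on input $(g_1,\ldots,g_n)$ it samples $f'\in H_3^{\oplus}$ and $h_1,\ldots,h_{2n-1}\in H_m$ and outputs the length-$2n$ sequence
\[
(f'g_1h_1,\; h_1^{-1}g_2h_2,\;\ldots,\; h_{n-1}^{-1}g_nh_n,\; h_n^{-1}fg_nh_{n+1},\;\ldots,\; h_{2n-1}^{-1}g_1).
\]
So the output length is $2n$ (not $n$), the second half replays the CNOTs in reverse, and adjacent coordinates share the \emph{same} $h_i$---the ``fresh, independent CZ/Phase wrapper per coordinate'' picture is false. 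Moreover the paper describes the support via two constraints: (1) $g_i'H_m=g_iH_m$ and $g_{n+i}'H_m=g_{n-i+1}H_m$, and (2) $g_1'\cdots g_{2n}'\in H_3^{\oplus}$. Your product-cardinality count $\prod_i|\mathcal{D}_i|$ ignores constraint~(2) entirely, and since that constraint involves the product of all coordinates (which in turn depends on the actual CNOTs $g_i$), it is not obvious a~priori that the support size is input-independent.

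Your disjointness argument is essentially the paper's (distinct CNOTs lie in distinct $H_m$-cosets). For equal cardinality, your fallback---verify global bijectivity from random bits to output---would actually succeed once you have the correct structure above (you can recover $h_{2n-1},\ldots,h_1,f'$ sequentially from the output and the input $g_i$'s). The paper does \emph{not} take that route; instead it builds an explicit bijection between $\operatorname{supp}\gamma(a)$ and $\operatorname{supp}\gamma(b)$: write each $a_i'=a_i\tilde h_i$, push all the $\tilde h_i\in H_m$ to the right through the $a$-CNOTs using $H_m\trianglelefteq G_m$, then push them back left through the $b$-CNOTs. This conjugation argument transparently preserves the global product and hence constraint~(2), which is the point your product decomposition misses.
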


The output of $\gamma(g_1,...,g_n)$ is a random sequence of operations; each operation is a $\CNOT$ gate followed by a circuit of CZ and Phase gates. We describe precisely how this sequence is generated in \Cref{appendixparitylapplications}. We note that although the composition $\gamma \circ \EAlg$ is a randomized algorithm converting an instance of $\LDAGParity$ to first round input to the 2-Round Graph State Measurement Problem, its output does not hide the particular instance of $\LDAGParity$ it was given as input.  To obtain this property, we compose $\gamma \circ \EAlg$ with the randomized algorithm $\DAlg$ which employs the random self-reducibility of $\DAGParity$. Altogether, the image of $\gamma \circ \EAlg \circ \DAlg$ is the set of first round inputs to the rewind oracle. We will use $r_{\gamma}$ and $r_d$ as notation for the random bits of $\gamma$ and $\DAlg$ when disambiguation is required.  To prove that $\gamma \circ \EAlg \circ \DAlg$ conceals the first-round input while simultaneously preserving the ability to decided membership in $\DAGParity$, we define a property called {\it half-randomizing}.

\begin{definition}
\label{half-randomizing}

Let $\Pi = (\Pi_{yes}, \Pi_{no})$ be a promise problem. A randomized algorithm $f$ over the promise $\Pi$ is \emph{half-randomizing} if there are two disjoint, equal-sized sets $\mathcal C_{yes}$ and $\mathcal C_{no}$, such that for $i \in \{yes, no\}$
\begin{equation*}
f(x) \equiv U_{\mathcal C_i} \text{ for all } x \in \Pi_{i},
\end{equation*}
where $U_{\mathcal C_i}$ is the uniform distribution over $\mathcal C_i$.

\end{definition}

Functions which are half-randomizing will be useful to us because they can spread out the probability of a few problematic instances relatively evenly amongst all the inputs:

\begin{observation}
\label{2eps}
Let $f$ be half-randomizing. If a uniformly random element of the image $\mathcal{C}_{yes} \cup \mathcal{C}_{no}$ satisfies some property $P$ with probability $\leq \epsilon$, then for any $x \in \Pi_{yes} \cup \Pi_{no}$,  
$$
\Pr[f(x) \text{ has property } P] \leq 2 \epsilon.
$$
\end{observation}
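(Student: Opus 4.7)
The plan is to unpack the definition of half-randomizing and exploit the fact that $\mathcal{C}_{yes}$ and $\mathcal{C}_{no}$ are disjoint and of equal size. Let $p_y$ denote the probability that a uniformly random element of $\mathcal{C}_{yes}$ satisfies $P$, and similarly let $p_n$ denote the analogous probability for $\mathcal{C}_{no}$. Since $|\mathcal{C}_{yes}| = |\mathcal{C}_{no}|$, a uniformly random element of $\mathcal{C}_{yes} \cup \mathcal{C}_{no}$ lies in each half with probability exactly $1/2$, so the hypothesis says
$$
\tfrac{1}{2}\,p_y + \tfrac{1}{2}\,p_n \;\leq\; \epsilon.
$$

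Since $p_y, p_n \geq 0$, it follows that each of $p_y$ and $p_n$ is individually at most $2\epsilon$. Now take any $x \in \Pi_{yes} \cup \Pi_{no}$. By the half-randomizing property, $f(x)$ is distributed as $U_{\mathcal{C}_{yes}}$ when $x \in \Pi_{yes}$ and as $U_{\mathcal{C}_{no}}$ when $x \in \Pi_{no}$, so in either case
$$
\Pr[f(x)\text{ has property } P] \in \{p_y, p_n\} \;\leq\; 2\epsilon,
$$
which gives the claim.

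Since the entire argument is a one-line averaging inequality combined with the definition, there is no real obstacle — the only subtlety to be careful about is that the equal cardinality of $\mathcal{C}_{yes}$ and $\mathcal{C}_{no}$ is what makes the factor of $2$ tight; without it, we would instead get a bound proportional to $|\mathcal{C}_{yes} \cup \mathcal{C}_{no}|/\min(|\mathcal{C}_{yes}|,|\mathcal{C}_{no}|)$ times $\epsilon$. Stating this as a formal \texttt{proof} environment and noting that the bound is symmetric in the two branches is all that is needed.
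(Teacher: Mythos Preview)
Your proof is correct and is exactly the intended argument: the paper states this as an observation without proof, and the one-line averaging you give (using $|\mathcal{C}_{yes}| = |\mathcal{C}_{no}|$ to split the uniform measure evenly, then bounding each half by $2\epsilon$) is the natural justification.
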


To be concrete, consider a probabilistic oracle that fails with probability $\epsilon$ over a uniformly random input in $\mathcal{C}_{yes} \cup \mathcal{C}_{no}$, and suppose there is a reduction from $\Pi_{yes} \cup \Pi_{no}$ to $\mathcal{C}_{yes} \cup \mathcal{C}_{no}$. An ordinary reduction gives us only average-case guarantees---many inputs in $\Pi_{yes} \cup \Pi_{no}$ map to something the oracle can solve with high probability, but any particular $x \in \Pi_{yes} \cup \Pi_{no}$ may or may not be solved with high probability. If the reduction $f$ is half-randomizing, then $x$ maps to a uniform distribution over $\mathcal{C}_{yes}$ or $\mathcal{C}_{no}$, and can be solved with probability at least $1 - 2 \epsilon$, over the randomness of $f$ and the probabilistic oracle, by the observation.

We now describe the half-randomizing algorithm $\DAlg$, which maps an instance of $\DAGParity$ to $\LDAGParity$.  It is composed of two separate subroutines: first, the input (an adjacency matrix for a DAG) is mapped to a uniformly random input sharing the same parity of paths from source to target; second, we convert the DAG to a layered DAG.  In the first subroutine, we use randomized encoding techniques introduced by \cite{aik}:

\begin{fact}
\label{det}
(Fact 4.13, \cite{aik}) Let $A$ be an $n \times n$ monotone adjacency matrix, and let $L$ be the $(n-1) \times (n-1)$ top-right submatrix of $A-I$. Then $\det(L) \mod 2$ is the parity of the number of paths from $1$ to $n$ in $A$. 
\end{fact}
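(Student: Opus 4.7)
The plan is to connect the determinant of $L$ to the $(1,n)$ entry of $(I - A)^{-1}$ via Cramer's rule, and then use the classical fact that $(I-A)^{-1}$ is the path-counting matrix of the DAG.

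First I would note that since $A$ is monotone, it is strictly upper triangular and therefore nilpotent, so the Neumann series $(I - A)^{-1} = I + A + A^2 + \cdots$ terminates. The $(i,j)$ entry of $A^k$ counts directed paths of length $k$ from $i$ to $j$, so $[(I-A)^{-1}]_{1,n}$ equals the total number of paths from vertex $1$ to vertex $n$ in $A$. Reducing modulo $2$, this is exactly the quantity we want to equate with $\det(L) \bmod 2$.

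Next I would apply Cramer's rule to express this specific entry of the inverse as a cofactor quotient:
\begin{equation*}
[(I-A)^{-1}]_{1,n} \;=\; \frac{C_{n,1}(I-A)}{\det(I-A)} \;=\; (-1)^{n+1}\,\frac{\det(M_{n,1}(I-A))}{\det(I-A)},
\end{equation*}
where $M_{n,1}(I-A)$ is the submatrix obtained by deleting row $n$ and column $1$ from $I-A$. Because $I-A$ is upper triangular with $1$'s on the diagonal, $\det(I-A) = 1$. The key identification is that the submatrix $M_{n,1}(I-A)$ consists exactly of rows $1,\ldots,n-1$ and columns $2,\ldots,n$ of $I-A$; that is the top-right $(n-1)\times(n-1)$ block of $I-A = -(A-I)$, so $M_{n,1}(I-A) = -L$.

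Combining these yields $[(I-A)^{-1}]_{1,n} = (-1)^{n+1}(-1)^{n-1}\det(L) = \det(L)$, and reducing modulo $2$ gives the claim. The only real bookkeeping is the sign tracking from the cofactor expansion and from the $-L$ identification, but both become vacuous mod $2$, so the main obstacle — matching the specific block $L$ defined in the statement to the correct cofactor of $I-A$ — reduces to a direct index comparison.
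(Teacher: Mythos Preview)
Your proof is correct. The paper does not actually prove this statement; it is quoted verbatim as a fact from \cite{aik} and used as a black box. Your argument via the Neumann series for $(I-A)^{-1}$ as the path-counting matrix, followed by Cramer's rule to identify the $(1,n)$ entry with the appropriate cofactor, is the standard way to establish this identity and is essentially the argument in the original source. The sign bookkeeping is right (and, as you note, irrelevant mod $2$ anyway), and the identification of $M_{n,1}(I-A)$ with the top-right $(n-1)\times(n-1)$ block is exactly the definition of $L$ up to sign.
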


Since $\DAGParity$ is $\parityL$-hard, this fact shows that the determinant of $L$ encodes the answer to a $\mathsf{\parity L}$-hard problem. Note that $L$ is upper-triangular {\it except} for $-1$ on its second diagonal. Furthermore, the randomized encoding scheme of Applebaum, Ishai and Kushilevitz allows us to sample uniformly from matrices that have the same form and same determinant:

\begin{proposition}[Lemma~4.17, \cite{aik}]
\label{sample}
 Given matrix $L$, a randomized $\NC^0$ circuit can sample polynomially many Boolean variables
\begin{equation*}
	\{(K_{i,j}^{(1)}, K_{i,j}^{(2)}, \ldots )\}_{1 \leq i \leq j \leq n-1}
\end{equation*}
uniformly such that the $(n-1) \times (n-1)$ matrix $K^{\oplus}$ defined by $K_{i,j}^{\oplus} = \oplus_{\ell} K_{i,j}^{(\ell)}$, $-1$ on its second diagonal, and $0$ below it has\footnote{The construction of $\{(K_{i,j}^{(1)}, K_{i,j}^{(2)}, \ldots)\}_{1 \leq i \leq j \leq n-1}$ proceeds by sampling random upper triangular matrices with a particular structure, $R_1$ and $R_2$, to produce $K^{\oplus} = R_1LR_2$. Each element in the upper triangular region of $K^{\oplus}$ can be described by a sum of degree-3 monomials whose values after some further randomization, $\{(K_{i,j}^{(1)}, K_{i,j}^{(2)}, \ldots)\}_{1 \leq i \leq j \leq n-1}$, are computable in $\NC^0$. For full details, see \cite{aik}.}
\begin{equation}
\label{detconstraint}
	\det(K^{\oplus}) \equiv \det(L) \mod 2.
\end{equation}
\end{proposition}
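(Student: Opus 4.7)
The plan is to realize $K^{\oplus}$ as the product $R_1 L R_2 \bmod 2$, where $R_1$ and $R_2$ are random matrices of determinant~$1$ sampled from a family tailored so that $R_1 L R_2$ remains of the prescribed shape (upper triangular with $-1$ on the second diagonal and zeros below). Since $\det(R_1) = \det(R_2) = 1$, the condition $\det(K^{\oplus}) \equiv \det(L) \pmod 2$ is then automatic from multiplicativity of the determinant, giving \Cref{detconstraint} for free. Concretely, I would take $R_1$ and $R_2$ to be independent uniform upper-unitriangular matrices over $\mathbb{F}_2$, possibly with additional entries forced to $0$ to preserve the exact shape, exactly as in the randomized encoding for matrix determinant introduced by Applebaum, Ishai, and Kushilevitz.

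The main technical step is to verify that, as $(R_1, R_2)$ ranges uniformly over this family, the product $R_1 L R_2$ is uniformly distributed over the set $\mathcal{M}$ of all matrices of the prescribed shape sharing the determinant parity of $L$. This is an orbit-stabilizer argument: a Gaussian-elimination / LU-type calculation shows that left- and right-multiplication by unitriangular matrices acts transitively on $\mathcal{M}$, and the stabilizer of each $L \in \mathcal{M}$ has size independent of $L$ (by a conjugation argument). Together these imply uniformity of the pushforward distribution on $\mathcal{M}$. This combinatorial step is the principal obstacle in the proof and is exactly the content inherited from Lemma~4.17 of \cite{aik}; the rest of the argument is bookkeeping.

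Finally, I realize the sampler in $\NC^0$. Each entry
\[
(R_1 L R_2)_{i, j} \;=\; \sum_{k, \ell}\, (R_1)_{i, k}\, L_{k, \ell}\, (R_2)_{\ell, j}
\]
is an XOR of polynomially many degree-$3$ monomials, and since polynomially many XORs is not in $\NC^0$ (it would solve parity), I output each monomial as a separate bit $K_{i, j}^{(\ell)}$, so that the required relation $K_{i, j}^{\oplus} = \bigoplus_{\ell} K_{i, j}^{(\ell)}$ holds by construction while each $K_{i, j}^{(\ell)}$ depends on at most $3$ input or random bits and is therefore computable by a constant-depth, constant-fan-in gate. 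To upgrade the marginal distribution of each $K_{i, j}^{(\ell)}$ to uniform, I pad each group with a fresh random bit XORed into exactly one monomial output; this leaves the group's XOR invariant while making each revealed bit individually uniform. The resulting circuit has polynomial output length and constant depth with constant fan-in, as required.
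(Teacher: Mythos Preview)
Your approach mirrors the paper's (which simply cites \cite{aik} and summarizes the construction in a footnote): form $K^{\oplus} = R_1 L R_2$ with random unitriangular $R_1, R_2$, observe that this preserves both the upper-Hessenberg-with-ones-on-the-subdiagonal shape and the determinant, and then split each entry into its constituent degree-$3$ monomials so that each output bit is locally computable. The determinant and shape arguments are fine, and deferring the transitivity/uniformity-on-the-orbit step to \cite{aik} is exactly what the paper does.

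There is one genuine gap in your final step. The statement requires the variables $\{K_{i,j}^{(\ell)}\}$ to be sampled \emph{uniformly}, and the downstream half-randomizing argument (\Cref{thm:D_half}) relies on their joint distribution depending only on $\det(L) \bmod 2$. Simply outputting the raw monomials $(R_1)_{i,k}\, L_{k,\ell}\, (R_2)_{\ell,j}$ does not achieve this: each such bit carries information about $L$ beyond its determinant. Your proposed fix---XOR a single fresh random bit into one monomial per group---makes only that one output marginally uniform and leaves all the others exposed; it does not make ``each revealed bit individually uniform'' as you claim, let alone jointly uniform. The correct randomization (the ``further randomization'' alluded to in the footnote, which is the locality construction in \cite{aik}) is a telescoping scheme: to encode $a_1 \oplus \cdots \oplus a_m$, introduce fresh uniform bits $r_1, \ldots, r_{m-1}$ and output $(a_1 \oplus r_1,\; a_2 \oplus r_1 \oplus r_2,\; \ldots,\; a_{m-1} \oplus r_{m-2} \oplus r_{m-1},\; a_m \oplus r_{m-1})$. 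This keeps each output bit in $\NC^0$, preserves the XOR, and makes the entire tuple uniform conditioned on its XOR, which is what both the proposition and its later use actually require.
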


Recall that $L$ is derived from our original monotone adjacency matrix $A$.  The lemma above shows that we can generate a uniformly random matrix $K^{\oplus}$ that has the same form as $L$, but we now must convert that matrix back into an adjacency matrix in order to use it as input to $\DAGParity$.  We see that similar to how $L$ is a submatrix of $A-I$, $K^{\oplus}$ is also the submatrix of another matrix $B-I$. As shown in \Cref{fig:B_and_k}, we define $B \in \{0,1\}^{n \times n}$ as the matrix whose entries above the main diagonal are the same as those entries on or above the main diagonal of $K^{\oplus}$, and whose entries on or below the main diagonal are $0$.

\begin{figure}[ht]
	\centering
	\scalebox{.8}{
\begin{tikzpicture}[
Matrix/.style={
	matrix of nodes,
	text height=2.5ex,
	text depth=0.75ex,
	text width=3.25ex,
	align=center,
	left delimiter={[},
	right delimiter={]},
	column sep=0pt,
	row sep=0pt,
	nodes in empty cells,
},
DA/.style={
	fill=red!30, draw=none,
	rounded corners=5pt,
}
]

\matrix[Matrix] at (0,0) (BI){ 
	1 & &  &  &  & \\
	0 & 1 & & & & \\
	0 & 0 & 1 & & & \\
	0 & 0 & 0 & 1 & & \\
	0 & 0 & 0 & 0 & 1 & \\
	0 & 0 & 0 & 0 & 0 & 1 \\
};
\begin{scope}[on background layer]
\draw[DA]
(BI-1-2.north west) 
-| (BI-5-6.south east)
-- (BI-5-6.south west)
|- (BI-4-5.south west)
|- (BI-3-4.south west)
|- (BI-2-3.south west)
|- (BI-1-2.south west)
-- cycle; 

\draw[draw=black!30, rounded corners=5pt] 
(BI-1-2.north west) 
-| (BI-5-6.south east) 
-| (BI-1-2.south west)
-- cycle;        
\end{scope}

\node[above=2em of BI,xshift=-3cm] (K) {\large $K^{\oplus}$};
\node[left=1em of BI] {\Large $B + I = $};

\draw[black!30] (K) -> (BI-1-2.north);

\node[fill=red!30, draw=none, rounded corners=5pt, inner sep=1em, right=3em of K] (blob) {};
\node[right=0.2em of blob] {$= \text{random s.t. $\det(K^{\oplus}) \equiv \det(L) \pmod{2}$}$};

\end{tikzpicture}
}
	\caption{Relationship between $B$ and $K^{\oplus}$}
	\label{fig:B_and_k}
\end{figure}
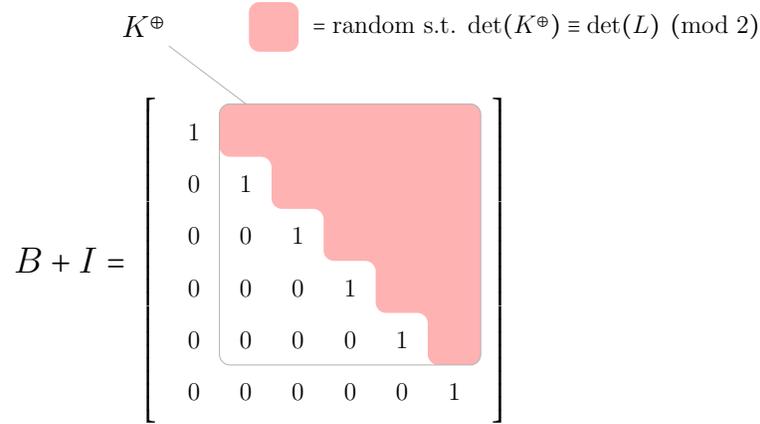

\begin{proposition}
\label{prop:B}
	$B$ is a uniformly random monotone adjacency matrix such that the parity of paths from $1$ to $n$ is the same as in $A$.
\end{proposition}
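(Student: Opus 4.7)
The plan is to verify the three content-laden claims of the proposition in turn: that $B$ is a valid monotone adjacency matrix, that the path-parity from $1$ to $n$ is preserved between $A$ and $B$, and that $B$'s distribution is uniform over matrices satisfying these two constraints. The first claim is immediate from the construction: by definition $B$ has zeros on and below the main diagonal, and its strictly upper-triangular entries are the $\{0,1\}$-valued bits $K^{\oplus}_{i,j}$ produced by \Cref{sample}, so $B \in \{0,1\}^{n \times n}$ is a monotone adjacency matrix.

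For the parity claim, I would apply \Cref{det} to both $A$ and $B$. For $A$ it directly gives path-parity equal to $\det(L) \bmod 2$. For $B$, let $L_B$ denote the top-right $(n-1) \times (n-1)$ submatrix of $B - I$; a short index-chasing calculation shows that $L_B$ has $-1$'s on its second diagonal (inherited from the $-1$'s on the main diagonal of $B - I$), zeros strictly below, and on-or-above-main-diagonal entries agreeing exactly with those of $K^{\oplus}$ under the identification shown in \Cref{fig:B_and_k}. Hence $L_B \equiv K^{\oplus} \pmod{2}$ and so $\det(L_B) \equiv \det(K^{\oplus}) \pmod{2}$. Chaining this with the constraint $\det(K^{\oplus}) \equiv \det(L) \pmod{2}$ from \Cref{sample} and \Cref{det} applied to $A$ then yields the desired equality of path parities between $A$ and $B$.

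For uniformity, I would invoke \Cref{sample} once more: the AIK randomized encoding guarantees that $K^{\oplus}$ is uniformly distributed over $(n-1) \times (n-1)$ matrices of the prescribed form (upper-triangular part free, $-1$ on the second diagonal, $0$ below) whose determinant is congruent to $\det(L)$ modulo $2$. Since the identification of entries between $K^{\oplus}$ and $B$ is a bijection between $\{0,1\}^{\binom{n}{2}}$ and itself, and since (by the computation above) this bijection carries the determinant constraint on $K^{\oplus}$ precisely to the path-parity constraint on $B$, it follows that $B$ is uniformly distributed over monotone adjacency matrices having the same path parity as $A$. The only mildly delicate step in the whole argument is verifying cleanly that $L_B \equiv K^{\oplus} \pmod{2}$; once the indexing is straightened out, parity-preservation and uniformity both follow essentially by chaining citations.
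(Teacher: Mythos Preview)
Your proposal is correct and follows essentially the same approach as the paper's proof. The paper is more terse---it simply observes that $K^{\oplus}$ is by construction the top-right $(n-1)\times(n-1)$ submatrix of $B-I$, so the determinant constraint on $K^{\oplus}$ is (via \Cref{det}) precisely the path-parity constraint on $B$, and uniformity then transfers directly---whereas you break the argument into three explicit claims and are more careful about the index-chasing that identifies $L_B$ with $K^{\oplus}$; but the logical content is identical.
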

\begin{proof}
	By construction $B$ is a monotone adjacency matrix, and $K^{\oplus}$ is the $(n-1) \times (n-1)$ top-right submatrix of $B-I$. The entries on or above the main diagonal of $K^{\oplus}$ are uniformly random subject to the determinant constraint in \Cref{detconstraint}, and therefore, so are the entries above the main diagonal of $B-I$ (and hence $B$). However, by \Cref{det}, this constraint is equivalent to $A$ and $B$ having the same parity of paths from $1$ to $n$.
\end{proof}

\Cref{prop:B} completes the first subroutine of $\DAlg$---that is, $B$ is a uniformly random monotone adjacent matrix which has the same parity of paths from the source to target as $A$.  We now turn to the second half of the algorithm, which is converting $B$ (and instance of $\DAGParity$) into a layered monotone adjacency matrix (and instance of $\LDAGParity$) that has the same parity of paths from source to target. To do this, we give a simple deterministic procedure which takes $B$ (actually the entries $\{(K_{i,j}^{(1)}, K_{i,j}^{(2)}, \ldots)\}_{1 \leq i \leq j \leq n-1}$ from which it is comprised) and creates a new layered DAG $C$ with $2n - 1$ layers.

Consider first a procedure for where all the entries of $B = \{b_{i,j}\}$ are given explicitly (recall that in our case, each entry of $B$ is a sum of $K_{i,j}^{(\ell)}$ terms).  To construct $C$, we create $n$ layers, each of which is comprised of a copy of all $n$ nodes in the graph.  Between layers $i$ and $i+1$, there are two types of edges: first, there is a path from each node $k$ to the same node $k$; second, there is a path from node $i$ to node $j$ for each $j$ such that $b_{i,j} = 1$. One can show then at layer $k$, the number paths you can take from node $1$ to any other node in the graph is the number of paths you can take by using edges $b_{i,j}$ with $i < k$, completing the construction.  To handle the fact that the entries of $B$ are actually \emph{sums} of $\NC^0$-computable values, we construct additional paths between the layers for each such value (which requires adding an additional layers to prevent multiple edges).

Let us now formally construct the matrix $C$ that we sketched above.  First, we have $n$ layers labeled $N_i$ for each $i \in [n]$ and $(n-1)$ layers labeled $J_i$ for each $i \in [n-1]$, and the layers are ordered from left to right in the following way: $N_1 J_1 N_2 J_2 ... J_{n-1} N_n$. Each $N_i$ has vertices $\{1,...,n\}$ and each $J_i$ has vertices $\{1,...,n\} \cup \{K_{i,j}^{(\ell)}\}_{j,\ell}$. The only edges between layers fall under two classes:

\begin{enumerate}
	\item ($N_i$ to $J_i$) For every $q \in [n]$, vertex $q$ in $N_i$ always connects to vertex $q$ in $J_i$. Vertex $i$ in $N_i$ connects to vertex $K_{i,j}^{(\ell)}$ in $J_i$ iff the value $K_{i,j}^{(\ell)}$ is $1$. 
	\item ($J_i$ to $N_{i+1}$) For every $q \in [n]$, vertex $q$ in $J_i$ always connects to vertex $q$ in $N_{i+1}$. Vertex $K_{i,j}^{(\ell)}$ in $J_i$ always connects to vertex $j+1$ in $N_{i+1}$.
\end{enumerate}

\Cref{ldagimg} shows an example of this construction where the elements of $B \in \{0,1\}^{3 \times 3}$ are defined implicitly in a $2 \times 2$ upper triangular matrix $\{(K_{i,j}^{(1)}, K_{i,j}^{(2)})\}_{1 \leq i \leq j \leq n-1}$:
\begin{equation*}
\begin{pmatrix}
    (1,0) & (1,1)\\
    0 & (0,1)\\
\end{pmatrix}
\end{equation*}
where, for example, $(K_{1,2}^{(1)}, K_{1,2}^{(2)}) = (1,1)$ is the top-right entry. 

\begin{figure}[ht]
	\centering
	\begin{tikzpicture}
    \node[shape=circle,draw=black] (10) at (0,9) {1};
    \node[shape=circle,draw=black] (20) at (0,8) {2};
    \node[shape=circle,draw=black] (30) at (0,7) {3};
    
    \node[shape=circle,draw=black] (11) at (4.5,9) {1};
    \node[shape=circle,draw=black] (21) at (4.5,8) {2};
    \node[shape=circle,draw=black] (31) at (4.5,7) {3};
    \node[shape=circle,draw=black] (k121) at (4.5,5.5) {$K_{1,1}^{(1)}$};
    \node[shape=circle,draw=black] (k122) at (4.5,4) {$K_{1,1}^{(2)}$};
    \node[shape=circle,draw=black] (k131) at (4.5,2.5) {$K_{1,2}^{(1)}$};
    \node[shape=circle,draw=black] (k132) at (4.5,1) {$K_{1,2}^{(2)}$};
    
    \node[shape=circle,draw=black] (12) at (8,9) {1};
    \node[shape=circle,draw=black] (22) at (8,8) {2};
    \node[shape=circle,draw=black] (32) at (8,7) {3};
    
    \node[shape=circle,draw=black] (13) at (11,9) {1};
    \node[shape=circle,draw=black] (23) at (11,8) {2};
    \node[shape=circle,draw=black] (33) at (11,7) {3};
    \node[shape=circle,draw=black] (k231) at (11,5.5) {$K_{2,2}^{(1)}$};
    \node[shape=circle,draw=black] (k232) at (11,4) {$K_{2,2}^{(2)}$};
    
    \node[shape=circle,draw=black] (14) at (14,9) {1};
    \node[shape=circle,draw=black] (24) at (14,8) {2};
    \node[shape=circle,draw=black] (34) at (14,7) {3};

    \path [->](10) edge node[left] {} (11);
    \path [->](11) edge node[left] {} (12);
    \path [->](12) edge node[left] {} (13);
    \path [->](13) edge node[left] {} (14);
    
    \path [->](20) edge node[left] {} (21);
    \path [->](21) edge node[left] {} (22);
    \path [->](22) edge node[left] {} (23);
    \path [->](23) edge node[left] {} (24);
    
    \path [->](30) edge node[left] {} (31);
    \path [->](31) edge node[left] {} (32);
    \path [->](32) edge node[left] {} (33);
    \path [->](33) edge node[left] {} (34);
    
    \path [->](10) edge node[left] {} (k121);
    \path [->](10) edge node[left] {} (k131);
    \path [->](10) edge node[left] {} (k132);
    
    \path [->](k121) edge node[left] {} (22);
    \path [->](k122) edge node[left] {} (22);
    \path [->](k131) edge node[left] {} (32);
    \path [->](k132) edge node[left] {} (32);
    
    \path [->](22) edge node[left] {} (k232);
    
    \path [->](k231) edge node[left] {} (34);
    \path [->](k232) edge node[left] {} (34);
\end{tikzpicture}
	\caption{Example:  creating a layered DAG from a DAG.}
	\label{ldagimg}
\end{figure}
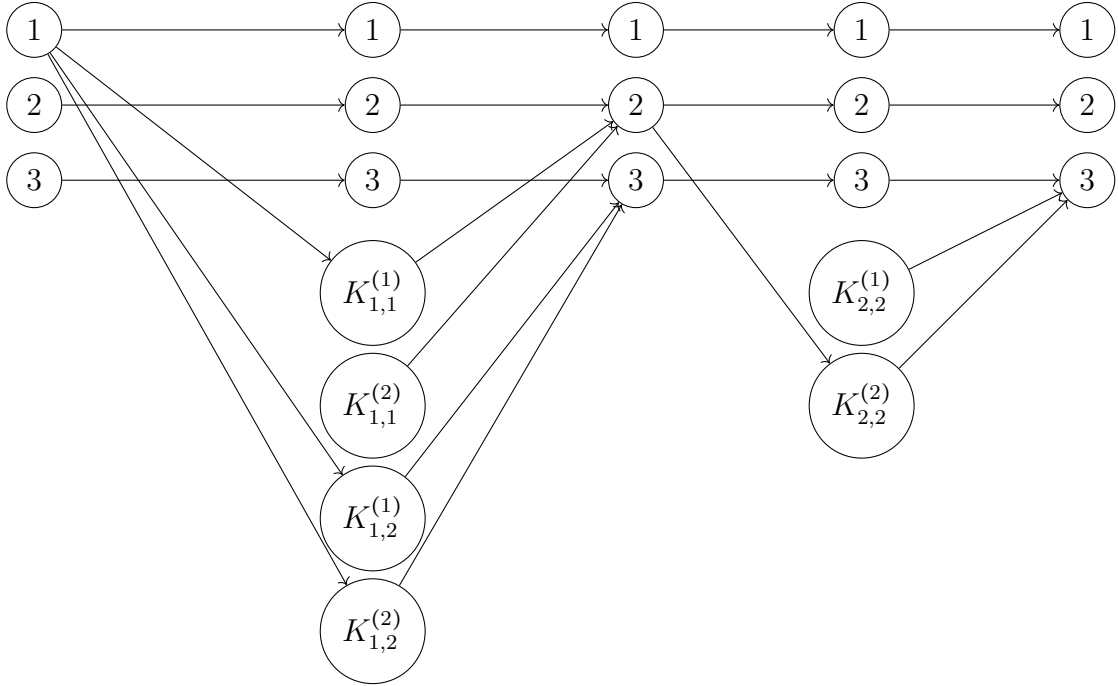

The following proposition shows that our construction of $C$ preserves the parity of paths from source to target:

\begin{proposition}
\label{prop:C}
Given $A \in M$, the parity of the number of paths in $C$ from vertex $1$ in $N_1$ to vertex $n$ in $N_n$ is equal to the parity of the number of paths from $1$ to $n$ in $A$.
\end{proposition}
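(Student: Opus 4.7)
The plan is to show that the number of paths in $C$ from vertex $1$ in $N_1$ to vertex $n$ in $N_n$, taken modulo $2$, equals the number of paths from $1$ to $n$ in the monotone adjacency graph $B$. Combined with \Cref{prop:B}, this gives the desired equality of parities with $A$.

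The key structural observation about $C$ is that the ``active vertex'' of a path can change between consecutive $N$-layers only via a jump through a $K$-type intermediate vertex. Indeed, from $(q, N_i)$ with $q \neq i$, the only outgoing edge is to $(q, J_i)$, whose only outgoing edge in turn is to $(q, N_{i+1})$; such a path is forced to remain at vertex $q$. Only when the path sits at $(i, N_i)$ can it reach a vertex of the form $K_{i,j}^{(\ell)}$ in $J_i$ (and only when $K_{i,j}^{(\ell)}=1$), from which its unique successor in $N_{i+1}$ is vertex $j+1$. In particular, every jump strictly increases the active vertex index.

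Consequently, a path from $(1,N_1)$ to $(n,N_n)$ is fully encoded by (i) a strictly increasing sequence $1 = w_0 < w_1 < \cdots < w_k = n$ of the distinct active vertices visited, together with (ii) for each step $l \in [k]$, a choice of an index $\ell_l$ with $K_{w_{l-1},\, w_l - 1}^{(\ell_l)} = 1$. The requirement that a jump at layer $i$ demands the active vertex equal $i$ forces the jump layers to be exactly $w_0, w_1, \ldots, w_{k-1}$ in that order: after the $l$-th jump the path arrives at vertex $w_l$ at layer $N_{w_{l-1}+1}$, then must carry $w_l$ forward through every subsequent $N$-layer until reaching $N_{w_l}$, at which point another jump first becomes available. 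The number of paths in $C$ realizing a given vertex sequence is therefore
\[
\prod_{l=1}^{k} \bigl|\{\ell : K_{w_{l-1},\, w_l - 1}^{(\ell)} = 1\}\bigr|,
\]
which reduces modulo $2$ to $\prod_{l=1}^{k} K_{w_{l-1},\, w_l - 1}^{\oplus} = \prod_{l=1}^{k} B_{w_{l-1},\, w_l}$ by the definition of $B$ from $K^{\oplus}$ (see \Cref{fig:B_and_k}).

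Summing over all valid vertex sequences $1 = w_0 < \cdots < w_k = n$ yields precisely the count $\sum_{w_0, \ldots, w_k} \prod_l B_{w_{l-1}, w_l}$, i.e., the parity of the number of paths from $1$ to $n$ in $B$. Applying \Cref{prop:B} then gives the stated equality with the parity of paths in $A$. I expect the main delicate step to be the telescoping argument identifying the jump layers with the visited vertices; once that forcing is nailed down, the proof reduces to the elementary fact that path counts in a monotone graph are sums of products of edge indicators over monotone vertex sequences. No averaging, randomization, or quantum arguments enter this proposition.
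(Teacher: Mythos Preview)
Your argument is correct. Both you and the paper reduce to showing that the parity of paths in $C$ matches that in $B$ (and then invoke \Cref{prop:B}), but the routes differ. The paper gives a short backward induction: it claims that for each $i$, the parity of paths from $(i,N_i)$ to $(n,N_n)$ equals the parity of paths from $i$ to $n$ in $B$, arguing that the ``stay'' edge $i\to i$ between $N_i$ and $J_i$ contributes nothing to the count, while the parity of paths from $(i,N_i)$ to $(j,N_{i+1})$ is $K_{i,j}^{\oplus}$, and then invoking the inductive hypothesis at $j$. You instead give a direct enumeration: you classify paths by their sequence of ``active vertices'' $1=w_0<\cdots<w_k=n$, use the forcing observation that jumps can only occur at $(w_l,N_{w_l})$ to identify the jump layers with the visited vertices, and count the choices at each jump as $|\{\ell:K_{w_{l-1},w_l-1}^{(\ell)}=1\}|$, which reduces mod $2$ to $B_{w_{l-1},w_l}$. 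Your approach is a little longer but more explicit about why no path can ``miss its window'' (staying at $w_l<n$ past layer $N_{w_l}$ strands the path), a point the paper's inductive sketch treats only implicitly.
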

\begin{proof}
We prove inductively that the parity of paths from vertex $i$ in $N_i$ to vertex $n$ in $N_n$ is the same as $i$ to $n$ in $B$. Note that the edge $i$ to $i$ between $N_i$ and $J_i$ does not contribute to the parity of paths from $i \in N_i$ to $n \in N_n$. The parity of paths from $i \in N_i$ to $j \in N_{i+1}$ is equal to $K_{i,j}^{\oplus}$, and there is only one path from $j \in N_{i+1}$ to $j \in N_j$. Hence, the paths from $i \in N_i$ to $j \in N_j$ only change the parity if $K_{i,j}^{\oplus}=1$ and the parity of paths from $j$ to $n$ in $B$ is odd (by assumption). Therefore, the parity of paths from vertex $i$ in $N_i$ to vertex $n$ in $N_n$ is equal to those from $i$ to $n$ in $B$.
\end{proof}

This concludes the correctness of $\DAlg$.  We are now ready to show that it is also half-randomizing.
\begin{theorem}
\label{thm:D_half}
$\DAlg$ is half-randomizing.
\end{theorem}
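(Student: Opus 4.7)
The plan is to verify the three ingredients of \Cref{half-randomizing} in turn: uniformity of $\DAlg(A)$ on its image, dependence of the image on $A$ only through the parity of paths, and equal cardinality of the two possible images. First, all randomness in $\DAlg$ comes from sampling the variables $\{K_{i,j}^{(\ell)}\}$ per \Cref{sample}; the subsequent construction of the layered DAG $C$ is deterministic, and each individual $K_{i,j}^{(\ell)}$ controls exactly one designated edge (from vertex $i$ in layer $N_i$ to the vertex named $K_{i,j}^{(\ell)}$ in layer $J_i$). Consequently the map $\{K_{i,j}^{(\ell)}\} \mapsto C$ is injective, since every sampled bit can be read off from the edges of $C$.

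By \Cref{sample}, the sampling distribution is uniform on the set of configurations with $\det(K^{\oplus}) \equiv \det(L) \pmod{2}$, and by \Cref{det} this congruence equals the parity of paths from $1$ to $n$ in $A$. Hence the image of $\DAlg(A)$ depends on $A$ only through its $\DAGParity$ class; call the two possible images $\mathcal{C}_{yes}$ and $\mathcal{C}_{no}$. Uniformity of the sampling transports through the injection to give $\DAlg(A) \equiv U_{\mathcal{C}_i}$ for $A \in \Pi_i$. Disjointness is immediate: from $C$ one recovers $\{K_{i,j}^{(\ell)}\}$, hence $K^{\oplus}$, hence $\det(K^{\oplus}) \bmod 2$, so the two images lie in preimages of distinct determinant-parity classes.

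The main step, and the one I expect to be the principal obstacle, is showing $|\mathcal{C}_{yes}| = |\mathcal{C}_{no}|$. By injectivity it suffices to exhibit a bijection between the sets of configurations with $\det(K^{\oplus}) \equiv 0$ and with $\det(K^{\oplus}) \equiv 1$. I would use the involution that flips the single bit $K_{1,n-1}^{(1)}$: since $K^{\oplus}_{1,n-1} = \bigoplus_{\ell} K_{1,n-1}^{(\ell)}$, this flips only the $(1,n-1)$ entry of $K^{\oplus}$ and no other entry. Expanding the determinant along row $1$, the induced change is $\pm M_{1,n-1}$, where $M_{1,n-1}$ is the minor obtained by deleting row $1$ and column $n-1$. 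Using the explicit form of $K^{\oplus}$ in \Cref{sample} ($-1$'s on the sub-diagonal, $0$'s strictly below), a short check shows this minor is the determinant of an upper-triangular $(n-2) \times (n-2)$ matrix whose diagonal consists of the inherited $-1$'s, giving $M_{1,n-1} = \pm 1$, which is odd modulo $2$. Thus the involution flips $\det(K^{\oplus}) \bmod 2$ and, composed with the injection to $C$, yields the required bijection $\mathcal{C}_{yes} \leftrightarrow \mathcal{C}_{no}$. Combining the three steps gives precisely the half-randomizing property.
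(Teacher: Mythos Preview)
Your proof is correct and follows the same approach as the paper: uniformity from \Cref{sample}, injectivity of the map $\{K_{i,j}^{(\ell)}\}\mapsto C$, and equal cardinality via a single-bit flip that toggles the parity. The only difference is cosmetic: the paper argues $|\mathcal{C}_{yes}|=|\mathcal{C}_{no}|$ by noting that adding/removing the edge $(1,n)$ in the adjacency matrix $B$ toggles the parity of paths from $1$ to $n$, whereas you verify the equivalent statement (flipping $K^{\oplus}_{1,n-1}=B_{1,n}$ toggles $\det K^{\oplus}\bmod 2$) by computing the minor $M_{1,n-1}$ directly---these are the same bijection, read through \Cref{det}, and your version has the minor advantage of working directly on the $\{K_{i,j}^{(\ell)}\}$ configurations where the injection to $C$ lives.
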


\begin{proof}
Let $\Pi_{yes}$ be the adjacency matrices with an odd number of paths from $1$ to $n$, and let $\Pi_{no}$ be those adjacency matrices with an even number of paths.  Notice that $|\Pi_{yes}| = |\Pi_{no}|$ because adding/removing the edge from $1$ to $n$ forms a bijection between the two sets.  Therefore, the subroutine of $\DAlg$ mapping $A \in M$ to $B$ is half-randomizing because $B$ is a uniformly random instance of the same parity due to \Cref{prop:B}.  Finally, the matrix $C$ is generated by an injective function on $B$, so the entire algorithm $\DAlg$ is half-randomizing.
\end{proof}

We now lift the half-randomization of $\DAlg$ to the composition $\gamma \circ \EAlg \circ \DAlg$:

\begin{cor}
	\label{Fhalfrandomizing}
	$\gamma \circ \EAlg \circ \DAlg$ is half-randomizing.
\end{cor}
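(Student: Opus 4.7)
The plan is to chain the half-randomization of $\DAlg$ through the injective deterministic reduction $\EAlg$ and the support-disjoint randomizer $\gamma$. By \Cref{thm:D_half}, $\DAlg$ is half-randomizing with some disjoint equal-sized output sets, call them $\mathcal{D}_{yes}$ and $\mathcal{D}_{no}$, so that $\DAlg(x) \equiv U_{\mathcal{D}_i}$ whenever $x$ is a yes/no instance of $\DAGParity$. I would then track how this uniformity propagates through $\EAlg$ and $\gamma$.

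First I would invoke \Cref{Fselfreduction}: since $\EAlg$ is deterministic and injective, the images $\mathcal{E}_{yes} := \EAlg(\mathcal{D}_{yes})$ and $\mathcal{E}_{no} := \EAlg(\mathcal{D}_{no})$ satisfy $|\mathcal{E}_i| = |\mathcal{D}_i|$, and $\EAlg \circ \DAlg(x) \equiv U_{\mathcal{E}_i}$ for $x \in \Pi_i$. Crucially, $\mathcal{E}_{yes} \cap \mathcal{E}_{no} = \emptyset$, because $\EAlg$ preserves the yes/no label: yes-sequences land in $\CNOTMult$ (product equals the identity) while no-sequences land outside it (product equals the fixed $3$-cycle), so a single sequence cannot lie in both images.

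Next I would apply $\gamma$ and use \Cref{fproperties}, which states that for distinct CNOT sequences $a \neq b$, the distributions $\gamma(a)$ and $\gamma(b)$ are uniform over disjoint sets of a common cardinality $k$. Define
$$
\mathcal{C}_i := \bigcup_{a \in \mathcal{E}_i} \mathrm{supp}(\gamma(a)) \quad \text{for } i \in \{yes, no\}.
$$
Then $|\mathcal{C}_i| = k \cdot |\mathcal{E}_i| = k \cdot |\mathcal{D}_i|$, so $|\mathcal{C}_{yes}| = |\mathcal{C}_{no}|$. The supports across distinct sequences are pairwise disjoint, so in particular $\mathcal{C}_{yes}$ and $\mathcal{C}_{no}$ are disjoint. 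For any $x \in \Pi_i$, the composition first samples $a$ uniformly from $\mathcal{E}_i$ (independent of which $x \in \Pi_i$ we started from) and then samples $\gamma(a)$, which is uniform on a size-$k$ subset of $\mathcal{C}_i$; this mixture of uniform distributions on disjoint equal-sized pieces of $\mathcal{C}_i$ is exactly $U_{\mathcal{C}_i}$. Hence $\gamma \circ \EAlg \circ \DAlg$ is half-randomizing with witness sets $\mathcal{C}_{yes}$ and $\mathcal{C}_{no}$.

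The argument is essentially bookkeeping once the component lemmas are in hand; the only place one must be careful is the disjointness of $\mathcal{E}_{yes}$ and $\mathcal{E}_{no}$, which is not merely a size statement but uses the yes/no-preserving property of $\EAlg$ to ensure the $\gamma$-supports for the two sides genuinely do not overlap. With that in place, uniformity lifts cleanly through determinism (preserved by injectivity of $\EAlg$) and then through $\gamma$ (equal-cardinality disjoint supports make a uniform-on-$\mathcal{E}_i$ input produce a uniform-on-$\mathcal{C}_i$ output).
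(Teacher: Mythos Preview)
Your proposal is correct and follows essentially the same approach as the paper: use that $\DAlg$ is half-randomizing (\Cref{thm:D_half}), that $\EAlg$ is injective (\Cref{Fselfreduction}), and that $\gamma$ sends distinct sequences to uniform distributions on disjoint equal-sized supports (\Cref{fproperties}), then check that uniformity propagates. Your write-up is simply more explicit about constructing the witness sets $\mathcal{C}_{yes}, \mathcal{C}_{no}$. One small simplification: you do not need the yes/no-preserving property of $\EAlg$ to get $\mathcal{E}_{yes} \cap \mathcal{E}_{no} = \emptyset$; since $\mathcal{D}_{yes}$ and $\mathcal{D}_{no}$ are already disjoint (by the definition of half-randomizing) and $\EAlg$ is injective, disjointness of the images is automatic.
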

\begin{proof}
    By \Cref{fproperties} and injectivity of $\EAlg$ (\Cref{Fselfreduction}), for distinct LDAG adjacency matrices $C \neq C'$, $\gamma \circ \EAlg (C)$ and $\gamma \circ \EAlg(C')$ are uniform distributions over disjoint sets of the same cardinality. Combined with the fact that $\DAlg$ is half-randomizing (\Cref{thm:D_half}), we immediately get that $\gamma \circ \EAlg \circ \DAlg$ is also half-randomizing. 
\end{proof}

In conclusion, $\EAlg \circ \DAlg$ maps odd parity adjacency matrices to gates that multiply to the 3-cycle and even parity adjacency matrices to gates that multiply to the identity.  Running the $\gamma$ algorithm on this sequence of gates produces an input to the first round of the 2-Round Graph State Measurement Problem.  The composition of all three algorithms is half-randomizing. 

\subsubsection{Proof of average-case \texorpdfstring{$\parityL$}{ParityL}-hardness}

Notice that in the previous section we never stated that $\gamma \circ \EAlg$ is half-randomizing, and indeed it is not half-randomizing. As a result, any error in a rewind oracle for 2-Round Graph State Measurement might concentrate on a few instances $\gamma \circ \EAlg(C)$ for some layered DAG $C$.  Indeed, this was the entire reason $\DAlg$ was introduced.  On the other hand, we now claim (proof in \Cref{appendixparitylapplications}) that if this type of concentration of error does not occur, then the rewind oracle \emph{can} be used to solve any instance of $\LDAGParity$.

\begin{proposition}
	\label{restateerror}
	Let $C \in \{0,1\}^{n \times n}$ be a layered DAG adjacency matrix and $\mathcal{R}$ be the rewind oracle for 2-Round Graph State Measurement on a $\poly(n) \times \poly(n)$ grid. If 
	\begin{equation*}
		\Pr[\text{$\exists \lambda \in I_2$ s.t. $\mathcal{R}$ fails on input $(x,\lambda)$} \,\,|\,\, x \gets \gamma \circ \EAlg(C)] < \frac{1}{21}
	\end{equation*}
	then a $(\BPAC^0)^{\mathcal{R}}$ circuit can compute the parity of the number of paths from $1$ to $n$ in $C$.
\end{proposition}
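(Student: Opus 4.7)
The plan is to recast the parity-of-paths question for $C$ as a distinguishing task on the (unknown) state effectively prepared by the first round of the interactive protocol, then run a rewind-based extraction procedure on $\mathcal{R}$ to solve it. First, by \Cref{Fselfreduction}, apply $\EAlg$ to $C$ to obtain a CNOT sequence $(g_1, \ldots, g_N) \in \CNOT_{\poly(n)}^N$ whose product, under the $\CNOTMult$ promise, is either the identity (when $C$ has an even number of paths from $1$ to $n$) or a fixed 3-cycle (when the parity is odd). Then call $\gamma$ to wrap this sequence into a valid first-round input $x$ for the $2$-Round Graph State Measurement Problem. By the MBQC correspondence (\Cref{thm:mbqc}), the first-round measurement outcomes committed by the oracle describe a state of the form $P U \ket{+^n}$ for some unknown Pauli $P$, where $U$ is the composition of $g_N \cdots g_1$ with the randomization Cliffords introduced by $\gamma$. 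Distinguishing whether $U$ corresponds to the identity or the 3-cycle modulo the Pauli group thus directly yields the parity of paths in $C$.

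The main technical step is to show that a constant number of second-round queries drawn from $I_2$ suffice to carry out this distinguishing task, following the self-reduction template of \Cref{selfreduce} used in the $\NC^1$-hardness reduction. Fixing the first-round input $x$ and using the rewind capability of $\mathcal{R}$, we query the oracle on several bases $\lambda \in I_2$. The second-round randomization already present in $\gamma$, combined with a non-contextuality gadget of the magic-pentagram variety, forces each successful query to disclose either a stabilizer or a non-stabilizer of $U \ket{+^n}$. After accumulating a constant number of non-stabilizers, the state $U \ket{+^n}$ is pinned down up to Paulis, which is enough to separate the identity case from the 3-cycle case. All post-processing of the oracle responses fits inside a randomized $\AC^0$ circuit, giving the claimed $(\BPAC^0)^{\mathcal{R}}$ algorithm.

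For the error bookkeeping we exploit the hypothesis directly: with probability at least $20/21$ over $x \sim \gamma \circ \EAlg(C)$, \emph{no} $\lambda \in I_2$ causes a failure, so every rewound query issued by the distinguisher returns a consistent transcript and the parity is recovered with constant success probability, which suffices for $\BPAC^0$. The specific threshold $1/21$ must be matched to the number of second-round queries that the magic-pentagram extraction demands per first-round input; that number, together with the cost of the randomization, is precisely what the construction of $\gamma$ and $I_2$ in \Cref{appendixparitylapplications} is engineered to control. The main obstacle, and the reason the remaining details are carried out in the appendix rather than here, is choosing the set $I_2$ and the non-contextuality gadget so that (i) every second-round query used by the distinguisher actually lies in $I_2$, (ii) the resulting outcomes give enough algebraic information to tell the identity from the 3-cycle modulo Paulis, and (iii) the per-input query budget is small enough that the stated $1/21$ slack suffices, without any wasteful union bounds. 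This is essentially a careful repackaging of the magic-square / magic-pentagram machinery of \cite{gs} that is compatible with the restricted second-round basis set $I_2$ and the larger $\poly(n) \times \poly(n)$ grid.
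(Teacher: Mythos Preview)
Your high-level plan is correct: reduce via $\EAlg$ to a $\CNOTMult$ instance, feed it through $\gamma$ to the oracle, and use a pentagram-based extraction to decide identity versus $3$-cycle. This matches the paper, which simply invokes \Cref{Fselfreduction} and then \Cref{avgcaseanalysis}. However, your description of the extraction step and of the origin of the $\tfrac{1}{21}$ threshold both track the $\NC^1$ template (\Cref{selfreduce}) too closely and miss the mechanism that actually makes the $\parityL$ case work.

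In the $2\times O(n)$ setting of \Cref{selfreduce}, the randomization guarantees that the returned Pauli is a \emph{uniformly random} non-stabilizer, so accumulating a few samples really does pin down the state. In the $\poly(n)\times\poly(n)$ setting the analogous guarantee fails: the output of one run of the pentagram routine $R_f$ is a sample from $(\pi f\pi^{-1})\bullet\mathcal{D_R}$, where $\mathcal{D_R}$ is an \emph{adversarially chosen} distribution over the $20$ non-stabilizers of $\ket{+^3}$, fixed independently of $f$. In particular, with $f=I$ the output distribution is $\mathcal{D_R}$ regardless of whether $\pi$ is the identity or the $3$-cycle, so ``accumulating non-stabilizers'' cannot by itself separate the two cases, and the adversary may concentrate $\mathcal{D_R}$ on a single Pauli so that repeated samples reveal nothing new. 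The paper handles this with a two-phase protocol (\Cref{avgcaseanalysis}): first sample with $f=I$ to locate a Pauli $P$ of maximal weight in $\mathcal{D_R}$; then use \Cref{x-type} to choose $f\in H_3^{\oplus}$ so that $f\bullet P$ has weight zero in $R_{f,0}$ when $\pi=C_3$, and estimate the frequency of $f\bullet P$ under $R_{f,\delta}$ to decide $\pi$. Your sketch does not surface this free parameter $f$ or the two-phase structure, which is the genuine content of the proof.

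Relatedly, the constant $\tfrac{1}{21}$ does not come from a per-input second-round query budget as you suggest. It arises because $\mathcal{D_R}$ is supported on $20$ Paulis, so the heaviest one has mass at least $\tfrac{1}{20}$; the comparison $R_{f,\delta\,:\,\pi=I}(f\bullet P)\ge(1-\delta)\cdot\tfrac{1}{20}>\delta\ge R_{f,\delta\,:\,\pi=C_3}(f\bullet P)$ needs exactly $\delta<\tfrac{1}{21}$. Your error bookkeeping (``with probability $\ge 20/21$ no $\lambda\in I_2$ fails'') is fine as an outer wrapper, but it does not account for why this particular threshold is the right one.
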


We are now ready to prove the main theorem of this section, which we restate below:
\begin{reptheorem}{parityLmain}
Let $\mathcal{R}$ be the rewind oracle for 2-Round Graph State Measurement Problem on a $\poly(n) \times \poly(n)$ grid promised that input is from the image of $\gamma \circ \EAlg \circ \DAlg(M)$ in the first round and $I_2$ in the second round. If $\mathcal{R}$ fails a uniformly random input in the promise with probability $\epsilon < \frac{1}{421}$ then 
\begin{equation*}
\mathsf{\parity L} \subseteq (\mathsf{BPAC^0})^{\mathcal{R}}.
\end{equation*}
\end{reptheorem}
\begin{proof}
We will show that a $\BPAC^0$-circuit with access to rewind oracle $\mathcal R$ can solve $\DAGParity$, which is $\parityL$-hard by \Cref{dagparity}. Suppose $A \in \{0,1\}^{n \times n}$ is the adjacency matrix of a monotone graph, representing an instance of $\DAGParity$. By \Cref{prop:C}, we will use $\DAlg$ to reduce $A$ to a random $\LDAGParity$ instance $C \in \{ 0, 1 \}^{\poly(n) \times \poly(n)}$, and then use \Cref{restateerror} to solve the $\LDAGParity$ instance. If the rewind oracle is always correct, then this strategy clearly solves the original $\DAGParity$ problem, so it only remains to analyze the allowable error. 

By assumption, with probability $\epsilon$ the rewind oracle $\mathcal{R}$ will produce a faulty output on an input sampled from $\gamma \circ \EAlg \circ \DAlg(M) \times I_2$, over the internal randomness of $\mathcal{R}$ and choice of input. Consequently, $\mathcal{R}$ fails with probability at most $(2 \epsilon)$ when input is sampled from $\gamma \circ \EAlg \circ \DAlg(A) \times I_2$ because $\gamma \circ \EAlg \circ \DAlg$ is half-randomizing (\Cref{Fhalfrandomizing}) and half-randomizing algorithms evenly spread out errors (\Cref{2eps}). Furthermore, since $|I_2|=5$, we get
\begin{equation*}
	\operatorname*{\mathbb{E}}_{r_d} [ \Pr[\exists \lambda \in I_2 \text{ s.t. } \mathcal{R} \text{ fails on input } (x,\lambda) \,\,|\,\, x \gets \gamma \circ \EAlg \circ \DAlg_{r_d} (A)] ] \leq 10 \epsilon
\end{equation*}
by a union bound. Applying Markov's inequality, we get
\begin{equation}
\label{layerprob}
	\Pr_{r_d} [ \Pr[\exists \lambda \in I_2 \text{ s.t. } \mathcal{R} \text{ fails on input } (x,\lambda) \,\,|\,\, x \gets \gamma \circ \EAlg \circ \DAlg_{r_d} (A)] \geq \frac{1}{21} ] \leq 210 \epsilon.
\end{equation}
 We can assume that the $(\BPAC^0)^{\mathcal R}$ circuit from \Cref{restateerror} succeeds with probability $1 - \alpha$ for any constant $\alpha > 0$. Using this circuit, we get
\begin{equation*}
	\Pr_{r_d} [\text{circuit fails to decide parity of } \DAlg_{r_d} (A)]
	\leq 210 \epsilon + \alpha (1-210\epsilon) 
	< \frac{1}{2}
\end{equation*}
where the first inequality comes from \Cref{restateerror} and \Cref{layerprob} and the final inequality comes from choosing small enough $\alpha$ and $\epsilon < \frac{1}{421}$. To finish, we repeat the process by sampling $O(\log n)$ elements from $\DAlg(A)$ and applying the circuit from \Cref{restateerror} on each sampled instance. By taking a majority vote of outputs over all instances, we recover the parity of the paths from $1$ to $n$ in $A$ with high probability.
\end{proof}

\section{Acknowledgments}
We acknowledge the support of the Natural Sciences and Engineering Research Council of Canada (NSERC). DG is also supported in part by IBM Research.  We also thank David Gosset and Alex Lombardi for useful discussions.

\printbibliography

\begin{appendices}
\crefalias{section}{appendix} 
\addtocontents{toc}{\protect\setcounter{tocdepth}{2}}
\makeatletter
\addtocontents{toc}{%
  \begingroup
  \let\protect\l@section\protect\l@subsection
  \let\protect\l@subsection\protect\l@subsubsection
}
\makeatother

\section{Improved \texorpdfstring{$\NC^1$}{NC\textonesuperior}-hardness error bound}
\label{improvednc1}

Here we improve the error threshold for the 2-Round Graph State Measurement problem on an $2 \times O(n)$ grid (see Section~\ref{sec:nc1hardness}) becomes hard. Previously, the threshold was $\epsilon < \frac{2}{75}$ \cite{gs}, and we improve it to $\epsilon < \frac{1}{30}$. 

\begin{lemma}
	Let $\mathcal{R}$ be the rewind oracle for the interactive problem in Section~\ref{sec:nc1hardness}. Suppose for a uniformly random input (first \& second round), $\mathcal{R}$ is incorrect with probability $\epsilon < \frac{1}{30}$. Then
\begin{equation}
	\NC^1 \subseteq (\mathsf{BPAC^0})^{\mathcal{R}}
\end{equation}
\end{lemma}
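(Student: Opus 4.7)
The plan is to follow the strategy of~\cite{gs} for the noiseless version of this task: use the self-reduction of \Cref{selfreduce} to harvest Pauli samples from the state $\ket{\psi} := C_1 \cdots C_n \ket{++}$ arising from an instance of \Cref{thm:nc1hardcliffordproduct}, reconstruct the stabilizer group of $\ket{\psi}$ by thresholding their empirical frequencies, and then read off whether $\ket{\psi}$ is an $X$-basis or a $Z$-basis state (up to a Pauli), deciding the $\NC^1$-hard problem. The new ingredient is a tighter accounting of the probability mass on stabilizer outputs that pushes the allowable error threshold from $2/75$ up to $1/30$.

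First I would set up the probability bookkeeping. Each invocation of the subroutine of \Cref{selfreduce} issues $6$ oracle interactions, each on a uniformly random input, so a union bound shows that the probability of a ``clean run'' in which all six calls are correct is at least $1 - 6\epsilon$. On a clean run the subroutine outputs a uniformly random non-stabilizer, and in general a stabilizer output can occur only when at least one of the six calls erred. Writing $q$ for the probability of an error run whose output happens to be a stabilizer, we therefore have $q \leq 6\epsilon$. Hence for each of the three non-identity stabilizers $P$ of $\ket{\psi}$ the output probability is $q/3 \leq 2\epsilon$, while for each of the twelve non-stabilizers $Q$ the output probability is $(1-q)/12 \geq (1-6\epsilon)/12$.

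Second I would note that the threshold $\tau := 1/15$ strictly separates these two regimes precisely when $2\epsilon < (1-6\epsilon)/12$, which simplifies to $\epsilon < 1/30$. Under this hypothesis there is a constant additive gap between stabilizer and non-stabilizer output probabilities, so calling the subroutine $N = O(\log n)$ times and, for each of the $15$ non-identity $2$-qubit Paulis, using a standard $\AC^0$ approximate-counting/threshold circuit to test whether the empirical frequency exceeds $\tau$ will identify the three stabilizers correctly with probability $1 - 1/\poly(n)$ by Chernoff and a union bound over the $15$ Paulis. Since $N = O(\log n)$, the comparison of $O(\log n)$-bit counts to $N/15$ fits comfortably in $\AC^0$. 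From the stabilizer group so recovered we read off the $X$-versus-$Z$ answer, and conclude $\NC^1 \subseteq (\BPAC^0)^{\mathcal R}$.

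The main obstacle is really just sharpness of the union bound: the estimate $q \leq 6\epsilon$ is essentially tight in the worst case, since an adversarial $\mathcal{R}$ is free to concentrate all of its failure probability on runs that produce a stabilizer, and this is exactly what pins the threshold at $1/30$ rather than a larger constant. Breaking past $1/30$ would require reasoning about the joint distribution of failures across the six rewound interactions (for instance, arguing that a single first-round error cannot induce a fresh, independent stabilizer-valued output on each of the six second-round measurements), which the black-box guarantee of \Cref{selfreduce} does not directly supply; the rest of the argument is a routine Chernoff plus $\AC^0$ thresholding deployment.
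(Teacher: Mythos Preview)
Your proposal is correct and follows essentially the same argument as the paper: both invoke \Cref{selfreduce}, bound the probability of any particular stabilizer by $6\epsilon/3 = 2\epsilon$ and any non-stabilizer by at least $(1-6\epsilon)/12$, observe that $\epsilon < 1/30$ separates these at the threshold $1/15$ with a constant gap, and then sample enough times to empirically identify the stabilizer group. The only cosmetic difference is that the paper takes $O(1/\sigma^2)$ (a constant) samples for constant success probability, whereas you take $O(\log n)$ samples for $1 - 1/\poly(n)$; both are valid for $\BPAC^0$.
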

\begin{proof}
    Let $C_1, \ldots, C_n$ be two-qubit Cliffords. By \Cref{selfreduce}, we can use $\mathcal{R}$ to output one of fifteen non-trivial Paulis:\footnote{We remind the reader that we consider Paulis only up to sign, and exclude $II$, which is why there are fifteen rather than $32$ or $64$.} either a uniformly random non-stabilizer or a uniformly random stabilizer (excluding $II$) for the state $\ket{\psi} := C_1 \cdots C_n \ket{++}$. 
    
    Since \Cref{selfreduce} makes $6$ uniformly random calls to $\mathcal{R}$, and each one fails with probability at most $\epsilon$, there is a failure with probability at most $6 \epsilon$. Only when there is a failure can the algorithm output a stabilizer Pauli, and there are three nontrivial stabilizers of $\ket{\psi}$, so the probability of returning any particular stabilizer $S \in \mathcal{P}_2 \backslash \{ II \}$ is 
	\begin{equation}
		\label{eq1}
		\text{Pr[output $S$]} < 6 \epsilon \cdot \frac{1}{3}.
	\end{equation}
    On the other hand, at least $1 - 6\epsilon$ fraction of the time the algorithm returns one of twelve nonstabilizer Paulis. Thus, any particular nonstabilizer $N \in \mathcal{P}_2$ is observed with probability 
	\begin{equation}
		\label{eq2}
		\text{Pr[output $N$]} \geq (1 - 6 \epsilon) \cdot \frac{1}{12}.
	\end{equation}
	
	As long as $\epsilon < \frac{1}{30}$, we have that \Cref{eq1} and \Cref{eq2} are bounded above by $\frac{1}{15}$ and below by $\frac{1}{15}$, respectively. In fact, the gap between the bounds is a constant, $\sigma := (1 - 30 \epsilon) / 12$, so with $O(1 / \sigma^2)$ samples we can empirically estimate the probability for each Pauli well enough to distinguish stabilizers from nonstabilizers (with constant probability). It follows $(\mathsf{BPAC}^{0})^{R}$ circuit can learn the stabilizer group of $\ket{\psi}$, and thus solve an $\NC^{1}$ hard problem.
\end{proof}
\section{Error analysis for \texorpdfstring{$\parityL$}{Parity L}-hardness}
\label{erroranalysisappendix}

The purpose of this section is to fill in some of the details regarding the interactive protocol on the $\poly(n) \times \poly(n)$ grid used in \cite{gs}.  Along the way, will give some more detailed accounting of the error of that protocol.  Eventually, this will culminate in a proof of \Cref{restateerror}, the main missing ingredient in \Cref{sec:parityL_hardness} for $\parityL$-hardness (\Cref{parityLmain}).  This appendix is intentionally brief---for a slower introduction to the material, as well as proofs of correctness, we refer the reader to \cite{gs}.

We begin by borrowing much of the notation from \cite{gs}.  First, let us define two groups and a set of 3-qubit Paulis with important roles in the interactive protocol. Let $H_m = \langle \CZ, \text{S} \rangle_m$, i.e., circuits composed of $\CZ$ and Phase gates. Also let $H_3^{\oplus}$ be the subgroup of $H_m$ that consists of even numbers of $\CZ$ and S gates on the first three qubits. We have the following relationship: $H_3^{\oplus} \leqslant H_m \trianglelefteq G_m$. 

We denote the sets of second round measurements (Pauli measurements on the first three qubits) by
\begin{align*}
        I_2 = \{&\{XXX, XYY, YXY, YYX\}, \{IYI, XII, XYY, IIY\}, \{IIY, YXY, YII, IXI\}, \\
        &\{XXX, XII, IIX, IXI\}, \{IYI, IIX, YII, YXY\} \},
\end{align*}

and define $\medstar$ be the union over these sets. Let us define the operation $A \bullet B$ as $ABA^{-1}$. It can be checked that $|H_3^{\oplus} \bullet \medstar| = 24$ where $\bullet$ is performed pairwise between the elements of the two sets. 

Let us now review a protocol from \cite{gs} that uses the rewind oracle for the interactive task to glean some information about the second round state.  To be clear, this protocol is called several times in order to solve a single $\parityL$-hard instance.  There are two inputs to the problem: first, $g_1,...,g_n \in \CNOT_m$, which are promised to multiply to the 3-cycle or identity (identifying which is the $\parityL$-hard problem); second, an element $f \in H_3^{\oplus}$ which will be altered by the product $g_1 \cdots g_n$ via conjugation.  That is, the protocol can be used successfully to identify the product $g_1 \cdots g_n$ if it can track how the $f$ is changed.  For this reason, we label the algorithm $R_f$ so that it is parameterized by this important element $f$.  The output of $R_f$ is a single 3-qubit Pauli which is obtained by playing the magic pentagram game many times using the rewind oracle. We write out the full details of this protocol in pseudocode (\Cref{alg:algo1}).

\begin{algorithm}
\KwIn{$f \in H_3^{\oplus}$, $g_1,...,g_n \in \CNOT_m$ promised that $\pi = g_1...g_n \in \{C_3,I\}$}
\KwOut{A 3-qubit Pauli in $\medstar$}
	Sample $f' \leftarrow H_3^{\oplus}$\;
	Sample $h_1, ..., h_{2n-1} \leftarrow H_m$\;
	\tcc{The following is Kilian randomization}
	$(g_1',...,g_{2n}') \leftarrow (f'g_1h_1 , h_1^{-1}g_2h_2, ... , h_{n-1}^{-1}g_nh_n , h_n^{-1}fg_nh_{n+1} , ... , h_{2n-2}^{-1}g_2h_{2n-1} , h_{2n-1}^{-1}g_1)$\;
	Input $(g_1',...,g_{2n}')$ to $\mathcal{R}$ in first round\;
	\For{Pauli line $(P_1,P_2,P_3,P_4)$ in $I_2$}{\label{forins}
		Input $(P_1,P_2,P_3,P_4)$ to $\mathcal{R}$ in the second round\;
		Record measurement outcome of $P_i$ for $i \in [4]$ from $\mathcal{R}$\;
		Rewind $\mathcal{R}$ to beginning of second round\;
	} 
	\tcc{Each 3-qubit Pauli $P_i$ is measured exactly twice in the loop}
	$P \leftarrow$ any 3-qubit Pauli for which $\mathcal{R}$ returns inconsistent results\;
	\KwRet{$f'^{-1} \bullet P$}
 \caption{Randomized algorithm, $R_f$, carried out by $(\BPAC^0)^{\mathcal{R}}$ circuit} 
 \label{alg:algo1}
\end{algorithm} 

From \cite{gs}, we know that the first round input\footnote{Although each $g_i'$ appearing in the algorithm is an element of the form $H_m \CNOT_m H_m$, we can w.l.o.g. consider it in the form $\CNOT_m H_m$ due to the fact that for any $g \in \CNOT_m$, $H_m g H_m = g H_m$, and this transformation can be computed in $\NC^0$. See \cite{gs} for details.} $(g_1', ..., g_{2n}')$ in \Cref{alg:algo1} is uniformly random with the constraints: (1) $g_i'H_m = g_iH_m$ and $g_{n+i}'H_m = g_{n-i+1}H_m$ for $i \in [n]$ and (2) $g_1'...g_{2n}' \in H_3^{\oplus}$. Let us denote by $I_1(g_1,...,g_n)$ the support of this distribution. 

If the oracle makes no errors, then the output of $R_f$ will be a sample from the distribution of 3-qubit Paulis $(\pi f \pi^{-1}) \bullet \mathcal{D_R}$ where $\mathcal{D_R}$ is a distribution over the 20 nonstabilizers of $\ket{+^3}$ in $S := H_3^{\oplus} \bullet \medstar$ . Let us call this errorless distribution $R_{f,0}$. Otherwise, suppose that the oracle fails on $(g_1', ..., g_{2n}')$ with probability $\epsilon$ for $(g_1', ..., g_{2n}')$ sampled from $I_1(g_1,...,g_n)$, where we say the oracle fails for $(g_1',...,g_{2n}') \in I_1(g_1,...,g_n)$ if it fails for any input in $(g_1',...,g_{2n}') \times I_2$. Then with probability $\epsilon$, $R_f$ samples from any fixed distribution over $S$, and with probability $1-\epsilon$, $R_f$ samples from $R_{f,0}$. Let us call this faulty distribution $R_{f,\epsilon}$. Notice that in both cases, $\mathcal{D_R}$ is fixed regardless of the value of $f$. This leads us to choose $f$ in a useful way for distinguishing the 3-cycle or identity.

\begin{lemma}[Theorem 33, \cite{gs}]
\label{x-type}
For every Pauli $P \in \mathcal{D_R}$, an $\NC^0$ circuit can determine an $f \in H_3^{\oplus}$ such that $f \bullet P$ has no weight in $R_{f,0}$ if $\pi = C_3$.
\end{lemma}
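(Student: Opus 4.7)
The goal is, for each non-stabilizer $P \in \operatorname{supp}(\mathcal{D_R})$, to produce an $f \in H_3^{\oplus}$ (via an $\NC^0$ circuit) for which $f \bullet P$ has no weight in $R_{f,0}$. When $\pi = C_3$, the support of $R_{f,0}$ is $(C_3 f C_3^{-1}) \bullet \operatorname{supp}(\mathcal{D_R})$, and a short algebraic rewrite (conjugating the hypothetical equality $f \bullet P = (C_3 f C_3^{-1}) \bullet Q$ by $(C_3 f C_3^{-1})^{-1}$ on both sides) shows that $f \bullet P$ avoiding this support is equivalent to
\[
g_f \bullet P \notin \operatorname{supp}(\mathcal{D_R}), \qquad \text{where } g_f := (C_3 f C_3^{-1})^{-1} f = C_3 f^{-1} C_3^{-1} f.
\]
Since $C_3$ is a qubit permutation and $H_3^{\oplus}$ is defined symmetrically in the three qubits, conjugation by $C_3$ is an automorphism $\sigma$ of $H_3^{\oplus}$, so $g_f = \sigma(f)^{-1} f$ is itself in $H_3^{\oplus}$.

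Next, I would exploit the fact that $H_3^{\oplus} \subseteq H_m = \langle \CZ, S \rangle$ is abelian and consists of circuits diagonal in the $Z$-basis: conjugation by $g_f$ fixes every $Z$-type Pauli, and more generally preserves the ``$X/Y$-support'' (the set of positions at which a Pauli is $X$ or $Y$). Consequently $g_f \bullet P$ always lies in $S := H_3^{\oplus} \bullet \medstar$, a set of size $24$. The four $X$-type Paulis $XII$, $IXI$, $IIX$, $XXX$ already belong to $\medstar$ and each stabilizes $\ket{+^3}$, so they are exactly the four elements of $S$ outside $\operatorname{supp}(\mathcal{D_R})$. The required condition thus reduces to the clean algebraic statement: $g_f \bullet P$ is one of these four $X$-type stabilizers.

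The set of attainable $g_f$ is the image of the map $f \mapsto \sigma(f)^{-1} f$, which in additive notation on the finite abelian group $H_3^{\oplus}$ is the image of $(\mathrm{id} - \sigma)$. I would verify by finite case analysis over the $20$ non-stabilizers that for each $P$ some $g_f$ in this image sends $P$ into $\{XII, IXI, IIX, XXX\}$. As a concrete illustration, for $P = YII$ the choice $f = S_2 S_3$ gives $g_f = S_1^{-1} S_2 \in H_3^{\oplus}$ and $g_f \bullet (YII) = XII$; analogous choices can be made using the order-$3$ cyclic action of $\sigma$ on the generators $S_1, S_2, S_3$ and $\CZ_{12}, \CZ_{23}, \CZ_{31}$. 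The principal obstacle is precisely this existence claim: ensuring the image of $\mathrm{id} - \sigma$ is rich enough to cover all 20 non-stabilizers simultaneously. A conceptual shortcut would leverage the $C_3$-symmetry of $\medstar$ and of the Pauli lines in $I_2$; failing a clean structural proof, brute-force enumeration over the small finite group $H_3^{\oplus}$ suffices.

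Finally, the $\NC^0$ requirement is trivial: $P$ ranges over a fixed set of $20$ Paulis described by $O(1)$ bits, so any map $P \mapsto f$ constructed above is a function on a constant-size domain, implementable as a constant-size lookup table in constant depth and bounded fan-in.
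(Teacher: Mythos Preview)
The paper does not actually prove this lemma: it is stated as Theorem~33 of \cite{gs} and invoked as a black box in \Cref{avgcaseanalysis}. So there is no ``paper's own proof'' to compare your attempt against. That said, your argument is a correct and clean reconstruction of why the statement holds.

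Your reduction is right: $f \bullet P \notin (C_3 f C_3^{-1}) \bullet \operatorname{supp}(\mathcal{D_R})$ is equivalent to $g_f \bullet P \notin \operatorname{supp}(\mathcal{D_R})$ with $g_f = \sigma(f)^{-1} f \in H_3^{\oplus}$. The observation that diagonal Cliffords preserve the $X/Y$-support (because a Pauli has $X$ or $Y$ at position $i$ iff it anticommutes with $Z_i$, and conjugation preserves commutation relations) is exactly what pins $g_f \bullet P$ inside $S$ and identifies $S \setminus \operatorname{supp}(\mathcal{D_R}) = \{XII, IXI, IIX, XXX\}$ as the target set. Your $\NC^0$ justification via a constant-size lookup table is also correct.

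The one place where you defer to brute force---that the image of $\mathrm{id} - \sigma$ on $H_3^{\oplus}$ is rich enough to send every one of the $20$ non-stabilizers to an $X$-type stabilizer---is genuinely the content of the lemma, and your concrete example for $P = YII$ shows you have the right mechanism. Since everything lives in a fixed finite group, the enumeration is a valid (if inelegant) way to close the argument; the original proof in \cite{gs} likewise handles this by direct inspection of the finite structure rather than by a slick uniform argument.
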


A nice observation is that if we set $f$ to be the identity element, then \Cref{alg:algo1} \emph{is} sampling from $\mathcal{D_R}$.  After obtaining an element from this distribution, say $P$, we can use \Cref{x-type} to identify an $f$ such that the new output distribution will never output $f \bullet P$ if the unknown permutation $\pi$ is the 3-cycle. Therefore, by repeatedly sampling from $R_{f,0}$, we can learn $\pi$.  The following lemma shows that this protocol still works in the presence of error.

\begin{lemma}
	\label{avgcaseanalysis}
	Suppose that a $\BPAC^0$ circuit can sample from a distribution $\gamma_f(g_1,...,g_n)$ over first round inputs to $\mathcal{R}$ of the form in the first three lines of Algorithm \ref{alg:algo1} and
	\begin{equation*}
		\delta := \Pr[\exists \lambda \in I_2 \text{ s.t. } \mathcal{R} \text{ fails on input } (x, \lambda) \; | \; x \leftarrow \gamma_f(g_1,...,g_n)] < \frac{1}{21}
	\end{equation*}
	Then a $(\BPAC^0)^{\mathcal{R}}$ circuit can determine whether $\pi=g_1 \cdots g_n$ is the 3-cycle or identity.
\end{lemma}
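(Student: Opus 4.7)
The plan is to use the subroutine $R_f$ from \Cref{alg:algo1} together with \Cref{x-type} to distinguish the two cases $\pi = g_1 \cdots g_n = I$ and $\pi = C_3$. The key idea: first sample some $P \in \mathcal{D_R}$, then use the $\NC^0$ circuit of \Cref{x-type} to compute an $f \in H_3^{\oplus}$ for which $f \bullet P$ has zero weight in the errorless distribution $R_{f,0}$ whenever $\pi = C_3$. On the other hand, when $\pi = I$, $R_{f,0}$ has support $f \bullet \mathcal{D_R}$, so $R_{f,0}$ outputs $f \bullet P$ with probability $\mathcal{D_R}(P) \geq 1/20$ (pigeonhole on the $20$-element support of $\mathcal{D_R}$). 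Incorporating the oracle failure rate $\delta$ and noting that adversarial errors can contribute at most $\delta$ to the probability of any single output, $R_{f,\epsilon}$ outputs $f \bullet P$ with probability at most $\delta$ if $\pi = C_3$, and at least $(1-\delta)/20$ if $\pi = I$. The hypothesis $\delta < 1/21$ provides a constant gap $(1 - 21\delta)/20 > 0$ between the two cases.

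To realize this distinguisher in $(\BPAC^0)^{\mathcal{R}}$, first obtain a valid $P \in \mathcal{D_R}$ by running $R_I$ (i.e., $R_f$ with $f$ set to the identity) polynomially many times in parallel. Since $R_{I,0} = \mathcal{D_R}$, any $P' \in \mathcal{D_R}$ with $\mathcal{D_R}(P') \geq 1/20$ appears in $R_{I,\epsilon}$ with probability at least $(1-\delta)/20 > \delta$, while any Pauli outside $\mathcal{D_R}$ appears with probability at most $\delta$. Approximate counting (which lies in $\BPAC^0$) can therefore select, with overwhelming probability, a Pauli $P$ whose empirical frequency clearly exceeds $\delta$, and such a $P$ must lie in $\mathcal{D_R}$ with $\mathcal{D_R}(P) \geq 1/20$. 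Next, compute $f$ from $P$ via \Cref{x-type}, run $R_f$ in parallel polynomially many times, and approximate-count the fraction $\hat p$ of outputs equal to $f \bullet P$. Output $\pi = I$ if $\hat p$ exceeds the midpoint threshold $\tfrac12\bigl(\delta + (1-\delta)/20\bigr)$, and $\pi = C_3$ otherwise.

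The main subtlety is the adversarial nature of the oracle's errors. When $\mathcal{R}$ fails on some $\lambda \in I_2$, the output of $R_f$ may be any Pauli in $S$ of the adversary's choosing, so errors need not cancel across samples. The saving grace is that the total error weight is at most $\delta$, bounding the adversary's contribution to any \emph{single} Pauli output by $\delta$. The threshold $\delta < 1/21$ is precisely calibrated to ensure this worst-case error mass remains strictly below the $(1-\delta)/20$ signal from $\mathcal{D_R}$. Correctness of each step then follows from standard Chernoff concentration applied to polynomially many independent parallel samples, combined with the fact that approximate counting with constant-gap thresholds lies in $\BPAC^0$.
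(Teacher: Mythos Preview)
Your proposal is correct and follows essentially the same two-phase strategy as the paper: first run $R_I$ to identify a heavy Pauli $P$ in the support of $\mathcal{D_R}$, then invoke \Cref{x-type} to obtain $f$ and distinguish $\pi \in \{I, C_3\}$ by estimating the frequency of $f \bullet P$ among samples from $R_{f,\delta}$, using the gap $(1-\delta)/20 > \delta$ guaranteed by $\delta < 1/21$. The only differences are cosmetic: you use polynomially many samples with an explicit appeal to $\BPAC^0$ approximate counting and a midpoint threshold, whereas the paper takes $O(\log n)$ samples and selects the most frequent Pauli directly.
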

\begin{proof}
	We divide the $(\BPAC^0)^{\mathcal{R}}$ circuit into two distinct components: first, the circuit learns information about the distribution $\mathcal{D_R}$; second, the circuit uses this information to deduce the value of $\pi$.
	
	In the first phase, the circuit samples $O(\log n)$ 3-qubit Paulis from $R_{I, \delta}$, where the oracle $\mathcal{R}$ fails with probability $\delta < \frac{1}{21}$ for first round input sampled from $\gamma_f(g_1,...,g_n)$ (by assumption). That is, the circuit samples from $R_{f,0} = \mathcal{D_R}$ with probability $1-\delta$. Since there are only 20 Paulis in the support of $\mathcal{D_R}$, there exists a 3-qubit Pauli $P \in S$ with maximal weight $\mathcal{D_R}(P) \geq \frac{1}{20}$. The Pauli $P$ also has the maximal weight in $R_{I, \delta}$ because $\frac{1}{20}(1 - \delta) \geq \delta$. By sampling $O(\log n)$ times from $R_{I, \delta}$ and choosing the most frequent Pauli, the circuit recovers the Pauli $P$ (or another element in the support of $\mathcal{D_R}$ with equal probability) with high probability. 
	
	In the second phase of the circuit, the circuit uses \Cref{x-type} to determine an $f \in H_3^{\oplus}$ such that $f \bullet P$ has no weight in $R_{f,0}$. Then, the circuit samples $O(\log n)$ Paulis from $R_{f, \delta}$. With probability $1-\delta$, a sample will be from the distribution $R_{f,0} = (\pi f \pi^{-1}) \bullet \mathcal{D_R}$. If $\pi = I$, then $R_{f,0} = f \bullet \mathcal{D_R}$, so $f \bullet P$ has equal weight in $R_{f,0}$ as $P$ has weight in $\mathcal{D_R}$. If $\pi = C_3$, then $f \bullet P$ has no weight in $R_{f,0}$ by \Cref{x-type}. Combining these facts, we get 
	\begin{equation*}
		R_{f, \delta \, : \, \pi=I}(f \bullet P) \geq (1-\delta) \mathcal{D_R}(P) \geq \frac{1}{21} > \delta \geq R_{f, \delta \, : \, \pi = C_3}(f \bullet P)
	\end{equation*}
	so $R_{f, \delta \, : \, \pi=I}(f \bullet P)$ and $R_{f, \delta \, : \, \pi = C_3}(f \bullet P)$ are at least a constant difference apart. Thus, the circuit can distinguish $\pi \in \{C_3, I\}$ using the $O(\log n)$ Pauli samples from $R_{f, \delta}$ with high probability by checking whether $f \bullet P$ appears in significantly more than a $1/21$ ratio of all samples. 
\end{proof}

\subsection{Properties of \texorpdfstring{$\gamma$}{gamma}}
\label{appendixparitylapplications}

Let us now connect back to the results of \Cref{sec:parityL_hardness}.  Let us define randomized algorithm $\gamma_f$ to be the first three lines of \Cref{alg:algo1}. Because the two constraints on the output distribution of $\gamma_f$ are actually independent of $f$, we will instead call this randomized algorithm $\gamma$ for convenience. In this section, we prove \Cref{fproperties} which concerns the output distribution of $\gamma$, and apply \Cref{avgcaseanalysis} to $\gamma$ to prove the error-tolerance described in \Cref{restateerror}. We begin with the proof of \Cref{fproperties}.

\begin{replemma}{fproperties}
	For any distinct sequences of CNOT gates $(a_1,...,a_n) \neq (b_1,...,b_n)$, $\gamma(a_1,...,a_n)$ and $\gamma(b_1,...,b_n)$ are uniform distributions over disjoint sets of the same cardinality.
\end{replemma}
\begin{proof}
Recall that for any $g_1, \ldots, g_n \in \CNOT_m$, the output $(g_1', \ldots, g_{2n}') \leftarrow \gamma(g_1,...,g_n)$ is uniformly random with the constraints: (1) $g_i'H_m = g_iH_m$ and $g_{n+i}'H_m = g_{n-i+1}H_m$ for $i \in [n]$ and (2) $g_1' \cdots g_{2n}' \in H_3^{\oplus}$. Constraint (1) above implies that the distributions $\gamma(a_1,...,a_n)$ and $\gamma(b_1,...,b_n)$ are disjoint. This is because for any distinct $a_i, b_i \in \CNOT_m$, we have that $a_iH_m \cap b_iH_m = \varnothing$. 
	
Furthermore, let's show that the supports of $\gamma(a_1,...,a_n)$ and $\gamma(b_1,...,b_n)$ have equal size. To show that $|\gamma(a_1,...,a_n)| = |\gamma(b_1,...,b_n)|$, we now show a bijection between the $a_1', \ldots, a_{2n}'$ and $b_1', \ldots, b_{2n}'$ output by $\gamma(a_1,...,a_n)$ and $\gamma(b_1,...,b_n)$.  By condition (1), we can write the outputs $a_1', \ldots, a_{2n}'$ such that $a_i' = a_i h_i$ and such that $a_1' \cdots a_{2n}' \in H_3^{\oplus}$.  By pushing all the elements of $H_m$ to the right, we get that
\begin{align*}
    a_1' \cdots a_{2n}' &= a_1 h_1 a_2 h_2 \cdots a_n h_n a_n h_{n+1} \cdots a_1 h_{2n} = h_1' \cdots h_{2n}',
\end{align*}
where say pushing $h \in H_m$ past elements $a_i \ldots a_1$ yields element $a_1 \ldots a_i h a_i \ldots a_1 \in H_m$.  Indeed, because $H_m$ is normal in $G_m$, we have that $h_i' \in H_m$.  Furthermore, by condition (2) we have $h_1' \cdots h_{2n}' \in H_3^{\oplus}$.  To obtain the image of the bijection we simply push back each $h_i'$ to the left in the sequence of CNOT gates $b_1, \ldots, b_n$:
$$
h_1' \cdots h_{2n}' = b_1 h_1'' b_2 h_2'' \cdots b_n h_n'' b_n h_{n+1}'' \cdots b_1 h_{2n}'' = b_1' \cdots b_{2n}'.
$$
Since all operations are invertible, this completes the bijection. 
\end{proof}

Finally, let us recall \Cref{restateerror}.

\begin{repproposition}{restateerror}
	Let $C \in \{0,1\}^{n \times n}$ be a layered DAG adjacency matrix and $\mathcal{R}$ be the rewind oracle for 2-Round Graph State Measurement on a $\poly(n) \times \poly(n)$ grid. If 
	\begin{equation*}
		\Pr[\exists \lambda \in I_2 \text{ s.t. } \mathcal{R} \text{ fails on input } (x, \lambda) \,\,|\,\, x \gets \gamma \circ \EAlg(C)] < \frac{1}{21}
	\end{equation*}
	then a $(\BPAC^0)^{\mathcal{R}}$ circuit can compute the parity of the number of paths from $1$ to $n$ in $C$.
\end{repproposition}
\begin{proof}
    The main difference between the statement above and \Cref{avgcaseanalysis} is the composition of $\gamma$ with $\EAlg$ (recall that we are referring to $\gamma_f$ as $\gamma$). We address this difference, which directly leads to the proof of the statement. 
    
    The algorithm $\EAlg$ is a valid and deterministic reduction from $\LDAGParity$ to $\CNOTMult$ (\Cref{Fselfreduction}). Thus determining whether the instance of $\CNOTMult$ output by $\EAlg(C)$ multiplies to the 3-cycle or identity also determines the parity of the number of paths from $1$ to $n$ in $C$. 
    
    Combining this fact with \Cref{avgcaseanalysis} directly implies the claim.
\end{proof}

\addtocontents{toc}{\endgroup}
\end{appendices}

\end{document}